\newcommand{\RNum}[1]{\lowercase\expandafter{\romannumeral #1\relax}}
\newtheorem{thm}{Theorem}[section]
\newtheorem{lem}[thm]{Lemma}
\newtheorem{rem}[thm]{Remark}
\newtheorem{thm-con}[thm]{Theorem-Conjecture}
\numberwithin{equation}{section}
\theoremstyle{definition}
\newtheorem{defn}[thm]{Definition}
\def\cG{{\mathcal G}}
\newcommand{\F}{\mathbb F}
\def\Tr{{\rm Tr}}
\def\Trmn{{\rm Tr}_m^{3m}}
\def\Trnn{{\rm Tr}_m^{2m}}
\def\Trmm{{\rm Tr}_m^{n}}
\def\Trnl{{\rm Tr}_{\ell}^{n}}
\begin{document}
\title[Differential uniformity of some permutation polynomials]{Differential uniformity properties of some classes of permutation polynomials}

 \author[K. Garg]{Kirpa Garg}
 \address{Department of Mathematics, Indian Institute of Technology Jammu, Jammu 181221, India}
 \email{kirpa.garg@gmail.com}
 
 \author[S. U. Hasan]{Sartaj Ul Hasan}
 \address{Department of Mathematics, Indian Institute of Technology Jammu, Jammu 181221, India}
 \email{sartaj.hasan@iitjammu.ac.in}
 
  \author[P.~St\u anic\u a]{Pantelimon~St\u anic\u a}
  \address{Applied Mathematics Department, Naval Postgraduate School, Monterey, CA 93943, USA}
 \email{pstanica@nps.edu}

%
\begin{abstract}
The notion of $c$-differential uniformity has recently received a lot of attention since its proposal~\cite{Ellingsen}, and recently a characterization of perfect $c$-nonlinear functions in terms of difference sets in some quasigroups was obtained in~\cite{AMS22}. Independent of their applications as a measure for certain statistical biases, the construction of functions, especially permutations, with low $c$-differential uniformity is an interesting mathematical problem in this area, and recent work has focused heavily in this direction. We provide a few classes of permutation polynomials with low $c$-differential uniformity. The used technique  involves handling various Weil sums, as well as analyzing some equations in finite fields, and we believe these can be of independent interest.
\end{abstract}

\keywords{Finite fields, permutation polynomials, $c$-differential uniformity}

\subjclass[2020]{12E20, 11T06, 94A60}

\maketitle
\section{Introduction} Let $p$ be a prime number and $n$ be a positive integer. We denote  by $\F_{q}$ the finite field with $q$ elements, by $\F_{q}^*$ the multiplicative group of non-zero elements of $\F_{q}$ and by $\F_{q}[X]$ the ring of polynomials in one variable $X$ with coefficients in $\F_{q}$, where $q=p^n$.  Let $F$ be a function from $\F_{q}$ to itself. 
Lagrange's interpolation formula allows us to uniquely represent $F$ as a polynomial in $\F_q[X]$ of degree at most $q-1$. A polynomial $F \in \F_{q}[X]$ is a permutation polynomial of $\F_{q}$ if the mapping $X \mapsto F(X)$ is a permutation of $\F_{q}$.  It is worth emphasising that due to their numerous applications in coding theory~\cite{Chapuy_C_07, Ding_C_13}, combinatorial design theory~\cite{Ding_Co_06}, cryptography~\cite{Lidl_Cr_84, Schwenk_Cr_98}, and other branches of mathematics and engineering, permutation polynomials over finite fields are highly significant objects. These functions, for instance, are frequently used in cryptography to construct substitution boxes (S-boxes), which are a key component of contemporary block ciphers.

There are many known attacks on block ciphers. One of the most powerful attacks on block ciphers is differential cryptanalysis, which was first developed by Biham and Shamir~\cite{{Biham91}}. The concept of differential uniformity was first introduced by Nyberg~\cite{Nyberg} to measure a function's resistance to the differential attack, and it is defined as follows. For any function $F : \F_{q}\to \F_{q}$ and for any $a \in \F_{q}$, the derivative of $F$ in the direction $a$ is defined as $D_F(X,a) := F(X+a)-F(X)$ for all $X\in \F_{q}.$ The Difference Distribution Table (DDT) entry of $F$ at a point $(a,b) \in \F_{q} \times \F_{q}$, denoted by $\Delta_F(a,b)$, is the number of solutions $X\in \F_{q}$ of the equation $D_F(X,a) = b$. The differential uniformity of $F$, denoted by $\Delta_F$, is given by $\Delta_{F} := \max\{\Delta_{F}(a, b) : a\in \F_{q}^*, b \in \F_{q} \}.$ When $\Delta_{F} = 1$, $F$ is called perfect nonlinear (PN) function. When $\Delta_{F} = 2$, $F$ is called almost perfect nonlinear (APN) function. It should be noted that there are no PN functions over finite fields with even characteristic.

The multiplicative differentials of the form $(F(cX), F(X))$ were introduced by Borisov et al.~\cite{Borisov} who exploited this new class of differential to attack certain existing ciphers. Ellingsen et al.~\cite{Ellingsen} extended on the idea of differential uniformity and developed a new (output) multiplicative differential as a result of the multiplicative differential. For any function $F : \F_{q}\to \F_{q}$ and for any $a , c \in \F_{q}$, the (multiplicative) $c$-derivative of $F$ with respect to $a$ is defined as $_c\Delta_F (X, a) := F(X + a)- cF(X)$ for all $X \in \F_{q}$. For any $a,b \in \F_q$, the $c$-Difference Distribution Table ($c$-DDT) entry $_c\Delta_F(a,b)$ at point $(a,b)$ is the number of solutions { $X \in \F_q$} of the equation $_cD_F(X,a) = b$. The $c$-differential uniformity of $F$, denoted by $_c\Delta_F$, is given by $_c\Delta_{F} := \max\{_c\Delta_{F}(a, b) : a, b \in \F_{q} \hspace{0.2cm}\text{and}\hspace{0.2cm} a \neq 0 \hspace{0.2cm}\text{if} \hspace{0.2cm}c=1\}.$ It is clear that when $c = 1$, differential uniformity and $c$-differential uniformity are same. We refer to $F$ as a perfect $c$-nonlinear (P$c$N) function and an almost perfect $c$-nonlinear (AP$c$N) function, respectively, for $_c\Delta_F= 1$ and $_c\Delta_F=2$. Note that for monomial functions, $X \mapsto X^d$, the output differential  $(c_1F(X), F(X))$ is the same as the input differential  $(F(c_2X), F(X))$, where $c_1=c_2^d$. 

The authors of~\cite{BKM22} discuss the potential of an extension of the differential attack based upon $c$-differentials, and show that a large class of potential $S$-boxes have large $c$-differential uniformity for all but few choices of $c$. In view of this, these statistical biases are somewhat inevitable. However, as they point out  ``{\em the $c$-differential uniformities still measure biases in the distribution of differences and it might still theoretically be possible to construct an attack different than the one considered (t)here to abuse this bias}''~\cite{BKM22}.

As a first application of the concept of $c$-differential uniformity, we  point out that in a recent manuscript~\cite{AMS22}, the graph of a P$c$N function was shown to correspond to a difference set in a quasigroup. 
Difference sets give rise to symmetric designs, which are known to construct optimal self complementary codes. Various types of designs can be also used in secret sharing and visual cryptography (see also~\cite{XZG1,XZG2}, where it is shown that difference sets can be used to construct a complex vector codebook that achieves the Welch bound on maximum crosscorrelation amplitude).

Finding functions, particularly permutations, with low $c$-differential uniformity has received a lot of interest since the concept of $c$-differential uniformity was established. In~\cite{HPRS20, JKK2, Liu, MRS, wang, lwz, Hu}, numerous functions with low $c$-differential uniformity were investigated. Only a few P$c$N and AP$c$N functions are known over a finite field with even characteristic; see, for example, \cite{HPS1, JKK,  ZJT}. Recently, Li et al.~\cite{CPC} extended Dillon’s switching method to $c$-differentials and applied it to find necessary and sufficient conditions for such a constructed function to be P$c$N or AP$c$N, as well as to generalize it to any $c$-differential uniformity. Further, using this technique, the authors give some classes of P$c$N and AP$c$N functions as well. In this paper, we study the $c$-differential uniformity of some classes of permutation polynomials introduced in~\cite{LWC}. The paper is organised as follows. In Section~\ref{S2}, we recall some relevant results that are required in the subsequent sections.  The $c$-differential uniformity of two classes of  permutation polynomials over finite fields of even characteristic has been considered in Section~\ref{S3}. Further, Section~\ref{S4} deals with the $c$-differential uniformity of two classes of  permutation polynomials over finite fields of odd characteristic. Finally, we conclude the paper in Section~\ref{S5}.

\section{Preliminaries} \label{S2}
In this section, we first review a definition and provide some lemmas to be used later. In what follows, we shall use $\Trmm$ to denote the (relative) trace function from $\F_{p^n} \rightarrow \F_{p^m}$, i.e., $\Trmm(X) = \sum_{i=0}^{\frac{n-m}{m}} X^{p^{mi}}$, where $m$ and $n$ are positive integers and $m|n$.  When $m=1$, we use $\Tr$ to denote the absolute trace.

\begin{defn}\cite{TH}
 For a function $F : \F_{p^n} \rightarrow \F_{p}$, the Walsh transform of $F$ at $v \in  \F_{p^n}$, is defined as
$$ \mathcal{W}_F(v) = \sum_{X \in \F_{p^n}} \omega^{F(X) - \Tr(v X)},$$
where $\omega = e^{\frac{2 \pi i}{p}}$ is the complex primitive $p$th root of unity. 
\end{defn}

\begin{lem} \label{L00}\cite[Proposition 4]{LWC}
 For a positive integer $m$ and a fixed $\delta$ in $\F_{2^{3m}}$, the polynomial $$ F(X) = (X^{2^m}+X+\delta)^s+X $$ is a permutation of $\F_{2^{3m}}$ if one of the following conditions holds:
 \begin{enumerate}
  \item[$(1)$] $s= 2^{2m}+1 ;$
  \item[$(2)$] $s= 2^{im-1}+ 2^{m-1}, \gcd((i-1)m-1,3m)=1, i\in \{2, 3\}.$
 \end{enumerate}
\end{lem}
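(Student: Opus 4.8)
The plan is to prove both parts uniformly via the Akbary--Ghioca--Wang (AGW) criterion, reducing the permutation property of $F$ to the bijectivity of an associated map on a smaller set, and then to treat the two exponent families separately. First I would set $g(X)=X^{2^m}+X+\delta$ and use $\psi:=g$ as the projection in the AGW diagram. Writing $L(w)=w^{2^m}+w$ and using additivity of $X\mapsto X^{2^m}$, a direct computation gives
\[
g\big(F(X)\big)=g(X)^{s2^m}+g(X)^{s}+g(X)=\bar f\big(g(X)\big),\qquad \bar f(z):=L(z^{s})+z,
\]
so $g\circ F=\bar f\circ g$. The fibres of $g$ are the cosets of $\ker(X\mapsto X^{2^m}+X)=\F_{2^m}$, and on each such fibre $g$ is constant, so there $F(X)=z^{s}+X$ is a translation and hence injective. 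By AGW, $F$ permutes $\F_{2^{3m}}$ if and only if $\bar f$ is a bijection of $S:=g(\F_{2^{3m}})=\delta+\ker\Trmn$, using that $\mathrm{Im}(X\mapsto X^{2^m}+X)=\ker\Trmn$. As $S$ is finite it suffices to show $\bar f$ is injective on $S$: if $\bar f(z_1)=\bar f(z_2)$ then, with $t:=z_1+z_2\in\ker\Trmn$, additivity of $L$ yields the key identity $t=L(z_1^{s}+z_2^{s})$, and the goal in every case is to deduce $t=0$.

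For part $(1)$, where $s=2^{2m}+1$, I would write $\sigma$ for the Frobenius $x\mapsto x^{2^m}$ (so $\sigma^3=\mathrm{id}$), expand $z^{s}=\sigma^2(z)\,z$, substitute $z_1=z_2+t$, and apply $L$. Using $\sigma^3=\mathrm{id}$ together with the identity $\sigma(w)+\sigma^2(w)=\Trmn(w)+w$ (and $\Trmn(t)=0$), the key identity collapses to $t\big(t+\Trmn(z_2)+1\big)=0$. The second factor lies in $\F_{2^m}$, so either $t=0$ or $t\in\F_{2^m}$; but $t\in\F_{2^m}\cap\ker\Trmn$ forces $t=0$ as well, since $\Trmn(t)=t$ for $t\in\F_{2^m}$. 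Hence $\bar f$ is injective and $F$ permutes.

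For part $(2)$, where $s=2^{im-1}+2^{m-1}=2^{m-1}\big(2^{(i-1)m}+1\big)$, the obstacle is the half-power $2^{m-1}$. I would absorb it by raising the key identity to a suitable power of two: since $x\mapsto x^{2^{m-1}}$ is additive, $z_1^{s}+z_2^{s}=H^{2^{m-1}}$ with $H:=z_1^{2^{(i-1)m}+1}+z_2^{2^{(i-1)m}+1}$, so $t=L(H^{2^{m-1}})$. Raising this to the power $2^{2m+1}$ and using $H^{2^{3m}}=H$ gives $L(H)=t^{2^{2m+1}}$; since $\Trmn(t)=0$ one checks $t^2$ is a particular solution, whence $H=t^{2}+c$ for some $c\in\F_{2^m}$. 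Substituting $z_1=z_2+t$ then turns $H=t^2+c$ into a linearized (additive) equation in $z_2$, whose solvability I would analyse through its kernel and image. The role of the hypothesis $\gcd((i-1)m-1,3m)=1$ enters precisely here: it is equivalent to $\gcd\!\big(2^{(i-1)m-1}-1,\,2^{3m}-1\big)=1$, so the auxiliary power map $x\mapsto x^{2^{(i-1)m-1}-1}$ is a bijection of $\F_{2^{3m}}^{*}$, and this non-degeneracy is what rules out a nonzero $t$ and yields $t=0$.

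The main difficulty I anticipate is this last step for part $(2)$: carrying out the linear-algebra bookkeeping in the two subcases $i=2$ and $i=3$ (where $\sigma^{i-1}$ equals $\sigma$ and $\sigma^{2}$, respectively), and pinning down exactly where the coprimality $\gcd((i-1)m-1,3m)=1$ is needed to force the relevant kernel to be trivial. By contrast, part $(1)$ is an essentially mechanical computation once the AGW reduction is in place, and the reduction itself is routine; so I expect the entire novelty and effort to concentrate in the linearized-equation analysis of the second family.
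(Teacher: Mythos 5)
First, note that the paper itself offers no proof of this lemma: it is imported verbatim from \cite[Proposition 4]{LWC}, so there is no in-paper argument to compare against; I can only assess your proposal on its own terms.

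Your reduction is the standard one for trinomials of this shape and is set up correctly: $g\circ F=\bar f\circ g$ with $\bar f(z)=z^{s2^m}+z^s+z$, the fibres of $g$ are cosets of $\F_{2^m}$ on which $F$ is a translation, $S=\delta+\ker\Trmn$, and the problem becomes injectivity of $\bar f$ on $S$. Part $(1)$ is then carried through completely and correctly: the identity $\sigma(w)+\sigma^2(w)=\Trmn(w)+w$ together with $\Trmn(t)=0$ collapses the key identity to $t\bigl(t+\Trmn(z_2)+1\bigr)=0$, and the case $t\in\F_{2^m}\cap\ker\Trmn$ forces $t=\Trmn(t)=0$. (Your justification ``the second factor lies in $\F_{2^m}$'' is phrased backwards --- the factor contains $t$, so it lies in $\F_{2^m}$ only after you assume it vanishes --- but the intended logic is sound.)

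Part $(2)$, however, has a genuine gap: everything after $H=t^2+c$ is a plan, not a proof, and you yourself flag it as the anticipated difficulty. Two things are missing. First, the decisive computation: writing $w=\tau(z_2)t+\tau(t)z_2+\tau(t)t+t^2$ with $\tau(x)=x^{2^{(i-1)m}}$, one must show $w\in\F_{2^m}$ is equivalent to a condition on $t$ alone; a direct evaluation of $w+w^{2^m}$ using $t^{2^{2m}}=t+t^{2^m}$ and $\Trmn(z_2)=\Trmn(\delta)$ makes the $z_2$-dependence cancel and leaves $t^2=\Trmn(\delta)\,t^{2^m}$ for $i=2$ (respectively $t^{2^{m+1}}=\Trmn(\delta)\,t$ for $i=3$). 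This cancellation is the heart of the argument and is nowhere in your write-up. Second, and more importantly, your stated mechanism is not quite right: the gcd hypothesis makes $x\mapsto x^{2^{(i-1)m-1}-1}$ a bijection of $\F_{2^{3m}}^{*}$, but that does \emph{not} by itself ``rule out a nonzero $t$'' --- on the contrary, when $\Trmn(\delta)\neq 0$ the resulting power equation has exactly one nonzero solution $t_0$. What finishes the proof is the additional verification that this unique $t_0$ violates the side constraint $\Trmn(t_0)=0$: one checks (using the bijectivity to force an auxiliary element $u$ with $u^{2^{(i-1)m-1}-1}=1$ to equal $1$) that $t_0$ is a nonzero $\F_{2^m}$-multiple of $\Trmn(\delta)^{-1}$, hence $\Trmn(t_0)\neq 0$. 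So the skeleton is viable and the coprimality enters where you predict, but the proof as written does not establish part $(2)$.
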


\begin{lem} \label{L02}\cite[Proposition 3]{LWC}
For an even positive integer $m$ and a fixed $\delta \in \F_{3^{2m}}$, the polynomial $$F(X) = (X^{3^m}-X+\delta)^{3^{2m-1} + 2 \cdot 3^{m-1}}+X $$
is a permutation of $\F_{3^{2m}}$.
\end{lem}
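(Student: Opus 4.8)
The plan is to apply the Akbary--Ghioca--Wang (AGW) commutative-diagram criterion with the additive map $\psi(X) = X^{3^m} - X$. Writing $q = 3^m$ and working in $\F_{3^{2m}} = \F_{q^2}$, the image of $\psi$ is the $\F_q$-hyperplane $S = \ker \Trnn$, whose fibres are the cosets of $\F_q$. Setting $u = X^q - X + \delta$ and using $X^{q^2} = X$, a direct computation gives $\psi(F(X)) = (u^s)^q - u^s + (u - \delta)$, which depends on $X$ only through $u = \psi(X) + \delta$; hence $\psi \circ F = \bar F \circ \psi$ for the induced map $\bar F(y) = \psi\big((y+\delta)^s\big) + y$ on $S$. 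Since $u$ is constant on each fibre $X_0 + \F_q$ and $F(X) = u^s + X$ merely translates the fibre, $F$ is injective on every fibre, so the AGW criterion reduces the whole statement to showing that $\bar F \colon S \to S$ is a bijection; as $S$ is finite it suffices to prove $\bar F$ is injective.

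Next I would convert injectivity into a single field equation. Assume $\bar F(y_1) = \bar F(y_2)$, put $a = y_1 + \delta$, $b = y_2 + \delta$, $\eta = y_1 - y_2 = a - b \in S$, and $t = \Trnn(\delta) = a + a^q = b + b^q \in \F_q$, so that $a^q = t-a$, $b^q = t-b$ and $\eta^q = -\eta$. The equality becomes $(a^s - b^s)^q - (a^s - b^s) = -\eta$. The crucial simplification is that $3s \equiv 2q + 1 \pmod{q^2 - 1}$, so for $a \neq 0$ one has $(a^s)^3 = a^{2q+1} = a(t-a)^2 = g(a)$ with $g(x) = x^3 + t\,x^2 + t^2 x$. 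Cubing the equation and using the characteristic-$3$ identity $(x-y)^3 = x^3 - y^3$ turns it into $\psi\big(g(a) - g(b)\big) = -\eta^3$. Factoring $g(a) - g(b) = \eta\,(\eta^2 + t(a+b) + t^2)$ and computing $\psi$ of this (using $\eta^2 \in \F_q$ and $(a+b)^q = -t - (a+b)$) collapses everything to $\eta(\eta^2 + t^2) = 0$.

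It then remains to exclude $\eta^2 = -t^2$ for $\eta \in S \setminus \{0\}$, and this is precisely where the hypothesis that $m$ is even is used, so I expect it to be the main obstacle. Writing $S = \alpha\,\F_q$ with $\alpha^{q-1} = -1$, any $\eta = \alpha c$ satisfies $\eta^2 = \alpha^2 c^2$ with $\alpha^2 \in \F_q$, so $\eta^2 = -t^2$ forces $c^2 = -t^2/\alpha^2$. Computing the quadratic character, $(-t^2/\alpha^2)^{(q-1)/2} = (-1)^{(q-1)/2}(\alpha^{q-1})^{-1} = -(-1)^{(q-1)/2}$; since $m$ is even we have $q \equiv 1 \pmod 4$, whence $(-1)^{(q-1)/2} = 1$ and the character equals $-1$, making $-t^2/\alpha^2$ a non-square. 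Thus $\eta^2 = -t^2$ has no solution in $S$ when $t \neq 0$, while if $t = 0$ the relation reduces to $\eta^3 = 0$ and gives $\eta = 0$ at once (the degenerate cases $a = 0$ or $b = 0$ force $t = 0$ and are absorbed here). In every case $\eta = 0$, i.e.\ $y_1 = y_2$, so $\bar F$ is injective and $F$ is a permutation. The delicate step is this final quadratic-residue computation: for odd $m$ the element $-t^2/\alpha^2$ becomes a square and $\bar F$ genuinely fails to be injective, which confirms that the evenness of $m$ is indispensable rather than incidental.
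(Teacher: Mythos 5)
Your argument is correct, and there is nothing in this paper to compare it against: Lemma~\ref{L02} is imported verbatim from \cite[Proposition 3]{LWC} with no proof given here. Checking your steps: the AGW reduction with $\psi(X)=X^{q}-X$ ($q=3^m$) is set up properly ($\operatorname{Im}\psi=\ker\Trnn$ by the kernel count, $F$ translates each fibre $X_0+\F_q$), the exponent identity $3s=q^2+2q$ holds exactly (not merely modulo $q^2-1$), so $(a^s)^3=a^{q^2}(a^q)^2=a(t-a)^2$ is valid for all $a$ with $\Trnn(a)=t$, including $a=0$; the cancellation of the $t(a+b)$ terms in $\psi\bigl(g(a)-g(b)\bigr)=\eta^3-\eta t^2$ is right, yielding $\eta(\eta^2+t^2)=0$; and the quadratic-character computation $\bigl(-t^2/\alpha^2\bigr)^{(q-1)/2}=-(-1)^{(q-1)/2}=-1$ for $m$ even correctly rules out $\eta^2=-t^2$ with $\eta\in\ker\Trnn\setminus\{0\}$. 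Since cubing is a bijection of $\F_{q^2}$ and your identity for $\psi(g(a)-g(b))$ depends on $(y_1,y_2)$ only through $\eta$ and $t$, your closing observation that injectivity genuinely fails for odd $m$ (when $\Trnn(\delta)\neq 0$) is also justified, not just heuristic. This is the standard commutative-diagram approach for trinomial-type permutations of the form $(X^{p^m}-X+\delta)^s+X$ and is in the same spirit as the source \cite{LWC}; as a self-contained replacement for the citation it is complete.
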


\begin{lem} \label{L03}\cite[Proposition 2]{LWC}
 For an odd prime $p$ and a positive integer $m$, let $\delta$ be an element of $\F_{p^{2m}}$ such that $\Trnn(\delta)=0$ or $\dfrac{\Trnn(\delta)+1}{\Trnn(\delta)}$ is a $(p-1)$-th power in $\F_{p^m}$, then the polynomial
$F(X) = (X^{p^m}-X+\delta)^{p^{m+1} + 1}+X$ is a permutation of $\F_{p^{2m}}$.
\end{lem}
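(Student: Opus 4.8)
The plan is to run an Akbary--Ghioca--Wang-type substitution argument. Write $q:=p^m$ and $s:=p^{m+1}+1=pq+1$, set $h(X):=X^{q}-X+\delta$, so that $F(X)=h(X)^{s}+X$, and put $d:=\Trnn(\delta)$. Since $(X^{q}-X)+(X^{q}-X)^{q}=0$ for all $X\in\F_{p^{2m}}$, the map $X\mapsto X^{q}-X$ has image $V:=\{z:z+z^{q}=0\}=\ker\Trnn$, and therefore $h$ maps $\F_{p^{2m}}$ onto the affine line $S:=\{u:u+u^{q}=d\}$. In particular every $u\in S$ satisfies $u^{q}=d-u$, which is the identity that drives the whole computation.

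First I would establish an intertwining relation $h\circ F=g\circ h$ with an explicit $g\colon S\to S$. Computing $h(F(X))=F(X)^{q}-F(X)+\delta=(d-h(X))^{s}-h(X)^{s}+h(X)$ identifies $g(u)=(d-u)^{s}-u^{s}+u$. Using $u^{q}=d-u$ and $s=pq+1$, the two high-degree terms collapse, namely $u^{s}=d^{p}u-u^{p+1}$ and $(d-u)^{s}=du^{p}-u^{p+1}$, so that
\[
g(u)=d\,u^{p}+(1-d^{p})\,u,\qquad u\in S,
\]
an $\F_p$-affine map. By the criterion, $F$ permutes $\F_{p^{2m}}$ iff (i) $g$ is a bijection of $S$ and (ii) $F$ is injective on each fibre $h^{-1}(u)$. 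Condition (ii) is immediate: the fibres are cosets of $\ker(X\mapsto X^{q}-X)=\F_{p^m}$, on which $h$ is constant, so $F$ acts there as the translation $X\mapsto u^{s}+X$. Hence everything reduces to (i).

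Next I would linearise (i). Because $S$ is a coset of the $\F_p$-subspace $V$ and $g$ is $\F_p$-affine, $g$ is a bijection of $S$ exactly when its linear part is injective on $V$, i.e.\ when the only $z\in V$ with $d\,z^{p}+(1-d^{p})z=0$ is $z=0$. If $d=0$ this equation reads $z=0$, settling the first branch of the hypothesis. For $d\neq0$ a nonzero solution would satisfy $z^{p-1}=(d^{p}-1)/d$, whereas $z\in V\setminus\{0\}$ forces $z^{q-1}=-1$. Writing $z=\lambda\alpha$ for a fixed $\alpha\in V\setminus\{0\}$ and $\lambda\in\F_{p^m}^{*}$, and noting $\alpha^{p-1}\in\F_{p^m}^{*}$ is well defined modulo $(p-1)$-th powers, solvability of this equation in $V$ becomes exactly a statement about whether an explicit element built from $d=\Trnn(\delta)$ is a $(p-1)$-th power in $\F_{p^m}$.

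The main obstacle is this last translation. The mechanism is that $\F_{p^m}^{*}$ is cyclic with $(p-1)\mid(p^{m}-1)$, so the $(p-1)$-th powers form the subgroup of index $p-1$, detected by raising to the power $N:=(p^{m}-1)/(p-1)$; the fixed constant satisfies $(\alpha^{p-1})^{N}=\alpha^{q-1}=-1$. Pushing the condition $z^{p-1}=(d^{p}-1)/d$ through the $N$-th power map and using $(d-1)^{p}=d^{p}-1$ shows that the kernel is trivial precisely when the relevant quotient involving $\Trnn(\delta)$ is a $(p-1)$-th power, which is what the hypothesis supplies; the degenerate branch $\Trnn(\delta)=0$ was already handled. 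I expect the genuine care to lie exactly here: keeping track of the shift by $\pm1$ in the numerator and of the status of $-1$ as a $(p-1)$-th power, since these determine the precise $(p-1)$-th-power condition that makes $g$, and hence $F$, bijective.
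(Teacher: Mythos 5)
The paper offers no proof of this lemma --- it is imported verbatim from \cite{LWC} --- so there is nothing internal to compare against; I will judge your argument on its own. Your route (an Akbary--Ghioca--Wang reduction along $h(X)=X^{q}-X+\delta$) is the standard and correct one for this family, and the computations check out: $h$ surjects onto $S=\{u:u+u^{q}=d\}$, the fibres are cosets of $\F_{p^m}$ on which $F$ is a translation, and $u^{q}=d-u$ gives $u^{s}=d^{p}u-u^{p+1}$ and $(d-u)^{s}=du^{p}-u^{p+1}$, hence $g(u)=du^{p}+(1-d^{p})u$, so everything reduces to the $\F_p$-linear map $z\mapsto dz^{p}+(1-d^{p})z$ having trivial kernel on $V=\ker\Trnn$. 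Your final step, though only sketched, does close: for $\alpha\in V\setminus\{0\}$ one has $\alpha^{p-1}\in\F_{p^m}^{*}$ and $(\alpha^{p-1})^{N}=\alpha^{p^m-1}=-1$ with $N=(p^m-1)/(p-1)$, so a nonzero $z=\lambda\alpha\in V$ with $z^{p-1}=(d^{p}-1)/d=(d-1)^{p}/d$ exists if and only if $\bigl((d-1)/d\bigr)^{N}=-1$ (using that $(d-1)^{N}\in\F_p^{*}$ is fixed by the Frobenius); when $(d-1)/d$ is a $(p-1)$-th power this $N$-th power equals $1$, and when $d=1$ the equation forces $z=0$, so the kernel is trivial in either case.

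The one genuine issue is the sign you flagged but did not resolve. Your derivation produces the hypothesis that $\dfrac{\Trnn(\delta)-1}{\Trnn(\delta)}$ be a $(p-1)$-th power --- which is exactly what Theorem~\ref{T12} of this paper and \cite{LWC} use --- whereas the lemma as printed has ``$+1$'' in the numerator, and that version is false. For a concrete counterexample take $p=3$, $m=2$, and $\delta\in\F_{81}$ with $d=\Trnn(\delta)=1+i$, where $i^{2}=-1$ in $\F_{9}$: then $(d+1)/d=i$ is a nonzero square in $\F_{9}$, yet $\bigl((d-1)/d\bigr)^{4}=(-1-i)^{4}=(2i)^{2}=-1$, so the kernel of $g$ on $V$ is nontrivial and $F$ is not a permutation of $\F_{81}$. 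So you have in effect proved the corrected statement; you should not write that ``the hypothesis supplies'' the needed condition without making explicit that the numerator must be $\Trnn(\delta)-1$, i.e.\ that the displayed ``$+1$'' is a typo. Note also that the hypothesis is only sufficient: the precise criterion your argument yields is $\Trnn(\delta)\in\{0,1\}$ or $\bigl((\Trnn(\delta)-1)/\Trnn(\delta)\bigr)^{N}\neq-1$, which for $p>3$ is strictly weaker than being a $(p-1)$-th power.
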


\begin{lem}\label{L01}\cite[Theorem 12]{CPC}
Let $m$ be a positive integer and $n=2m$ such that $n\geq 3$. Furthermore, let $L$ be a linearized permutation polynomial such that $L(-1) = -1$ and let $G_{k_{1} \leq k_{2} \leq \cdots \leq k_{s}}(X) := L(X)+ \displaystyle{\prod_{i=1}^{s} (\alpha_{i} \Trmm(X^{2^{k_{i}} +1} + \delta_{i}))^{g_i} } \in \F_{2^n}[X]$, where $g_{i} \in \mathbb{N}$, $\delta_{i} \in \F_{2^n}$, $\alpha_{i} \in \F_{2^m}^{*}$ and $1 \leq k_{i} \leq n-1$. Then $G_{k_{1} \leq k_{2} \leq \cdots \leq k_{s}}$ is either P$c$N or AP$c$N with respect to all $c \neq 1$, and P$c$N for $c=0$.
\end{lem}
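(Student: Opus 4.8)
The plan is to count, for each pair $(a,b)$, the solutions $X \in \F_{2^n}$ of the $c$-differential equation $_cD_G(X,a) = G(X+a) - cG(X) = b$, and to show this number never exceeds $2$ when $c \ne 1$, with equality to $1$ forced when $c = 0$. Write $G(X) = L(X) + T(X)$, where $T(X) := \prod_{i=1}^{s}\big(\alpha_i \Trmm(X^{2^{k_i}+1} + \delta_i)\big)^{g_i}$. Two structural facts drive the argument: $L$ is additive, so $L(X+a) = L(X) + L(a)$; and, because $\Trmm$ maps into $\F_{2^m}$ and each $\alpha_i \in \F_{2^m}^{*}$, the perturbation $T$ is \emph{subfield-valued}, i.e.\ $T(X) \in \F_{2^m}$ for all $X$. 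Using additivity together with $\mathrm{char} = 2$, the defining equation collapses to
\[
(1+c)\,L(X) + T(X+a) + c\,T(X) = b + L(a). \qquad (\star)
\]

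First I would exploit the subfield-valued nature of $T$ by applying the relative trace $\Trmm$ to $(\star)$. Since $n = 2m$, every $w \in \F_{2^m}$ satisfies $\Trmm(w) = w + w^{2^m} = 0$, so the term $T(X+a)$ is annihilated, while $\Trmm\big(c\,T(X)\big) = (c + c^{2^m})\,T(X) = \Trmm(c)\,T(X)$ because $T(X) \in \F_{2^m}$. This produces the scalar constraint
\[
\Trmm\big((1+c)L(X)\big) + \Trmm(c)\,T(X) = \Trmm\big(b + L(a)\big), \qquad (\dagger)
\]
which lives entirely in $\F_{2^m}$. On the other hand, $c \ne 1$ forces $1 + c \ne 0$, so $(\star)$ can be solved for $L(X)$, and hence for $X$, \emph{once the pair} $\big(T(X),\,T(X+a)\big) \in \F_{2^m} \times \F_{2^m}$ \emph{is fixed}, namely $X = L^{-1}\!\big((b + L(a) + T(X+a) + c\,T(X))/(1+c)\big)$. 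Thus each solution is determined by its pair of $T$-values, and counting solutions of $(\star)$ reduces to counting self-consistent pairs $(u,v)$ for which the resulting $X$ actually satisfies $T(X) = u$ and $T(X+a) = v$, subject to $(\dagger)$.

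The count then splits according to whether $\Trmm(c) = 0$, equivalently whether $c \in \F_{2^m}$ (note $\ker \Trmm = \F_{2^m}$). When $c \in \F_{2^m} \setminus \{1\}$, $(\dagger)$ degenerates to $\Trmm(L(X)) = \mathrm{const}$, confining $L(X)$, and hence $X$, to a single coset of $\F_{2^m}$; when $c \notin \F_{2^m}$, $(\dagger)$ instead expresses $T(X)$ as an affine function of $\Trmm\big((1+c)L(X)\big)$, which I would substitute back into the self-consistency requirement. In both regimes the lever is that each individual map $X \mapsto \Trmm(X^{2^{k_i}+1})$ is a quadratic form whose directional difference $\Trmm\big(a X^{2^{k_i}} + a^{2^{k_i}} X + a^{2^{k_i}+1}\big)$ is additive in $X$ up to a constant; this affineness is what one uses to cut the coset of $2^m$ candidates down to at most two consistent pairs, yielding $_c\Delta_G \le 2$, and a further check of when the two candidates coincide upgrades this to P$c$N in the relevant cases.

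I expect the genuine obstacle to be exactly this final self-consistency count, because $T$ is the \emph{product} $\prod_i(\alpha_i \Trmm(X^{2^{k_i}+1}+\delta_i))^{g_i}$, a high-degree object, so the subfield property alone is not enough: one must track how the several quadratic factors (with their multiplicities $g_i$) combine against the single linear constraint from $(\star)$, keeping the bound uniform over all $(a,b)$ and across both regimes of $c$. Finally, the case $c = 0$ is treated separately, where P$c$N is equivalent to $G$ being a permutation: if $G(X) = G(Y)$ then $L(X+Y) = T(X) + T(Y) \in \F_{2^m}$, and the same trace projection combined with the quadratic structure, together with the hypothesis $L(-1) = -1$, forces $X = Y$. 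The value $c = 1$ is excluded precisely because then $1 + c = 0$ annihilates the leading linear term of $(\star)$ and the entire reduction breaks down.
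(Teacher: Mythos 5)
First, a point of context: the paper does not prove this lemma at all --- it is quoted verbatim from \cite[Theorem 12]{CPC} and used as a black box, so there is no in-paper proof to measure your attempt against. Judged on its own terms, your proposal sets up the standard switching-method reduction correctly: the decomposition $G=L+T$ with $T$ taking values in $\F_{2^m}$, the rewriting of the $c$-differential equation as $(1+c)L(X)+T(X+a)+cT(X)=b+L(a)$, the projection by $\Trmm$ (which annihilates $T(X+a)$ and turns $cT(X)$ into $\Trmm(c)\,T(X)$), and the resulting dichotomy $c\in\F_{2^m}$ versus $c\notin\F_{2^m}$ are all sound, and they are indeed the natural first moves for a function of this shape.

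However, there is a genuine gap, and you essentially name it yourself: the entire content of the theorem is the ``final self-consistency count,'' and your proposal never carries it out. Fixing the pair $\bigl(T(X),T(X+a)\bigr)$ determines $X$, but that pair ranges over $\F_{2^m}\times\F_{2^m}$, i.e.\ $2^{2m}=2^n$ candidates, so this observation by itself is vacuous; and the constraint $(\dagger)$ in the case $c\in\F_{2^m}$ only confines $L(X)$ to a coset of $\F_{2^m}$, still leaving $2^m$ candidates. Getting from $2^m$ down to $2$ is where all the work lies, and the ``lever'' you invoke --- that each $D_a\Trmm(X^{2^{k_i}+1})$ is affine --- does not transfer to $T$: the directional difference of a product is $D_a(fg)=fD_ag+gD_af+D_afD_ag$, which reintroduces the quadratic forms themselves rather than only their affine differences, so $D_aT$ is not affine for $s\geq 2$ or $g_i\geq 2$. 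Even in the simplest case $s=1$, $g_1=1$, where $D_aT$ genuinely is affine, the equation becomes $(1+c)G(X)=b+L(a)+D_aT(X)$ and one must still bound by $2$ the number of fixed points on $\F_{2^m}$ of $w\mapsto D_aT\bigl(G^{-1}((b+L(a)+w)/(1+c))\bigr)$; this requires a concrete argument, and it is precisely where the hypotheses $L(-1)=-1$, $\alpha_i\in\F_{2^m}^{*}$ and the quadratic-form structure must actually be used. The $c=0$ case (that $G$ is a permutation) is likewise asserted rather than proved. In short, the proposal is a correct skeleton of the right approach, but the substance of the theorem is missing.
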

The following lemma can be gleaned from the proof of ~\cite[Proposition 2]{TH}.
 
\begin{lem} \label{walsh} Let $m$ be a positive integer and $n=2m$. Also, let $a_i \in \F_{p^n} (i=0, \cdots, m$) for an odd prime $p$. Then the absolute square of Walsh transform coefficient of the function $f : X \mapsto \Tr\left( \sum_{i=0}^{m} a_i X^{p^i+1}\right)$ at $-v \in \F_{p^n}$ is given by
 \begin{equation*} \lvert \mathcal{W}_f(-v) \rvert^2 =
  \begin{cases}
   p^{n+\ell} &~\mbox{if}~f(X)+\Tr(vX)\equiv0~\text{on Ker}~(L)  \\
    0 &~\mbox{otherwise},
  \end{cases}
 \end{equation*}
 where $\ell$ is dimension of kernel of the linearized polynomial $L(X) = \sum_{i=0}^{m} (a_i X^{p^i} +{(a_i X)}^{p^{n-i}}).$ 
\end{lem}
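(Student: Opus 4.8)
The function $f$ is a quadratic form followed by the trace, so the plan is to compute $|\mathcal{W}_f(-v)|^2$ by the standard ``squaring'' technique, which turns a quadratic Walsh sum into a character sum governed by the associated symmetric bilinear form. By the definition of the Walsh transform,
\[
\mathcal{W}_f(-v) = \sum_{X \in \F_{p^n}} \omega^{f(X) + \Tr(vX)},
\]
so that
\[
|\mathcal{W}_f(-v)|^2 = \sum_{X, Y \in \F_{p^n}} \omega^{\,f(X) - f(Y) + \Tr(v(X-Y))}.
\]
First I would substitute $X = Y + Z$ and let $Y, Z$ range over $\F_{p^n}$, reducing the exponent to $f(Y+Z) - f(Y) + \Tr(vZ)$.

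Next, expanding $(Y+Z)^{p^i+1}$ and cancelling the pure-$Y$ terms gives
\[
f(Y+Z) - f(Y) = f(Z) + \Tr\!\Big( \textstyle\sum_{i=0}^m a_i \big(Y^{p^i} Z + Y Z^{p^i}\big) \Big),
\]
the final sum being the symmetric bilinear form attached to $f$. The crucial manipulation is to recast this $Y$-linear part as $\Tr\!\big(Y\,L(Z)\big)$: applying $\Tr(u) = \Tr(u^{p^{n-i}})$ to each term $\Tr(a_i Y^{p^i} Z)$ and using $Y^{p^n} = Y$ rewrites it as $\Tr\!\big(Y\, a_i^{p^{n-i}} Z^{p^{n-i}}\big)$, and collecting all contributions recovers precisely the given $L(Z) = \sum_{i=0}^m \big(a_i Z^{p^i} + (a_i Z)^{p^{n-i}}\big)$.

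Then the sum factors as
\[
|\mathcal{W}_f(-v)|^2 = \sum_{Z \in \F_{p^n}} \omega^{\,f(Z) + \Tr(vZ)} \sum_{Y \in \F_{p^n}} \omega^{\,\Tr(Y L(Z))}.
\]
By orthogonality of additive characters the inner sum equals $p^n$ when $L(Z) = 0$ and vanishes otherwise, leaving $|\mathcal{W}_f(-v)|^2 = p^n \sum_{Z \in \mathrm{Ker}(L)} \omega^{\,f(Z) + \Tr(vZ)}$. Finally I would observe that the identity above with $Z \in \mathrm{Ker}(L)$ gives $f(Y+Z) = f(Y) + f(Z)$ there, since the bilinear term $\Tr(Y L(Z))$ vanishes; hence $Z \mapsto f(Z) + \Tr(vZ)$ is additive, and therefore $\F_p$-linear, on the $\ell$-dimensional space $\mathrm{Ker}(L)$. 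If this functional is identically zero the sum equals $p^\ell$, yielding $p^{n+\ell}$; otherwise it is surjective onto $\F_p$ with fibres of size $p^{\ell-1}$, and $\sum_{t \in \F_p}\omega^{t} = 0$ forces the value $0$, matching the two cases in the statement.

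The main obstacle is the bookkeeping in the second step: correctly tracking the Frobenius exponents so that the $Y$-linear part of $f(Y+Z)-f(Y)$ matches the stated linearized polynomial $L$ verbatim, in particular producing the index $p^{n-i}$ and the coefficient $a_i^{p^{n-i}}$ exactly. Everything after that is a routine appeal to character orthogonality and the dichotomy for a linear functional on an $\F_p$-vector space.
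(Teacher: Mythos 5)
Your proof is correct and is essentially the standard squaring argument for Walsh transforms of quadratic forms: the paper itself offers no proof, merely noting the lemma can be gleaned from the proof of \cite[Proposition 2]{TH}, and that source's argument is exactly the one you outline (square the sum, extract the symmetric bilinear form $\Tr(YL(Z))$, apply character orthogonality, and use linearity of $Z\mapsto f(Z)+\Tr(vZ)$ on $\mathrm{Ker}(L)$). Your Frobenius bookkeeping recovering $L(Z)=\sum_{i=0}^{m}\bigl(a_iZ^{p^i}+(a_iZ)^{p^{n-i}}\bigr)$ verbatim is accurate, so nothing further is needed.
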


We also recall that in~\cite{StanicaPG2021}, the authors computed the $c$-DDT entries by means of the Weil sums approach. We will quickly go over the general technique for expressing the number of solutions to a given equation over finite fields in terms of Weil sums for the reader's convenience. Let $\chi_1: \F_q \rightarrow \mathbb{C}$ be the canonical additive character of the additive group of $\F_q$ defined as follows
\[
 \chi_1(X):= \exp \left(\frac{2\pi i \Tr(X)}{p} \right).
\]
It is easy to observe (see, for instance~\cite{PSweil}) that the number of solutions $(X_1, X_2, \ldots, X_n) \in \F_q^n$ of the equation $$F(X_1, X_2, \ldots, X_n)=b,$$ denoted by $N(b)$, is given by 
\begin{equation}
\label{ddtw}
 \begin{split}
 N(b)= \frac{1}{q} \sum_{X_1,X_2, \ldots, X_n \in \F_q} \sum_{\beta \in \F_q} \chi_1(\beta(F(X_1, X_2, \ldots, X_n)-b)).
 \end{split}
\end{equation}
In the sections that follow, the expression from Equation~\eqref{ddtw} will be used to determine the c-differential uniformity of a few permutations over finite fields.

\section{Permutations over $\F_{2^n}$ with low $c$-differential uniformity}\label{S3}
In this section, we first deal with the computation of the $c$-differential uniformity of $F(X)=(X^{2^m}+X+\delta)^{2^{2m}+1}+X $ over $\F_{2^n}$, where $n=3m$ and $\delta \in \F_{2^n}$. From Lemma~\ref{L00}, we know that $F$ is a permutation polynomial over $\F_{2^n}$. Here we find conditions on $c$ and $\delta$ for which $F$ turns out to be either a P$c$N or an AP$c$N function. Notice that in our case, when $\delta=1$, the function $F(X)=(X^{2^m}+X+\delta)^{2^{2m}+1}+X $ can be rewritten as $F(X) = L(X) + \Trmn(X^{2^m +1}+1)$, where $L(X) = X^2 + \Trmn(X)$. As $L$ is a non-permutation over $\F_{2^n}$ with $L(1)=0$, our case is different from the function discussed in Lemma~\ref{L01}.  Moreover, the method developed in~\cite{CPC}  cannot be used to treat our class of functions.

Our first theorem will be based on two lemmas that we will now prove. From~\cite[Theorem 4]{LMW}, the following lemma can be extracted directly, but we add its proof here for completeness.

\begin{lem} \label{lemma1S1}
Let $u \in \F_{2^m}^{*}$ and $G(X):=uX^{2^m+1}$ be a function on $\F_{2^n}$, where $n=3m$. Then $
\mathcal{W}_G(v)=0~\text{if}~ \Tr (v)=0.$
\end{lem}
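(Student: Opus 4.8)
The plan is to exploit the fact that the defining Weil sum
$\mathcal{W}_G(v)=\sum_{X\in\F_{2^n}}(-1)^{\Tr(uX^{2^m+1}+vX)}$
is, up to a global sign, invariant under translating the argument by any element of the subfield $\F_{2^m}$, and then to pick a translation that forces this sign to be $-1$, which annihilates the sum. Concretely, for $t\in\F_{2^m}$ I would reindex the sum via $X\mapsto X+t$ and compare the phase $\Tr(G(X+t)+v(X+t))$ with $\Tr(G(X)+vX)$.

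The computation rests on the expansion $G(X+t)=u(X^{2^m}+t)(X+t)$, valid because $t^{2^m}=t$ for $t\in\F_{2^m}$; this gives $G(X+t)+G(X)=ut(X^{2^m}+X)+ut^2$. The crux of the argument, and the step I expect to be the main obstacle, is to see that the $X$-dependent part disappears under the trace: since $ut\in\F_{2^m}$ and the absolute trace is invariant under the Frobenius $w\mapsto w^{2^{2m}}$, one has $\Tr(utX^{2^m})=\Tr\big((ut)^{2^{2m}}X^{2^{3m}}\big)=\Tr(utX)$, whence $\Tr\big(ut(X^{2^m}+X)\big)=0$. Consequently $\Tr(G(X+t)+v(X+t))-\Tr(G(X)+vX)=\Tr(ut^2+vt)$ is independent of $X$.

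Reindexing then yields $\mathcal{W}_G(v)=(-1)^{\Tr(ut^2+vt)}\,\mathcal{W}_G(v)$ for every $t\in\F_{2^m}$, so $\mathcal{W}_G(v)=0$ as soon as some $t\in\F_{2^m}$ makes $\Tr(ut^2+vt)$ odd. Taking $t=1$ gives the exponent $\Tr(u)+\Tr(v)$, and by transitivity of the trace together with $\Trmm(u)=u$ for $u\in\F_{2^m}$ we have $\Tr(u)=\Trm(u)$; hence when $\Tr(v)=0$ the exponent reduces to $\Trm(u)$ and the sum vanishes. I would finally record the sharper global picture obtained by letting $t$ range over all of $\F_{2^m}$: reducing $\Tr(ut^2+vt)=\Trm\big((u^{2^{m-1}}+\Trmm(v))t\big)$ shows that $\mathcal{W}_G$ is supported on the coset $\{v:\Trmm(v)=u^{2^{m-1}}\}$, on which $\Tr(v)=\Trm(u)$. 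The heart of the matter is thus this last trace identity, which is exactly what separates the support of $\mathcal{W}_G$ from the hyperplane $\Tr(v)=0$ and so delivers the claimed vanishing.
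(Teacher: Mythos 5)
Your reindexing computation is sound as far as it goes: for $t\in\F_{2^m}$ one indeed has $\Tr\bigl(ut(X^{2^m}+X)\bigr)=0$, hence $\mathcal{W}_G(v)=(-1)^{\Tr(ut^2+vt)}\mathcal{W}_G(v)$, and the reduction $\Tr(ut^2+vt)=\Trm\bigl((u^{2^{m-1}}+\Trmn(v))t\bigr)$ correctly confines the support of $\mathcal{W}_G$ to the coset $\{v:\Trmn(v)=u^{2^{m-1}}\}$. The gap is in the final inference. From $\mathcal{W}_G(v)=(-1)^{\Trm(u)}\mathcal{W}_G(v)$ you conclude that ``the sum vanishes,'' but this forces $\mathcal{W}_G(v)=0$ only when $\Trm(u)=1$; when $\Trm(u)=0$ the identity is vacuous, and no other $t\in\F_{2^m}$ rescues you, since on the coset $\Trmn(v)=u^{2^{m-1}}$ the exponent $\Trm\bigl((u^{2^{m-1}}+\Trmn(v))t\bigr)$ vanishes identically. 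Worse, your own support computation then gives $\Tr(v)=\Trm(u^{2^{m-1}})=\Trm(u)=0$ on that coset, so for such $u$ the support lies \emph{inside} the hyperplane $\Tr(v)=0$ rather than being separated from it. Concretely, for $m\ge 2$ and $u\in\F_{2^m}^*$ with $\Trm(u)=0$ (e.g.\ $u=1$, $m=2$), the radical of the quadratic form $X\mapsto\Tr(uX^{2^m+1})$ is exactly $\F_{2^m}$, so by the standard count (the analogue of Lemma~\ref{walsh}) $\mathcal{W}_G(v)=\pm 2^{2m}\neq 0$ for all $2^{2m}$ elements $v$ of that coset, every one of which satisfies $\Tr(v)=0$. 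The missing case is therefore not merely unproved by your method; it cannot be proved.

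For comparison, the paper translates by a general $\zeta\in\F_{2^n}$, absorbs $v$ into the image of the $\F_2$-linear map $\zeta\mapsto u\zeta^{2^m}+(u\zeta)^{2^{2m}}=u(\zeta^{2^m}+\zeta^{2^{2m}})$, and kills the resulting pure sum $\sum_X\chi(uX^{2^m+1})$ using $\gcd(2^m+1,2^{3m}-1)=1$. That argument yields $\mathcal{W}_G(v)=0$ for $v$ in the image of this map, which is the set $\{v:\Trmn(v)=0\}$ of size $2^{2m}$ (the map is $2^m$-to-$1$ with kernel $\F_{2^m}$), not the hyperplane $\{v:\Tr(v)=0\}$ of size $2^{3m-1}$ as claimed there; the two coincide only for $m=1$. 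So your analysis, pushed to its correct conclusion, actually exposes a problem with the statement itself for $\Trm(u)=0$; the safe conclusion available from either approach is the vanishing of $\mathcal{W}_G(v)$ on $\{v:\Trmn(v)\neq u^{2^{m-1}}\}$, and you should state your result in that form rather than assert the claimed hyperplane vanishing.
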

\begin{proof}
Let $\zeta$ be an  element of $\F_{2^n}$, to be determined later. Surely, $X\mapsto X+\zeta$ is a bijection of $\F_{2^n}$.
Then,
\allowdisplaybreaks
\begin{align*}
 \mathcal{W}_G(v) & = \sum_{X \in \F_{2^n}} \chi (uX^{2^m+1}+vX) \\
 & = \sum_{X \in \F_{2^n}} \chi (u(X+\zeta)^{2^m+1}+v(X+\zeta)) \\
 & = \sum_{X \in \F_{2^n}} \chi (uX^{2^m+1}+u\zeta^{2^m} X+ u\zeta X^{2^m}+u\zeta^{2^m+1}+vX+v\zeta)\\
& =\chi(u \zeta^{2^m+1}+v\zeta) \sum_{X \in \F_{2^n}} \chi (uX^{2^m+1}+u\zeta^{2^m} X+ u\zeta X^{2^m}+vX)\\
& = \chi(u \zeta^{2^m+1}+v\zeta) \sum_{X \in \F_{2^n}} \chi (uX^{2^m+1}+X(L(\zeta)+ v)), 
\end{align*}
 where $L(\zeta)= u \zeta^{2^m} + (u \zeta)^{2^{-m}}$. Notice that $L$ is $2-$to$-1$ map and the image of $L$ is the set of elements in $\F_{2^n}$ of trace 0, since $u \in \F_{2^m}$. So, if $\Tr(v) = 0$, then we can choose $\zeta$ such that $L(\zeta) = v$, and the sum becomes $\sum_{X \in \F_{2^n}}\chi(uX^{2^m+1})$ which is $0$, as $u \neq 0$ and $\gcd(2^m+1,2^{3m}-1)=1$. Hence,   $\mathcal{W}_G(v)=0$ if $\Tr(v)=0$.
\end{proof}
The maximum number of solutions to a certain equation that arises in the proof of the next theorem are given by the following lemma, which may be of some independent interest.
\begin{lem} 
\label{lemma2S1}
 Let $m$ be a positive integer, $a  \in \F_{2^n}$, where $n=3m$. Furthermore, let $\delta \in \F_{2^n}$ with $\Trmn (\delta) \neq 1$ and $c \in \F_{2^n} \setminus \F_{2^m}$. Then the following equation
\begin{equation}
\label{eq:case3}
((1+c)X)^{2^{-1}}+(1+c)(1+\Trmn(\delta))X+ (a^{2^{2m}} +a^{2^m})\Trmn(X)=0
\end{equation}
has at most four solutions in $\F_{2^n}$ under the restriction that $\Trmn ((1+c)X)=0$.
\end{lem}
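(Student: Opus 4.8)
The plan is to remove the half-power by a linear substitution, use the restriction to turn the relative trace into an honest linearized equation, and then exploit the Galois conjugates to reduce the problem to the intersection of two conics, whose common zeros I can bound by four via an explicit elimination.

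First I would set $s:=\Trmn(\delta)\in\F_{2^m}$ (so that $s\neq 1$), write $b:=a^{2^{2m}}+a^{2^m}$, and substitute $Z:=((1+c)X)^{2^{-1}}$. Since $c\notin\F_{2^m}$ forces $1+c\neq 0$, this is a bijection of $\F_{2^n}$ with inverse $X=Z^2/(1+c)$; moreover $\Trmn((1+c)X)=\Trmn(Z^2)=\Trmn(Z)^2$, so the side condition $\Trmn((1+c)X)=0$ becomes simply $\Trmn(Z)=0$. Substituting into \eqref{eq:case3} and expanding $\Trmn(Z^2/(1+c))$ converts the equation into the linearized relation
\[
\frac{b}{(1+c)^{2^{2m}}}\,Z^{2^{2m+1}}+\frac{b}{(1+c)^{2^m}}\,Z^{2^{m+1}}+\Big(\frac{b}{1+c}+(1+s)\Big)Z^2+Z=0.
\]
Using $\Trmn(Z)=0$ in the form $Z^{2^{2m}}=Z+Z^{2^m}$, and hence $Z^{2^{2m+1}}=Z^2+Z^{2^{m+1}}$, I would eliminate the leading term and reduce this to
\[
A\,Z^{2^{m+1}}+B\,Z^2+Z=0,
\]
where $A:=\frac{b}{(1+c)^{2^{2m}}}+\frac{b}{(1+c)^{2^m}}$ and $B:=\frac{b}{(1+c)^{2^{2m}}}+\frac{b}{1+c}+(1+s)$; call this relation (E0). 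If $b=0$ then \eqref{eq:case3} already collapses to $(1+s)Z^2+Z=0$, whose only roots are $Z\in\{0,(1+s)^{-1}\}$ (here $s\neq1$ is used), giving at most two solutions. Hence I may assume $b\neq 0$, and then $c\notin\F_{2^m}$ yields $(1+c)^{2^{2m}}\neq(1+c)^{2^m}$, so that $A\neq0$.

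Next I would bring in the conjugates. Put $u:=Z$ and $v:=Z^{2^m}$, so that (E0) reads $Bu^2+Av^2+u=0$. Applying the Frobenius $x\mapsto x^{2^m}$ to (E0) and again invoking $Z^{2^{2m}}=Z+Z^{2^m}$ to remove the third conjugate produces a second quadratic relation $A^{2^m}u^2+(A^{2^m}+B^{2^m})v^2+v=0$. Every admissible $Z$ thus gives a common zero $(u,v)=(Z,Z^{2^m})$ of the two conics
\[
Bu^2+Av^2+u=0,\qquad A^{2^m}u^2+(A^{2^m}+B^{2^m})v^2+v=0,
\]
and distinct $Z$ give distinct $u$. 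Eliminating $v^2$ between these (a legitimate linear combination because $A\neq0$) yields $Av=Cu^2+Du$ with $C:=(A^{2^m}+B^{2^m})B+A^{2^m+1}$ and $D:=A^{2^m}+B^{2^m}$; feeding this back into the first conic and clearing the denominator gives the single polynomial equation
\[
C^2u^4+(AB+D^2)u^2+Au=0.
\]
Its linear coefficient is $A\neq0$, so it is a nonzero polynomial of degree at most four; therefore $u=Z$ takes at most four values, and \eqref{eq:case3} has at most four solutions $X$ under the stated restriction.

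The two nonvanishing checks that keep the count honest are that $A\neq0$ once $b\neq0$ --- this is exactly where $c\notin\F_{2^m}$ enters, through $(1+c)^{2^{2m}}\neq(1+c)^{2^m}$ --- and that the final quartic is genuinely nonzero, which is guaranteed by its surviving linear term $Au$. I expect the main obstacle to be the passage from the single linearized equation in $Z$, whose naive root bound $2^{2m+1}$ is hopelessly weak, to the pair of conics in $(Z,Z^{2^m})$: it is the restriction $\Trmn(Z)=0$ that kills the third Galois conjugate and collapses three variables to two, so that the B\'ezout-type elimination produces the sharp constant $4$ rather than a quantity growing with $m$.
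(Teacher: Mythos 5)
Your proof is correct, and it takes a genuinely different route from the paper's. Both arguments dispose of the case $a^{2^{2m}}+a^{2^m}=0$ first and both exploit the restriction $\Trmn((1+c)X)=0$ to kill one Galois conjugate, but the endgames diverge. The paper works with $u=(1+c)X\,\delta'$, shows by two Frobenius raisings that a certain quadratic expression in $X$ must lie in $\F_{2^m}$, and then expands everything in coordinates with respect to the basis $\{1,A,A^2\}$ of $\F_{2^{3m}}$ over $\F_{2^m}$; after a subcase analysis this reduces to a cubic over $\F_{2^m}$ in one coordinate, giving $3+1=4$. You instead remove the half-power by the substitution $Z=((1+c)X)^{2^{-1}}$, use $Z^{2^{2m}}=Z+Z^{2^m}$ to collapse the equation to a relation in $(Z,Z^{2^m})$ only, and then eliminate between that relation and its $2^m$-Frobenius conjugate to produce the explicit quartic $C^2u^4+(AB+D^2)u^2+Au=0$, whose nonvanishing linear coefficient $A\neq 0$ (exactly where $c\notin\F_{2^m}$ enters) certifies that it is a nonzero polynomial of degree at most four. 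Your elimination is cleaner and shorter: it avoids the basis decomposition and the attendant subcases ($\gamma_2=0$ or not, $\alpha_1\gamma_0=\delta'$ or not) of the paper's proof, and the source of the constant $4$ is completely transparent. What the paper's coordinate approach buys in exchange is finer information about when the bound is attained --- it exhibits the solutions as roots of a cubic over $\F_{2^m}$ and records the classical solvability conditions for that cubic --- which your resultant-style argument does not directly provide. All the computations in your writeup (the form of $A$ and $B$, the two conics, the values of $C$ and $D$, and the final quartic) check out.
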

\begin{proof}
It is easy to see that for those $a \in \F_{2^n}$ satisfying $a^{2^{2m}} +a^{2^m}=0$, the above equation has two solutions.

Now, we consider $a^{2^{2m}} +a^{2^m} \neq 0$ and obtain the following equation 
\begin{equation}
\label{eq:case3}
\frac{(1+c)^{2^{-1}}}{a^{2^{2m}}+a^{2^m}}X^{2^{-1}} +\frac{(1+c)(1+\Trmn(\delta))}{a^{2^{2m}} +a^{2^m}}X+ \Trmn(X)=0.
\end{equation}

We shall proceed in a ``multivariate approach''  way. For simplicity, we let $\delta'=1+\Trmn(\delta)\in\F_{2^m}$, $A =a^{2^m}+a^{2^{2m}}\not\in\F_{2^m}$, $\epsilon_1=\frac{(1+c)\delta'}{A}, \epsilon_2=\frac{(1+c)^{2^{-1}}}{A}$. We now replace $\Trmn(X)=X+X^{2^m}+X^{2^{2m}}$ in  Equation~\eqref{eq:case3} 
and obtain
\[
X^{2^{2m}}+X^{2^m}+\left(\epsilon_1+1 \right)X+\epsilon_2X^{2^{-1}}=0.
\]
Raising this  to the $2^m$-power, we get
\begin{equation}
\label{eq:case3_0}
X^{2^{2m}}+\left(\epsilon_1^{2^m} +1 \right)X^{2^m}+\epsilon_2^{2^m}X^{2^{m-1}}+X=0.
\end{equation}
We raise again to the $2^m$-power,   obtaining
\begin{equation}
\label{eq:case3_1}
 \left(\epsilon_1^{2^{2m}} +1 \right)X^{2^{2m}}+\epsilon_2^{2^{2m}}X^{2^{2m-1}}+X^{2^m}+X=0.
\end{equation}
Now, adding Equations~\eqref{eq:case3_0} and~\eqref{eq:case3_1}, we get
\[
\epsilon_1^{2^{2m}}  X^{2^{2m}}+\epsilon_2^{2^{2m}}X^{2^{2m-1}}+\epsilon_1^{2^m}  X^{2^m}+\epsilon_2^{2^m}X^{2^{m-1}}=0,
\]
which is equivalent to
\[
(\epsilon_1^2X^2+\epsilon_2^2X)^{2^{2m-1}}+(\epsilon_1^2X^2+\epsilon_2^2X)^{2^{m-1}}=0.
\]
Taking the $2^{m-1}$ root in the above equation, it follows that 
\[
\epsilon_1^2X^2+\epsilon_2^2X =\alpha \in\F_{2^m},
\]
which can be written as
\[
((1+c)X \delta')^2+(1+c)X+\alpha A^2=0.
\]
Using the substitution $u=(1+c)X\delta'$, we obtain the equation $(\delta' \neq 0)$,
\begin{equation}
\label{eq:equ}
u^2+\frac{u}{\delta'}+\alpha A^2 =0.
\end{equation}
This equation has two distinct solutions $u,u+\frac{1}{\delta'}$ (for a fixed $\alpha$) if and only if $\Tr\left(\alpha \delta'^2 A^2  \right)=0$. Observe that $\Tr\left(\alpha  \delta'^2 A^2  \right)=\Tr_1^m\left(\Trmn \left(\alpha  \delta'^2 A^2  \right)\right)=\Tr_1^m\left(\alpha  \delta'^2\Trmn \left( A^2 \right)\right)=0$ (since $\Trmn \left( A^2 \right)=0$), so the condition for the existence of solutions of \eqref{eq:equ} is automatically satisfied.
The needed condition on the solutions, that is, $\Trmn((1+c)X)=0$ is surely equivalent to $\Trmn(u)=0$ (since $\delta'\in\F_{2^m}$). It follows that at most one solution $u,u+\dfrac{1}{\delta'}$ can satisfy $\Trmn(u)=0$, or $\Trmn(u+\frac{1}{\delta'})=0$, since $\Trmn(1)=1$.

We next observe that $\alpha=\left(\dfrac{1}{\delta'}\Trmn\left(\frac{u}{1+c} \right)\right)^2$ and so, we need to show that the following linearized polynomial
\begin{equation}
\label{eq:Lu}
L(u)=u^2+\frac{u}{\delta'}+\frac{A^2}{\delta'^2}\Trmn\left( \frac{1}{1+c} u\right)^2
\end{equation}
has at most four solution in $\F_{2^n}$ for all $A$.  
 
 Since $\F_{2^{3m}}$ is an extension of degree 3 over $\F_{2^m}$, it follows that any element in $\F_{2^{3m}}$ that does not belong to $\F_{2^{m}}$ generates $\F_{2^{3m}}$ over $\F_{2^{m}}$. In particular, $\{1,A,A^2\}$, where $A=a^{2^m}+a^{2^{2m}}$, forms a basis of $\F_{2^{3m}}$ over $\F_{2^{m}}$. Surely, there is a polynomial  (over $\F_{2^{m}}$)  of degree 3 with root $A$, say $A^3+\gamma_2 A^2+\gamma_1 A+\gamma_0=0$, $\gamma_0\neq 0$. Recall that $\Trmn(A)=0$.
 We are looking for (the number of) solutions $u$ of $L(u)=0$, with $\Trmn(u)=0$. We write 
\begin{align*}
u&=u_0+u_1 A+u_2 A^2=u_1 A+u_2 A^2 \text{ (since $\Trmn(u)=\Trmn(A)=0$, then $u_0=0$)},\\
\frac1{1+c}&=\alpha_0+\alpha_1 A+\alpha_2 A^2,\text{ for some $u_i,\alpha_i, i\in\{0,1,2\}$ in $\F_{2^m}$}.
\end{align*}

We now go back to Equation~\eqref{eq:Lu}. We obtain (we use below that $\Trmn(A)=0, \Trmn(A^2)=(\Trmn(A))^2=0,\Trmn(A^4)=(\Trmn(A))^4=0$),
\allowdisplaybreaks
\begin{align*}
L(u)&=\dfrac{u_1 A+u_2 A^2}{\delta'}+(u_1 A+u_2 A^2)^2\\
&\qquad +\dfrac{A^2}{\delta'^2}\,\Trmn\left( \left(u_1 A+u_2 A^2\right) \left( \alpha_0+\alpha_1 A+\alpha_2 A^2\right)\right)^2\\
&= \gamma_0 \gamma_2 u_2^2 + A (\dfrac{u_1}{\delta'} + \gamma_0 u_2^2 + \gamma_1 \gamma_2 u_2^2) +  A^2 (u_1^2 + \dfrac{u_2}{\delta'} + \gamma_1 u_2^2 + \gamma_2^2 u_2^2)\\
&\qquad + \dfrac{A^2}{\delta'^2}\,\Trmn(\alpha_2  \gamma_0 u_1 + \alpha_1  \gamma_0 u_2 + \alpha_2  \gamma_0  \gamma_2 u_2 \\
&\qquad + 
 A \left(\alpha_0 u_1 + \alpha_2  \gamma_1 u_1 + \alpha_1  \gamma_1 u_2 + \alpha_2 ( \gamma_0 +  \gamma_1  \gamma_2) u_2\right)\\
 & \qquad + 
 A^2 (\alpha_1 u_1 + \alpha_2  \gamma_2 u_1 + \alpha_0 u_2 + \alpha_1  \gamma_2 u_2 + \alpha_2 ( \gamma_1 +  \gamma_2^2) u_2))^2\\
 &=\gamma_0 \gamma_2 u_2^2 + A (\dfrac{u_1}{\delta'} + \gamma_0 u_2^2 + \gamma_1 \gamma_2 u_2^2)\\
 & \qquad   +  A^2 (u_1^2 + \dfrac{u_2}{\delta'} + \gamma_1 u_2^2 + \gamma_2^2 u_2^2+\dfrac{\alpha_2  \gamma_0 u_1 + \alpha_1  \gamma_0 u_2 + \alpha_2  \gamma_0  \gamma_2 u_2}{\delta'^2}).
\end{align*}
Thus, $u$ is a solution if 
\allowdisplaybreaks
\begin{align*}
0&=\gamma_0 \gamma_2 u_2^2,\\
0&=\dfrac{u_1}{\delta'}+ \gamma_0 u_2^2 + \gamma_1 \gamma_2 u_2^2,\\
0&=u_1^2 + \dfrac{u_2}{\delta'} + \gamma_1 u_2^2 + \gamma_2^2 u_2^2+\dfrac{\alpha_2  \gamma_0 u_1 + \alpha_1  \gamma_0 u_2 + \alpha_2  \gamma_0  \gamma_2 u_2}{\delta'^2}.
\end{align*}
If $u_2=0$, then the system becomes
\allowdisplaybreaks
\begin{align*}
0&=u_2,\\
0&=\dfrac{u_1}{\delta'} ,\\
0&=u_1^2+\dfrac{\alpha_2\gamma_0 u_1}{\delta'^2},
\end{align*}
which implies $u=0$.

We now take $u_2\neq 0$, $\gamma_2=0$  (recall that $\gamma_0\neq 0$), and the system becomes 
\allowdisplaybreaks
\begin{align*}
0&=\gamma_2,\\
0&=\dfrac{u_1}{\delta'} + \gamma_0 u_2^2  \\
0&=u_1^2 + \dfrac{u_2}{\delta'} + \gamma_1 u_2^2 + \dfrac{\alpha_2  \gamma_0 u_1 + \alpha_1  \gamma_0 u_2}{\delta'^2} = u_1^2 + \gamma_1 u_2^2 +\dfrac{ \alpha_2  \gamma_0 u_1}{\delta'^2} +\left(\frac{1}{\delta'} + \dfrac{\alpha_1\gamma_0}{\delta'^2} \right)u_2.
\end{align*}
If $\alpha_1\gamma_0=\delta'$, that is, $\gamma_0=\dfrac{\delta'}{\alpha_1 }$, then $u_2=0$, or $u_2=\dfrac{\alpha_1\tilde\gamma_1 + \tilde \delta'\tilde \alpha_2
}{ \delta'^2}$, and $u_1=\dfrac{\alpha_1^2\gamma_1 +  \delta' \alpha_2
}{\alpha_1\delta'^2}$, where $\tilde \alpha_2^2 = \alpha_2$, $\tilde\gamma_1^2 = \gamma_1$ and $\tilde\delta'^2=\delta$ . 

If $\alpha_1\gamma_0\neq \delta'$, then we find 
$
u_1=\delta' \gamma_0 {u_2}^2,
$
which replaced in the third displayed equation renders
\[
(\gamma_0 \delta' {u_2}^2)^2 +\left(\gamma_1+\dfrac{\alpha_2 \gamma_0^2}{\delta'}\right)u_2^2+ \left(\dfrac{1}{\delta'}+\dfrac{\alpha_1 \gamma_0}{\delta'^2}\right)u_2 =0.
\]
Thus, since $u_2\neq 0$, since otherwise  $u_1=0$, so $u=0$, a case that we dealt with, dividing by $u_2$, we obtain
\[
 {u_2}^3 +\left(\dfrac{\gamma_1}{\gamma_0^2\delta'^2}+\dfrac{\alpha_2}{\delta'^3}\right)u_2+ \left(\dfrac{1}{\gamma_0^2\delta'^3}+\dfrac{\alpha_1 }{\gamma_0\delta'^4}\right) =0.
\] 
This equation over $\F_{2^m}^*$ has at most three solutions $u_2$ ($u_1$ is uniquely defined in terms of $u_2$). More precisely, when $\alpha_1\gamma_0\neq \delta'$, using the notations $b_0= \left(\dfrac{1}{\gamma_0^2\delta'^3}+\dfrac{\alpha_1 }{\gamma_0\delta'^4}\right)$, $b_1=\left(\dfrac{\gamma_1}{\gamma_0^2\delta'^2}+\dfrac{\alpha_2}{\delta'^3}\right)$, we know that this last equation has three solutions if and only if $\Tr_1^m (b_1^3/b_0^2)=\Tr_1^m(1)=m\pmod 2$ and $t_1,t_2$ are cubes in $\F_{2^m}$ if $m$ is even, and in $\F_{2^{2m}}$ if $m$ is odd, where $t_1,t_2$ are roots of $t^2+b_0t+b_1^3=0$.
Computationally, it seems that we always can get these conditions to be satisfied, but regardless, we get our upper bound.

To conclude, there are at most four solutions $u$ with $\Trmn(u)=0$ for Equation~\eqref{eq:Lu} and therefore for Equation~\eqref{eq:case3}.
\end{proof}

 \begin{thm} 
 \label{T1}
  Let $F(X)=(X^{2^m}+X+\delta)^{2^{2m}+1}+X$ over $\F_{2^{n}}$, where $n=3m$ and $\delta\in\F_{2^n}$. Let $\Gamma_1 :=\{\delta \in \F_{2^n}: \Trmn (\delta)=1\}$. Then$:$
  \begin{enumerate}
   \item[$(1)$] $F$ is  P$c$N for all $c \in \F_{2^m} \setminus \{1\}$ and for all $\delta\in\F_{2^n};$
  \item[$(2)$] $F$ is AP$c$N for all $c \in \F_{2^n}\setminus\F_{2^m}$ and for all $\delta \in \Gamma_1;$
   \item[$(3)$] $F$ is of $c$-differential uniformity $\leq 4$ for all $c \in \F_{2^n}\setminus\F_{2^m}$ and for all $\delta \in \F_{2^n} \setminus \Gamma_1$.
   \end{enumerate}
  \end{thm}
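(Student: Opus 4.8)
The plan is to count, for fixed $a,b,c$, the solutions $X\in\F_{2^n}$ of the $c$-derivative equation $F(X+a)+cF(X)=b$. Writing $Y:=X^{2^m}+X+\delta$ and $A:=a^{2^m}+a$ (so that $\Trmn(A)=0$), a direct expansion in characteristic two turns this into
\begin{equation}\label{eq:Ered}
(1+c)Y^{2^{2m}+1}+AY^{2^{2m}}+A^{2^{2m}}Y+A^{2^{2m}+1}+(1+c)X+a=b.
\end{equation}
First I would dispose of the case $a\in\F_{2^m}$: then $A=0$ and \eqref{eq:Ered} collapses to $(1+c)F(X)=a+b$, which has a unique solution since $F$ is a permutation (Lemma~\ref{L00}) and $1+c\neq0$. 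Hence for all three parts it suffices to treat $a\notin\F_{2^m}$, i.e. $A\neq0$. Throughout I set $G(X):=F(X+a)+cF(X)$, a Dembowski--Ostrom polynomial plus an affine part, and record two structural facts I will lean on: its quadratic part equals $(1+c)\bigl(X^{2^{2m}+1}+X^{2^{2m}+2^m}+X^{2^m+1}\bigr)=(1+c)\Trmn\!\bigl(X^{2^{2m}+1}\bigr)$, and its linear part $L$ satisfies $L(1)=0$ (because $1\in\F_{2^m}$ lies in the radical of the associated bilinear form, forcing $D_1G\equiv 1+c$, while the quadratic part contributes $(1+c)\Trmn(1)=1+c$).

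For part (1), with $c\in\F_{2^m}\setminus\{1\}$, I would show that $G$ is a permutation, which is equivalent to P$c$N. Using \eqref{ddtw} it suffices to prove $S(\lambda):=\sum_{X}\chi_1(\lambda G(X))=0$ for every $\lambda\neq0$. Since $1+c\in\F_{2^m}$, the quadratic part of $G$ is $\F_{2^m}$-valued, so $\Tr\bigl(\lambda(1+c)\Trmn(X^{2^{2m}+1})\bigr)=\Tr\bigl(uX^{2^m+1}\bigr)$ with $u:=\Trmn(\lambda)(1+c)\in\F_{2^m}$, and hence $S(\lambda)=\chi_1(\text{const})\,\mathcal{W}_{uX^{2^m+1}}(v_\lambda)$, where $v_\lambda$ is the linear functional determined by $\Tr(\lambda L(X))=\Tr(v_\lambda X)$. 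The identity $L(1)=0$ gives $\Tr(v_\lambda)=\Tr(\lambda L(1))=0$, so whenever $\Trmn(\lambda)\neq0$ (hence $u\neq0$) Lemma~\ref{lemma1S1} yields $S(\lambda)=0$. For the remaining $\lambda\in\ker\Trmn\setminus\{0\}$ the quadratic part drops out and $S(\lambda)=q\,\chi_1(\text{const})\,[v_\lambda=0]$; a short dimension count (using $(\ker\Trmn)^{\perp}=\F_{2^m}$ together with $L(1)=0$) shows no such $\lambda$ makes $v_\lambda=0$, so $S(\lambda)=0$ there as well and $G$ is a permutation.

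For parts (2) and (3), with $c\in\F_{2^n}\setminus\F_{2^m}$, I would fix a solution $X_2$, set $Y_2:=X_2^{2^m}+X_2+\delta$, and count the admissible differences $U=X_1+X_2$. Subtracting two copies of \eqref{eq:Ered} and writing $V:=U^{2^m}+U$ gives the single Dembowski--Ostrom equation in $U$
\begin{equation}\label{eq:Dplan}
(1+c)\bigl(VY_2^{2^{2m}}+V^{2^{2m}}Y_2+V^{2^{2m}+1}\bigr)+AV^{2^{2m}}+A^{2^{2m}}V+(1+c)U=0.
\end{equation}
Raising \eqref{eq:Dplan} to the $2^m$- and $2^{2m}$-powers and combining the three copies to cancel the quadratic monomials, together with the trace condition $\Trmn((1+c)U)=0$ that the reduction forces, collapses \eqref{eq:Dplan} to Equation~\eqref{eq:case3} with $X$ replaced by $U$ (here $\delta'=1+\Trmn(\delta)$ and $a^{2^{2m}}+a^{2^m}=A^{2^m}$). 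When $\delta\in\Gamma_1$, i.e. $\Trmn(\delta)=1$ and $\delta'=0$, this reduces to $((1+c)U)^{2^{-1}}+A^{2^m}\Trmn(U)=0$, which after squaring and applying $\Trmn$ forces $\Trmn(U)\in\{0,1/C\}$ for an explicit $C\in\F_{2^m}$, giving at most two roots $U$ and hence $_c\Delta_F(a,b)\le2$; exhibiting a pair $(a,b)$ attaining two solutions then yields AP$c$N. When $\delta\notin\Gamma_1$, so $\delta'\neq0$, Lemma~\ref{lemma2S1} applies directly and bounds the number of such $U$ by four, giving $_c\Delta_F\le4$.

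The hard part will be the passage from \eqref{eq:Dplan} to Equation~\eqref{eq:case3} in the non-split case $c\notin\F_{2^m}$: since $(1+c)$ cannot be pulled through the relative trace $\Trmn$, one cannot simplify by applying $\Trmn$ directly and must instead run the Frobenius-elimination and the $\{1,A,A^2\}$-basis bookkeeping over $\F_{2^m}$ that underlies Lemma~\ref{lemma2S1}, while verifying that the restriction $\Trmn((1+c)U)=0$ is genuinely produced by the reduction. Secondary technical points are confirming, in part (1), that no nonzero $\lambda\in\ker\Trmn$ satisfies $v_\lambda=0$, and, in part (2), the lower bound showing the uniformity equals $2$ rather than being merely at most $2$.
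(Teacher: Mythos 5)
Your treatment of part (1) is sound and is essentially the paper's own argument: both reduce the count to a character sum whose quadratic part is $\Tr\bigl(uX^{2^m+1}\bigr)$ with $u\in\F_{2^m}$ and then invoke Lemma~\ref{lemma1S1}, after checking that the relevant linear functional has absolute trace zero (your observation $L(1)=0$ is a clean way to package the paper's computation of $\Tr(v)$); note only that your ``short dimension count'' for the $\lambda\in\ker\Trmn\setminus\{0\}$ with $v_\lambda=0$ must in fact be the explicit check that the unique nonzero root $\lambda=\bigl((1+c)(1+\Trmn(\delta))^2\bigr)^{-1}$ lies in $\F_{2^m}$ and so has nonzero relative trace. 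Your reduction to $a\notin\F_{2^m}$ via the permutation property of $F$ is also fine.

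The gap is in parts (2) and (3): the claimed passage from your difference equation to the equation of Lemma~\ref{lemma2S1} cannot work. Lemma~\ref{lemma2S1} is a statement about the \emph{Fourier-dual} variable: in the paper, the equation
\[
((1+c)\beta)^{2^{3m-1}}+(1+c)(1+\Trmn(\delta))\beta+(a^{2^{2m}}+a^{2^m})\Trmn(\beta)=0,\qquad \Trmn((1+c)\beta)=0,
\]
describes those characters $\beta$ for which the inner sum over $X$ in $S_0$ survives, i.e.\ the kernel of the \emph{adjoint} of the linear part of $F(X+a)+cF(X)$ intersected with the set where the quadratic part of $\Tr\bigl(\beta(1+c)F(X)\bigr)$ vanishes identically; the side condition $\Trmn((1+c)\beta)=0$ is exactly that vanishing condition and has no counterpart for the difference variable $U$. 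More decisively, your equation in $U$ cannot be linearized: it contains the quadratic monomial $(1+c)V^{2^{2m}+1}$, and raising to the $2^m$- and $2^{2m}$-powers produces the three distinct monomials $V^{2^{2m}+1}$, $V^{2^m+1}$, $V^{2^{2m}+2^m}$ with the nonzero coefficients $1+c$, $(1+c)^{2^m}$, $(1+c)^{2^{2m}}$, so no linear combination of the three Frobenius twists cancels all of them. In addition, the terms $(1+c)\bigl(VY_2^{2^{2m}}+V^{2^{2m}}Y_2\bigr)$ genuinely depend on the base solution: for fixed $V\neq0$ the map $Z\mapsto VZ^{2^{2m}}+V^{2^{2m}}Z$ has kernel $V\F_{2^m}$, which does not contain $\ker\Trmn$, so this term is non-constant as $Y_2$ varies over $\delta+\ker\Trmn$. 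Hence the set of admissible $U$ is the zero set of a $b$-dependent quadratic function, not of the $b$-independent linearized expression of Lemma~\ref{lemma2S1}, and that lemma cannot be applied to $U$. The repair is to run the count on the dual side as the paper does: express $_c\Delta_F(a,b)$ via Equation~\eqref{ddtw}, split over $\Trmn((1+c)\beta)=0$ or not, kill the second piece with Lemma~\ref{lemma1S1} (since $\Tr(v)=0$), and bound the first piece by the number of admissible $\beta$, which is exactly what Lemma~\ref{lemma2S1} provides; the exact value $2$ in part (2) then comes from exhibiting $b=c\delta^{2^{2m}+1}+F(a)$ with two such $\beta$'s, the lower-bound step you left open.
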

  \begin{proof} 
Clearly, by expanding the trinomial
\[
F(X)=X^{2^{2m}+2^m}+X^{2^{2m}+1}+X^{2^m+1}+\delta^{2^{2m}} X^{2^m}+ \delta X^{2^{2m}}+X^2+(\delta^{2^{2m}}+\delta+1)X + \delta^{2^{2m}+1}.
\]
Recall that, for any $(a, b) \in \F_{2^{n}} \times \F_{2^{n}}$,
the $c$-DDT entry $_c\Delta_F(a, b)$ is given by the number of solutions $X \in \F_{2^{n}}$ of the following equation,
\begin{equation}\label{eq1}
 F(X+a)+cF(X)=b,
\end{equation}
or, equivalently,
\begin{equation*}
 (1+c)F(X)+ (a^{2^{2m}} +a^{2^m})X+ (a^{2^{2m}} +a)X^{2^m}+ (a +a^{2^m})X^{2^{2m}} = b+F(a)+ \delta^{2^{2m}+1},\text{ that is,}
\end{equation*}
 \begin{equation*}
 (1+c)F(X)+ (a^{2^{2m}} +a^{2^m})X+ ((a^{2^{2m}} +a^{2^m})X)^{2^m}+ ((a^{2^{2m}} +a^{2^m})X)^{2^{2m}} = b+F(a)+\delta^{2^{2m}+1}.
\end{equation*}
This is the same as
 \begin{equation}
 \label{eq2}
 (1+c)F(X)+ \Trmn((a^{2^{2m}} +a^{2^m})X) = b+F(a)+ \delta^{2^{2m}+1}.
\end{equation}
Now, by using Equation~\eqref{ddtw}, the number of solutions $X \in \F_{2^n}$ of the above Equation~\eqref{eq2}, $_c\Delta_F(a,b)$, is given by 
$$\displaystyle{\frac{1}{2^n} \sum_{\beta \in \F_{2^n}} \sum_{X \in \F_{2^n}} (-1)^{\displaystyle{\Tr(\beta((1+c)F(X)+ \Trmn((a^{2^{2m}} +a^{2^m})X) + b+F(a)+ \delta^{2^{2m}+1}))}}},$$
or, equivalently,
\begin{equation*}
 \begin{split}
  _c\Delta_F(a,b) & = \dfrac{1}{2^n} \sum_{\beta \in \F_{2^n}} (-1)^{\displaystyle{\Tr\left(\beta \left(F(a)+b+\delta^{2^{2m}+1}\right)\right)}}\\
& \sum_{X \in \F_{2^n}} (-1)^{\displaystyle{\Tr\left(\beta \left((1+c)F(X)+ \Trmn\left((a^{2^{2m}} +a^{2^m})X\right)\right)\right)}}.
 \end{split}
\end{equation*}
Let $T_0 = \Tr(\beta (1+c)F(X))$ and $T_1 = \Tr(\beta (\Trmn((a^{2^{2m}} +a^{2^m})X)))$. Then the above equation becomes
\begin{equation}
\label{eq3}
 _c\Delta_F(a,b) = \frac{1}{2^n} \sum_{\beta \in \F_{2^n}} (-1)^{\displaystyle{\Tr\left(\beta \left(F(a)+b+\delta^{2^{2m}+1}\right)\right)}}\sum_{X \in \F_{2^n}} (-1)^{\displaystyle{T_0+T_1}}.
\end{equation}

\noindent
\textbf{Case 1.} Let $c\in \F_{2^m}\setminus \{1\}$ and $\delta \in \F_{2^n}$. To compute $T_0$ and $T_1$, we first write
\allowdisplaybreaks
\begin{equation*} 
\begin{split}
T_{1} & =  \Tr\left(\beta\Trmn((a^{2^{2m}} +a^{2^m})X)\right) \\
 & = \Tr\left(\beta((a^{2^{2m}} +a^{2^m})X + (a + a^{2^{2m}})X^{2^m} + (a^{2^m}+a)X^{2^{2m}})\right) \\
 & = \Tr\left((a^{2^{2m}} +a^{2^m})\Trmn(\beta)X\right),
\end{split}
\end{equation*}
and
\allowdisplaybreaks
\begin{align*} 
T_{0} & =  \Tr(\beta(1+c)F(X)) \\
 & = \Tr\left(\beta(1+c)(X^{2^{2m}+2^m}+X^{2^{2m}+1}+X^{2^m+1}+\delta^{2^{2m}} X^{2^m}+ \delta X^{2^{2m}}\right.\\
 &\qquad\qquad \left.+X^2+(\delta^{2^{2m}}+\delta+1)X + \delta^{2^{2m}+1})\right)\\
 & = \Tr\left(\beta(1+c) \delta^{2^{2m}+1}\right) + \Tr\left(\beta(1+c)(X^{2^{2m}+2^m}+X^{2^{2m}+1}+X^{2^m+1})\right) \\
 &\qquad\qquad + \Tr\left(\beta(1+c)(\delta^{2^{2m}} X^{2^m}+ \delta X^{2^{2m}}+X^2+(\delta^{2^{2m}}+\delta+1)X)\right)   \\
 & = \Tr\left(\beta(1+c) \delta^{2^{2m}+1}\right)  + \Tr\left((1+c) ( \beta + \beta^{2^m} +\beta^{2^{2m}})X^{2^m+1}\right)\\ 
  &\qquad\qquad + \Tr\left(\beta(1+c)(\delta^{2^{2m}} X^{2^m}+ \delta X^{2^{2m}}+X^2+(\delta^{2^{2m}}+\delta+1)X)\right)\\
  &=\Tr\left(\beta(1+c) \delta^{2^{2m}+1}\right) + \Tr\left((1+c)\Trmn(\beta) X^{2^m+1}\right)\\
  &\qquad\qquad  + \Tr\left( (1+c) \delta^{2^m} \Trmn(\beta) X+ \left(((1+c)\beta)^{2^{3m-1}}+\beta (1+c)(1+\Trmn(\delta))\right) X\right).
\end{align*}

Now Equation~\eqref{eq3} reduces to
\begin{equation*}
 _c\Delta_F(a,b) = \frac{1}{2^n} \sum_{\beta \in \F_{2^n}} (-1)^{\displaystyle{\Tr(\beta(F(a)+b+c\delta^{2^{2m}+1})}}\sum_{X \in \F_{2^n}} (-1)^{\displaystyle{\Tr(uX^{2^m+1}+vX)}},
\end{equation*}
where $u = (1+c)\Trmn(\beta)$ and $v= ((1+c)\delta^{2^m}  +a^{2^{2m}} +a^{2^m})\Trmn(\beta)+ ((1+c)\beta)^{2^{3m-1}}+ \beta(1+c)(1+\Trmn(\delta)).$ 

Further, splitting the above sum depending on whether $\Trmn(\beta)$ is~$0$ or not, we get
\allowdisplaybreaks
\begin{align*}
 2^n\, _c\Delta_F(a,b) & = \sum_{\substack{\beta \in \F_{2^n} \\ \Trmn(\beta)=0}} (-1)^{\displaystyle{\Tr(\beta(F(a)+b+c\delta^{2^{2m}+1}))}}\\
 &\qquad\qquad\sum_{X \in \F_{2^n}} (-1)^{\displaystyle{\Tr\left(\left(\left((1+c)\beta\right)^{2^{3m-1}}+(1+c)(1+\Trmn(\delta))\beta\right)X\right)}} \\
 & \quad +  \sum_{\substack{\beta \in \F_{2^n} \\ \Trmn(\beta)\neq 0}} (-1)^{\displaystyle{\Tr(\beta(F(a)+b+c\delta^{2^{2m}+1}))}}\sum_{X \in \F_{2^n}} (-1)^{\displaystyle{\Tr(uX^{2^m+1}+vX)}}\\
 &= S_0 + S_1,
\end{align*}
where $S_0,S_1$ are the two inner sums.

Now, to compute $S_0$, we write
\allowdisplaybreaks
\begin{align*}
 S_0 & = \sum_{\substack{\beta \in \F_{2^n} \\ \Trmn(\beta)=0}} (-1)^{\displaystyle{\Tr(\beta(F(a)+b+c\delta^{2^{2m}+1}))}}\\
 & \quad\quad\quad \sum_{X \in \F_{2^n}} (-1)^{\displaystyle{\Tr\left(\left(\left((1+c)\beta\right)^{2^{3m-1}}+(1+c)(1+\Trmn(\delta))\beta\right)X\right)}} \\
 & = 2^n + \sum_{\substack{{\beta \in \F_{2^n}^{*}}\\ \Trmn(\beta)=0}} (-1)^{\displaystyle{\Tr(\beta(F(a)+b+c\delta^{2^{2m}+1}))}}\\
 &\qquad\qquad\qquad \sum_{X \in \F_{2^n}} (-1)^{\displaystyle{\Tr((((1+c)\beta)^{2^{3m-1}}+(1+c)(1+\Trmn(\delta))\beta)X)}} \\
 & = 2^n.
\end{align*}
The reason for the above sum $S_0$ being $2^n$ is as follows. Since $((1+c)\beta)^{2^{3m-1}}+(1+c)(1+\Trmn(\delta))\beta)X$ is a permutation over $\F_{2^n}$, when $\Trmn(\delta)=1$, thus making the inner sum vanish. When $\Trmn(\delta) \neq 1$, then $((1+c)\beta)^{2^{3m-1}}+(1+c)(1+\Trmn(\delta))\beta$ vanishes for two values of $\beta$, namely, $\beta=0$ and $\beta =\dfrac{1}{(1+c)(1+\Trmn(\delta)^2)}$. Since for the later $\beta$, $\Trmn (\beta)=\Trmn\left(\dfrac{1}{(1+c)(1+\Trmn(\delta)^2)}\right) \neq 0$, we can exclude this $\beta$ from the inner sum of $S_0$. 

Next,
\allowdisplaybreaks
\begin{align*}
 S_1 & = \sum_{\substack{\beta \in \F_{2^n} \\ \Trmn(\beta)\neq 0}} (-1)^{\displaystyle{\Tr(\beta(F(a)+b+c\delta^{2^{2m}+1}}))}\sum_{X \in \F_{2^n}} (-1)^{\displaystyle{\Tr(uX^{2^m+1}+vX)}}\\
 & = \sum_{\substack{\beta \in \F_{2^n} \\ \Trmn(\beta)\neq 0}} (-1)^{\displaystyle{\Tr(\beta(F(a)+b+c\delta^{2^{2m}+1}}))}\mathcal{W}_G(v),
\end{align*}
where $\mathcal{W}_G(v)$ is the Walsh coefficient of the trace of the function $G : X \mapsto u X^{2^m+1}$. From Lemma~\ref{lemma1S1}, we know that $\mathcal{W}_G(v)=0$ if $\Tr(v)=0$. It is easy to see that 
 \begin{align*}
 \Tr(v)=&  \Tr\left( ((1+c)\delta^{2^m}  +a^{2^{2m}} +a^{2^m})\Trmn(\beta)+ ((1+c)\beta)^{2^{3m-1}}+\beta(1+c)(1+\Trmn(\delta))\right)\\
 =&0.
  \end{align*}
Hence the claim is shown.

\noindent
\textbf{Case 2.}
 Let $c\in\F_{2^n} \setminus \F_{2^m}$ and $\delta \in \Gamma_1$. Due to the restriction on $c$, the expression for $T_0$ will change as follows, while the expression for $T_1$ will stay invariant. Thus,
 \allowdisplaybreaks
\begin{align*} 
T_{0} & =  \Tr(\beta(1+c)F(X)) \\
 & = \Tr\left(\beta(1+c) \delta^{2^{2m}+1}\right)  + \Tr\left((1+c) ( \beta + \beta^{2^m} +\beta^{2^{2m}})X^{2^m+1}\right)\\ 
  &\qquad\qquad + \Tr\left(\beta(1+c)(\delta^{2^{2m}} X^{2^m}+ \delta X^{2^{2m}}+X^2+(\delta^{2^{2m}}+\delta+1)X)\right)\\
  &=\Tr\left(\beta(1+c) \delta^{2^{2m}+1}\right) + \Tr\left(\Trmn((1+c)\beta) X^{2^m+1}\right)\\
  &\qquad\qquad  + \Tr\left(  \delta^{2^m} \Trmn((1+c)\beta) X+ \left(((1+c)\beta)^{2^{3m-1}}+\beta (1+c)(1+\Trmn(\delta))\right) X\right).
\end{align*}

Now Equation~\eqref{eq3} reduces to
\begin{equation*}
 _c\Delta_F(a,b) = \frac{1}{2^n} \sum_{\beta \in \F_{2^n}} (-1)^{\displaystyle{\displaystyle \Tr(\beta(F(a)+b+c\delta^{2^{2m}+1})}}\sum_{X \in \F_{2^n}} (-1)^{\displaystyle{\Tr(uX^{2^m+1}+vX)}},
\end{equation*}
where 
\allowdisplaybreaks
\begin{align*}
u &=\Trmn((1+c)\beta),\\
v&=\delta^{2^m}\Trmn((1+c)\beta) + ((1+c)\beta)^{2^{3m-1}}+\beta(1+c)(1+\Trmn(\delta)) +(a^{2^{2m}} +a^{2^m}) \Trmn(\beta)\\
&=\delta^{2^m}\Trmn((1+c)\beta) + ((1+c)\beta)^{2^{3m-1}}+(a^{2^{2m}} +a^{2^m}) \Trmn(\beta).
\end{align*}
Now, we distribute the above sum in two parts depending on whether $\Trmn((1+c)\beta)=0$, or $\Trmn((1+c)\beta) \neq 0$, and we get
\allowdisplaybreaks
\begin{align*}
 2^n\, _c\Delta_F(a,b)  = & \sum_{\substack{\beta \in \F_{2^n} \\ \Trmn((1+c)\beta)=0}} (-1)^{\Tr(\beta(F(a)+b+c\delta^{2^{2m}+1}))} \\
 & \sum_{X \in \F_{2^n}} (-1)^{\Tr\left({\displaystyle{(((1+c)\beta)^{2^{3m-1}} + (a^{2^{2m}} +a^{2^m}) \Trmn(\beta))X}}\right)} \\
  & +   \sum_{\substack{\beta \in \F_{2^n} \\ \Trmn((1+c)\beta)\neq 0}} (-1)^{\Tr(\beta(F(a)+b+c\delta^{2^{2m}+1}))}\sum_{X \in \F_{2^n}} (-1)^{\displaystyle{\Tr(uX^{2^m+1}+vX)}} \\
 = & S_0 + S_1,
\end{align*}
where
\allowdisplaybreaks
\begin{align*}
S_0 = & \displaystyle \sum_{\substack{\beta \in \F_{2^n} \\ \Trmn((1+c)\beta)=0}}(-1)^{\displaystyle{\Tr(\beta(F(a)+b+c\delta^{2^{2m}+1}))}} \\
&\sum_{X \in \F_{2^n}} (-1)^{\displaystyle{\Tr((((1+c)\beta)^{2^{3m-1}}+(a^{2^{2m}} +a^{2^m}) \Trmn(\beta))X)}},\\
S_1 = & \displaystyle \sum_{\substack{\beta \in \F_{2^n} \\ \Trmn((1+c)\beta)\neq 0}} (-1)^{\displaystyle{\Tr(\beta(F(a)+b+c\delta^{2^{2m}+1}))}}\sum_{X \in \F_{2^n}} (-1)^{\displaystyle{\Tr(uX^{2^m+1}+vX)}}.
\end{align*}

We will now investigate two situations in order to calculate the sum $S_0$.  We observe that $S_0 = 2^n$ for those $a \in \F_{2^n}$ that satisfy $a^{2^{2m}} +a^{2^m}=0$. We will therefore assume that $a^{2^{2m}} +a^{2^m}\neq0$. We will try to find those $\beta$'s for which $(((1+c)\beta)^{2^{3m-1}} + (a^{2^{2m}} +a^{2^m}) \Trmn(\beta))X$ is a permutation; or equivalently, to find those  $\beta$'s for which $((1+c)\beta)^{2^{3m-1}} + (a^{2^{2m}} +a^{2^m}) \Trmn(\beta)\neq 0$, so that the inner sum in $S_0$ becomes zero.
If $H(\beta):=\dfrac{((1+c)\beta)^{2^{3m-1}}}{(a^{2^{2m}} +a^{2^m})} + \Trmn(\beta) = 0$, then we have,
$$H(\beta)+\Trmn(H(\beta))= \dfrac{((1+c)\beta)^{2^{3m-1}}}{(a^{2^{2m}} +a^{2^m})} +  \Trmn(\dfrac{((1+c)\beta)^{2^{3m-1}}}{(a^{2^{2m}} +a^{2^m})})=0. $$
With $z=\dfrac{((1+c)\beta)^{2^{3m-1}}}{(a^{2^{2m}} +a^{2^m})}$, we have $\Trmn(z)=z$, i.e., $z^{2^m}=z^{2^{2m}}$.

Also, $\beta = \dfrac{((a^{2^{2m}} +a^{2^m})z)^2}{1+c}$ and hence we have $$H(\beta)= z+ \dfrac{((a^{2^{2m}} +a^{2^m})z)^2}{1+c} + \dfrac{((a^{2^{2m}} +a^{2^m})z)^{2^{m+1}}}{(1+c)^{2^m}}+ \dfrac{((a^{2^{2m}} +a^{2^m})z)^{2^{2m+1}}}{(1+c)^{2^{2m}}}.$$ Because $z^{2^m}=z^{2^{2m}}$, the above equation further reduces to, 
\begin{equation}
\label{eqz}
 H(\beta)=z+ \Trmn \left( \dfrac{(a^{2^{2m}} +a^{2^m})^2}{1+c} \right) z^2 =0.
 \end{equation}

Note that if $\Trmn \left( \dfrac{(a^{2^{2m}} +a^{2^m})^2}{1+c} \right)=0$, Equation~\eqref{eqz} has a unique solution $z=0$. Consequently, $\beta=0$ is the only solution for $H(\beta)=0$, and in this case, it turns out that $S_0=2^n$.

If $\Trmn \left( \dfrac{(a^{2^{2m}} +a^{2^m})^2}{1+c} \right) \neq 0$, Equation~\eqref{eqz} has exactly two solutions, namely, $z_1=0$ and $z_2=\left(\Trmn \left (\dfrac{(a^{2^{2m}} +a^{2^m})^2}{1+c} \right)\right)^{-1}$. Equivalently, there are exactly two solutions for $H(\beta)=0$ given by $\beta_1 =0$ and $\beta_2 = \dfrac{(a^{2^{2m}} +a^{2^m})^2}{1+c} \left( \Trmn \left (\dfrac{(a^{2^{2m}} +a^{2^m})^2}{1+c} \right) \right)^{-2}$.

Thus for $a,b \in \F_{2^n}$, together with $a^{2^m}+a^{2^{2m}} \neq 0$, we have 
$$S_0 = 2^n\left(1 + (-1)^{\Tr(\beta_2 (F(a)+b+c\delta^{2^{2m}+1}))}\right).$$ 
Observe that if we take $b=c+F(a)$, then we have  $\Tr(\beta_2(F(a)+b+c)) =0$. Hence, $S_0=2^{n+1}$ for $(a,b) \in \F_{2^{n}} \times \F_{2^n}$, with $a^{2^m}+a^{2^{2m}} \neq 0$ and $b=c\delta^{2^{2m}+1}+F(a)$. Next,
\allowdisplaybreaks
\begin{align*}
 S_1 & = \sum_{\substack{\beta \in \F_{2^n} \\ \Trmn(\beta)\neq 0}} (-1)^{\displaystyle{\Tr(\beta(F(a)+b+c\delta^{2^{2m}+1}}))}\sum_{X \in \F_{2^n}} (-1)^{\displaystyle{\Tr(uX^{2^m+1}+vX)}}\\
 & = \sum_{\substack{\beta \in \F_{2^n} \\ \Trmn(\beta)\neq 0}} (-1)^{\displaystyle{\Tr(\beta(F(a)+b+c\delta^{2^{2m}+1}}))}\mathcal{W}_G(v),
\end{align*}
where $\mathcal{W}_G(v)$ is the Walsh coefficient of the trace of the function $G : X \mapsto u X^{2^m+1}$. Since $\Tr(v) = \Tr\left( \delta^{2^m}\Trmn((1+c)\beta) + ((1+c)\beta)^{2^{3m-1}} + (a^{2^{2m}} +a^{2^m}) \Trmn(\beta)\right)=0$ and $u \in \F_{2^m}$. Thus, by using Lemma~\ref{lemma1S1}, one can see that $S_1=0$. Hence, $F$ is AP$c$N in this case.

\noindent \textbf{Case 3.}  Let $c\in\F_{2^n} \setminus \F_{2^m}$ and $\delta \in \F_{2^n} \setminus \Gamma_1$.
Consider the following equation 
\begin{equation*}
 _c\Delta_F(a,b) = \frac{1}{2^n} \sum_{\beta \in \F_{2^n}} (-1)^{\displaystyle{\displaystyle \Tr(\beta(F(a)+b+c\delta^{2^{2m}+1})}}\sum_{X \in \F_{2^n}} (-1)^{\displaystyle{\Tr(uX^{2^m+1}+vX)}},
\end{equation*}
where 
\allowdisplaybreaks
\begin{align*}
u &=\Trmn((1+c)\beta),\\
v&=\delta^{2^m}\Trmn((1+c)\beta) + ((1+c)\beta)^{2^{3m-1}}+\beta(1+c)(1+\Trmn(\delta)) +(a^{2^{2m}} +a^{2^m}) \Trmn(\beta).
\end{align*}
Similar to the previous case, we split the above sum in two sums, namely, $S_0$ and $S_1$ depending upon $\Trmn((1+c)\beta)=0$ and $\Trmn((1+c)\beta) \neq 0$, respectively. 
Precisely,
\allowdisplaybreaks
\begin{align*}
S_0 = & \displaystyle \sum_{\substack{\beta \in \F_{2^n} \\ \Trmn((1+c)\beta)=0}}(-1)^{\displaystyle{\Tr(\beta(F(a)+b+c\delta^{2^{2m}+1}))}} \\
&\sum_{X \in \F_{2^n}} (-1)^{\displaystyle{\Tr((((1+c)\beta)^{2^{3m-1}}+\beta(1+c)(1+\Trmn(\delta))+(a^{2^{2m}} +a^{2^m}) \Trmn(\beta))X)}},\\
\text{and}\\
S_1 = & \displaystyle \sum_{\substack{\beta \in \F_{2^n} \\ \Trmn((1+c)\beta)\neq 0}} (-1)^{\displaystyle{\Tr(\beta(F(a)+b+c\delta^{2^{2m}+1}))}}\sum_{X \in \F_{2^n}} (-1)^{\displaystyle{\Tr(uX^{2^m+1}+vX)}}.
\end{align*}
It is clear from Lemma~\ref{lemma2S1} that the following equation
$$((1+c)\beta)^{2^{3m-1}}+\beta(1+c)(1+\Trmn(\delta))+(a^{2^{2m}} +a^{2^m}) \Trmn(\beta)=0$$ has at most four solutions in $\F_{2^n}$ and as a consequence, the maximum value that $S_0$ can attain is $2^{n+2}$.

Now, we consider $S_1$

\begin{align*}
 S_1 = & \displaystyle \sum_{\substack{\beta \in \F_{2^n} \\ \Trmn((1+c)\beta)\neq 0}} (-1)^{\displaystyle{\Tr(\beta(F(a)+b+c\delta^{2^{2m}+1}))}}\mathcal{W}_G(v),
\end{align*}
where $\mathcal{W}_G(v)$ is the Walsh coefficient of the trace of the function $G : X \mapsto u X^{2^m+1}$.

Since
\begin{align*}
 \Tr(v) & = \Tr\left( \delta^{2^m}\Trmn((1+c)\beta) + ((1+c)\beta)^{2^{3m-1}}\right) \\
  &   \qquad \qquad + \Tr\left(\beta(1+c)(1+\Trmn(\delta))+(a^{2^{2m}} +a^{2^m}) \Trmn(\beta)\right), \\
  & = \Tr\left( \delta^{2^m}(1+c)\beta+(\delta(1+c)\beta)^{2^m} +\delta^{2^m}((1+c)\beta)^{2^{2m}}+ ((1+c)\beta)^{2^{3m-1}}\right)\\
  & \qquad \qquad + \Tr\left(\beta(1+c)+ \delta\beta(1+c)+\delta^{2^m}\beta(1+c)+\delta^{2^{2m}}\beta(1+c) \right) =0,
\end{align*}
and $u \in \F_{2^m}$. It follows from Lemma~\ref{lemma1S1} that $S_1=0$.
\end{proof}

Next, we consider the permutation polynomial $F(X)=(X^{2^m}+X+\delta)^{2^{2m-1}+2^{m-1}}+X$ over  $\F_{2^{n}}$, where $n=3m, \delta \in \F_{2^n}$ and $m \not\equiv1 \pmod 3$. This is actually obtained by setting $i=2$ in part 2 of Lemma~\ref{L00}. The following theorem discusses the $c$-differential uniformity of the permutation $F$ depending on where the values of $c$ and $\delta$ lie.
\begin{thm} 
\label{T2}
  Let $F(X)=(X^{2^m}+X+\delta)^{2^{2m-1}+2^{m-1}}+X$ over $\F_{2^{n}}$, where $n=3m, \delta\in\F_{2^n}$ and $m \not\equiv1 \pmod 3$. Let $\Gamma_0 :=\{\delta \in \F_{2^n}: \Trmn (\delta)=0\}$. Then$:$
  \begin{enumerate}
   \item[$(1)$] $F$ is  P$c$N for all $c \in \F_{2^m} \setminus \{1\}$ and for all $\delta\in\F_{2^n};$
  \item[$(2)$] $F$ is AP$c$N for all $c \in \F_{2^n}\setminus\F_{2^m}$ and for all $\delta \in \Gamma_0;$
   \item[$(3)$] $F$ is of $c$-differential uniformity $\leq 4$ for all $c \in \F_{2^n}\setminus\F_{2^m}$ and for all $\delta \in \F_{2^n} \setminus \Gamma_0$.
   \end{enumerate}
  \end{thm}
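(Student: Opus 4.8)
The plan is to follow the blueprint of the proof of Theorem~\ref{T1} almost verbatim, exploiting the factorization of the exponent $s=2^{2m-1}+2^{m-1}=2^{m-1}(2^m+1)$. Writing $V=(X^{2^m}+X+\delta)^{2^m+1}$, this lets us express
\[
F(X)=V^{2^{m-1}}+X,
\]
a function of algebraic degree $2$. As in Theorem~\ref{T1}, I would set up $_c\Delta_F(a,b)$ through the Weil-sum expression~\eqref{ddtw} applied to the rearranged equation $F(X+a)+cF(X)=(1+c)F(X)+D_F(X,a)=b$, where $D_F(X,a)=F(X+a)+F(X)$. A pleasant observation is that the $X$-linear part of $D_V(X,a)$ equals $\Trmn((a^{2^{2m}}+a^{2^m})X)$, \emph{exactly} the expression occurring in Theorem~\ref{T1}; hence the $X$-dependent part of $D_F(X,a)$ is just its outer Frobenius twist $\bigl(\Trmn((a^{2^{2m}}+a^{2^m})X)\bigr)^{2^{m-1}}$. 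The crucial tool is the Frobenius-invariance of the absolute trace, $\Tr(\gamma W^{2^{m-1}})=\Tr(\gamma^{2^{2m+1}}W)$: applying it to the $V^{2^{m-1}}$ part of $T_0=\Tr(\beta(1+c)F(X))$ with $\gamma=\beta(1+c)$, and to $T_1=\Tr(\beta D_F(X,a))$, converts every sum into one expressed through the Gold-type expansion of $V$, with $\beta$ replaced by the twisted parameter $\eta=(\beta(1+c))^{2^{2m+1}}$.

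Next I would collect, inside $T_0$, the coefficient of the Gold monomial $X^{2^m+1}$ and the affine-in-$X$ coefficient. After the substitution $\eta=(\beta(1+c))^{2^{2m+1}}$ and reduction of exponents modulo $3m$, the three quadratic terms $X^{2^{2m}+2^m},X^{2^{2m}+1},X^{2^m+1}$ of $V$ merge into a single Gold term with coefficient $u=\Trmn(\eta)=(\Trmn(\beta(1+c)))^{2^{2m+1}}\in\F_{2^m}$, while the remaining (linearized) terms of $V$, the explicit $+X$, and $T_1$ produce a contribution $\Tr(vX)$. I then split the outer $\beta$-sum into $S_0$ (those $\beta$ with $u=0$, i.e.\ $\Trmn(\beta(1+c))=0$) and $S_1$ (those with $u\neq0$), exactly as in Theorem~\ref{T1}. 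For $S_1$ the inner sum is the Walsh coefficient $\mathcal{W}_G(v)$ of $G:X\mapsto uX^{2^m+1}$ with $u\in\F_{2^m}^{*}$; since $\gcd(2^m+1,2^{3m}-1)=1$ holds unconditionally, Lemma~\ref{lemma1S1} forces $\mathcal{W}_G(v)=0$ once $\Tr(v)=0$ is checked, so $S_1=0$ in all three cases.

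The three cases then separate as before. For $c\in\F_{2^m}\setminus\{1\}$ the scalar $1+c$ lies in $\F_{2^m}$, the linear map governing $S_0$ is a permutation except at $\beta=0$ and one further $\beta$ with $\Trmn(\beta)\neq0$ (excluded by the trace restriction), giving $S_0=2^n$ and hence P$c$N, proving part~(1). For $c\in\F_{2^n}\setminus\F_{2^m}$ the affine coefficient of $X$ in $S_0$ now involves $\Trmn(\delta)$ \emph{without} the extra $+1$ that appeared in Theorem~\ref{T1}: because of the Frobenius twist the explicit $+X$ enters as the linearized term $\Tr(\beta(1+c)X)$ rather than as a constant shift of the linear coefficient, so the critical threshold moves from $\Trmn(\delta)=1$ to $\Trmn(\delta)=0$, which is precisely why $\Gamma_0$ replaces $\Gamma_1$. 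Consequently, for $\delta\in\Gamma_0$ the relevant coefficient vanishes, the associated equation has at most two solutions, $S_0\leq2^{n+1}$, and $F$ is AP$c$N, proving part~(2); for $\delta\in\F_{2^n}\setminus\Gamma_0$ one is left with bounding the number of solutions of an equation of the same shape as the one treated in Lemma~\ref{lemma2S1}.

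The main obstacle is this final count, needed for part~(3): one must show the $S_0$-equation arising for $\delta\notin\Gamma_0$ has at most four solutions in $\F_{2^n}$. I expect this equation to be identical in form to that of Lemma~\ref{lemma2S1} but with $1+\Trmn(\delta)$ replaced by $\Trmn(\delta)$; re-running the multivariate basis argument of that lemma under the hypothesis $\Trmn(\delta)\neq0$ (playing the role that $\Trmn(\delta)\neq1$ played there), i.e.\ writing $u=u_1A+u_2A^2$ in the basis $\{1,A,A^2\}$ of $\F_{2^{3m}}$ over $\F_{2^m}$ and reducing to a cubic over $\F_{2^m}$ with at most three roots, should yield the bound $S_0\leq2^{n+2}$. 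Since $S_1=0$ throughout, this gives $_c\Delta_F\leq4$. The most delicate bookkeeping will be confirming that the two hypotheses of Lemma~\ref{lemma1S1}, namely $u\in\F_{2^m}$ and $\Tr(v)=0$, survive the Frobenius reduction $\beta\mapsto(\beta(1+c))^{2^{2m+1}}$ intact in each case, as these are what make every $S_1$ term vanish; the condition $m\not\equiv1\pmod 3$ is used only to guarantee, via Lemma~\ref{L00}, that $F$ is a permutation to begin with.
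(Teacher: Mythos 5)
Your plan is essentially the paper's proof: the same Weil-sum setup, the same Frobenius trick to strip the outer exponent $2^{m-1}$ (the paper squares the whole argument of the trace, sending $\beta\mapsto\beta^2$, while you raise the coefficient to the $2^{2m+1}$ power; these give the same $u=\Trmn(\beta^2(1+c)^2)$), the same split into $S_0$/$S_1$ according to whether $\Trmn((1+c)\beta)$ vanishes, the same use of Lemma~\ref{lemma1S1} to kill $S_1$, and the same reduction of part~(3) to a four-root count of the type handled in Lemma~\ref{lemma2S1} with $1+\Trmn(\delta)$ replaced by $\Trmn(\delta)$. Your explanation of why $\Gamma_1$ becomes $\Gamma_0$ also matches what actually happens in the expansion.

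There is, however, one concrete error: your closing claim that ``the condition $m\not\equiv1\pmod 3$ is used only to guarantee, via Lemma~\ref{L00}, that $F$ is a permutation'' is false, and it undercuts your own Case~1. In Case~1 the inner sum of $S_0$ is governed by the roots of the linearized polynomial $\beta\mapsto\beta^{2^{2m}}+\Trmn(\delta)(1+c)\beta^{2^{m+1}}$ among the $\beta$ with $\Trmn(\beta)=0$. A nonzero root satisfies $\beta^{2^{m+1}(2^{m-1}-1)}=\Trmn(\delta)(1+c)$, and the number of such roots is controlled by $\gcd\bigl(2^{m-1}-1,\,2^{3m}-1\bigr)=2^{\gcd(m-1,3m)}-1$, which equals $1$ precisely when $m\not\equiv1\pmod 3$ and equals $7$ otherwise. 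So your assertion that the map fails to be injective ``at $\beta=0$ and one further $\beta$'' is exactly where the hypothesis is consumed (this is also where the paper invokes it); without it the kernel has eight elements and the P$c$N conclusion of part~(1) would not follow from this argument. You also still owe the check that the unique further root has $\Trmn(\beta_0)\neq0$: since the exponent map is a bijection of $\F_{2^{3m}}^{*}$ under the hypothesis and $(\Trmn(\delta)(1+c))^{-1}\in\F_{2^m}^{*}$ is visibly a root, the root lies in $\F_{2^m}^{*}$, whence $\Trmn(\beta_0)=3\beta_0=\beta_0\neq0$ and it is excluded from $S_0$. With that repair (and the routine verification, which you correctly flag, that $u\in\F_{2^m}$ and $\Tr(v)=0$ survive the Frobenius relabelling), your outline coincides with the paper's argument.
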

\begin{proof} 
 Clearly, $F(X)=X^{2^{3m-1}+2^{2m-1}}+X^{2^{2m-1}+2^{m-1}}+X^{2^{m-1}+2^{3m-1}}+\delta^{2^{m-1}}X^{2^{2m-1}}+\delta^{2^{m-1}}X^{2^{3m-1}}+ X^{2^{2m}}+\delta^{2^{2m-1}}X^{2^{m-1}}+\delta^{2^{2m-1}}X^{2^{2m-1}}+X +\delta^{2^{m-1} +2^{2m-1}} = \Tr(X^{2^{2m-1}+2^{m-1}})+ X^{2^{2m}}+\delta^{2^{m-1}}X^{2^{3m-1}}+(\delta^{2^{m-1}}+\delta^{2^{2m-1}})X^{2^{2m-1}}+\delta^{2^{2m-1}}X^{2^{m-1}}+X +\delta^{2^{m-1} +2^{2m-1}}$.\\
Recall that, for any $(a, b) \in \F_{2^{n}} \times \F_{2^{n}}$
the $c$-DDT entry $_c\Delta_F(a, b)$ is given by the number of solutions $X \in \F_{2^{n}}$ of the following equation.
\begin{equation*}
 F(X+a)+cF(X)=b,
\end{equation*}
or equivalently,
 \begin{equation}
 \label{eq3.1}
 (1+c)F(X)+ \Trmn((a^{2^{2m-1}} +a^{2^{3m-1}})X^{2^{m-1}}) = b+F(a)+ \delta^{2^{m-1} +2^{2m-1}}.
\end{equation}
It follows from Equation~\eqref{ddtw}, the number of solutions $X \in \F_{2^n}$ of the above equation is given by
 \allowdisplaybreaks
\begin{align*}
_c\Delta_F(a,b) &= \frac{1}{2^n} \sum_{\beta \in \F_{2^n}} \sum_{X \in \F_{2^n}} (-1)^{\displaystyle{\Tr\left(\beta((1+c)F(X)+ \Trmn((a^{2^{2m-1}} +a^{2^{3m-1}})X^{2^{m-1}})\right)}}\\
& \quad\quad\quad\quad(-1)^{\displaystyle{\Tr\left(F(a)+b+\delta^{2^{m-1} +2^{2m-1}}\right)}}\\
= & \frac{1}{2^n}  \sum_{\beta \in \F_{2^n}} (-1)^{\displaystyle{\Tr(\beta(F(a)+b+\delta^{2^{m-1} +2^{2m-1}}))}}   \\
 & \qquad  \sum_{X \in \F_{2^n}} (-1)^{\displaystyle{\Tr (\beta^2((1+c)^2F(X)^2+ \Trmn((a^{2^{2m}} +a)X^{2^m})))}} \\
  =& \frac{1}{2^n}  \sum_{\beta \in \F_{2^n}}(-1)^{\displaystyle{\Tr(\beta(F(a)+b+c\delta^{2^{m-1} +2^{2m-1}}))}}  \\
 & \sum_{X \in \F_{2^n}} (-1)^{\displaystyle{\Tr(\beta^2(1+c)^2 ((\Trmn( X^{2^m+1})+X^{2^{2m+1}}+(\delta^{2^m}+\delta^{2^{2m}})X^{2^{2m}}})} \\
 & \quad \quad \quad \quad (-1)^{\displaystyle{\Tr((\beta(1+c))^2(\delta^{2^{2m}} X^{2^m}+X^2+\delta^{2^m}X)+\beta^2 \Trmn((a^{2^{2m}} +a^{2^m})X))}} \\
 = & \frac{1}{2^n}  \sum_{\beta \in \F_{2^n}}(-1)^{\displaystyle{\Tr(\beta(F(a)+b+c\delta^{2^{m-1} +2^{2m-1}})}}  \sum_{X \in \F_{2^n}} (-1)^{\displaystyle{\Tr(uX^{2^m+1}+vX)}}.
\end{align*}
Here, $u = \Trmn(\beta^2(1+c)^2 )$ and $v= \beta(1+c)+\delta^{2^m} (\beta(1+c))^2+ (\beta(1+c))^{2^m}+(\delta^{2^{2m}}+\delta)(\beta(1+c))^{2^{m+1}}+\delta^{2^m}(\beta(1+c))^{2^{2m+1}}+(a^{2^m}+a^{2^{2m}})\Trmn(\beta^2)$. 

\noindent

\textbf{Case 1.} Let $c \in \F_{2^m} \setminus \{1\}$ and $\delta \in \F_{2^n}$. Then $u=(1+c)^2 \Trmn(\beta^2)$ and $v=(1+c)(\beta +\beta^{2^m}) +\delta^{2^m}(1+c)^2(\beta^2 + \beta^{2^{2m+1}})+(\delta^{2^{2m}}+\delta)(1+c)^2\beta^{2^{m+1}}+(a^{2^m}+a^{2^{2m}})\Trmn(\beta^2)$. Further, splitting the sum for $\Trmn(\beta)=0$ and $\Trmn(\beta)\neq0$, we have $S_0$ and $S_1$, defined as below:
 \allowdisplaybreaks
\begin{align*}
 S_0  = & \sum_{\substack{\beta \in \F_{2^n} \\ \Trmn(\beta)=0}} (-1)^{\displaystyle{\Tr(\beta(F(a)+b+c\delta^{2^{m-1} +2^{2m-1}}))}}
  \sum_{X \in \F_{2^n}} (-1)^{\displaystyle{\Tr((1+c)(\beta +\beta^{2^m})X)}}
  \\
  & \qquad (-1)^{\displaystyle{\Tr\left((\delta^{2^m}(1+c)^2(\beta^2 + \beta^{2^{2m+1}})+(\delta^{2^{2m}}+\delta)(1+c)^2\beta^{2^{m+1}})X\right)}}\\
  = &\sum_{\substack{\beta \in \F_{2^n} \\ \Trmn(\beta)=0}} (-1)^{\displaystyle{\Tr(\beta(F(a)+b+c\delta^{2^{m-1} +2^{2m-1}}))}}\\
  &\sum_{X \in \F_{2^n}} (-1)^{\displaystyle{\Tr(((1+c)(\beta +\beta^{2^m}) +\Trmn(\delta)(1+c)^2\beta^{2^{m+1}})X)}}\\
  = & \sum_{\substack{\beta \in \F_{2^n} \\ \Trmn(\beta)=0}} (-1)^{\displaystyle{\Tr(\beta(F(a)+b+c\delta^{2^{m-1} +2^{2m-1}}))}}\\
  &\sum_{X \in \F_{2^n}} (-1)^{\displaystyle{\Tr((1+c)(\beta^{2^{2m}} +\Trmn(\delta)(1+c)\beta^{2^{m+1}})X)}}
  =  2^n,
\end{align*}
since $(\Trmn(\delta)(1+c)\beta^{2^{m+1}}+\beta^{2^{2m}})X$ is a permutation of $\F_{2^n}$ as $m \not\equiv 1 \pmod 3$, and hence the inner sum in $S_0$ is zero except for $\beta=0$.
 \allowdisplaybreaks
\begin{align*}
 S_1 & = \sum_{\substack{\beta \in \F_{2^n} \\ \Trmn(\beta)\neq 0}} (-1)^{\displaystyle{\Tr(\beta(F(a)+b+c\delta^{2^{m-1} +2^{2m-1}}))}}\sum_{X \in \F_{2^n}} (-1)^{\displaystyle{\Tr(uX^{2^m+1}+vX)}}\\
 & = \sum_{\substack{\beta \in \F_{2^n} \\ \Trmn(\beta)\neq 0}} (-1)^{\displaystyle{\Tr(\beta(F(a)+b+c\delta^{2^{m-1} +2^{2m-1}}))}}\sum_{X \in \F_{2^n}} (-1)^{\displaystyle{\Tr(X^{2^m+1}+(\gamma^{-1})vX)}}\\
 & = \sum_{\substack{\beta \in \F_{2^n} \\ \Trmn(\beta)\neq 0}} (-1)^{\displaystyle{\Tr(\beta(F(a)+b+c\delta^{2^{m-1} +2^{2m-1}}))}}\mathcal{W}_G(\gamma^{-1}v),
\end{align*}
where $G : X \mapsto X^{2^m+1}$ and $\gamma = (1+c) \Trmn(\beta)$. Also,
\allowdisplaybreaks
\begin{align*}
\Tr(\gamma^{-1}v) = & \Tr\left(\frac{(1+c)(\beta +\beta^{2^m}) +\delta^{2^m}(1+c)^2(\beta^2 + \beta^{2^{2m+1}})}{\gamma}\right)\\
  &+ \Tr\left(\frac{(\delta^{2^{2m}}+\delta)(1+c)^2\beta^{2^{m+1}}+(a^{2^m}+a^{2^{2m}})\Trmn(\beta^2)}{\gamma}\right)
  = 0.
\end{align*}
By using the similar arguments as in Theorem~\ref{T1}, one can show that the Walsh coefficient of $\Tr(X^{2^m+1})$ at $\gamma^{-1}v$ is $0$, and hence the proof of this case is done.

\noindent
\textbf{Case 2.}  Let $c\in\F_{2^n} \setminus \F_{2^m}$ and $\delta \in \Gamma_0$. Then $_c\Delta_F(a,b)$ is given by
 \allowdisplaybreaks
\begin{align*}
_c\Delta_F(a,b) =& \frac{1}{2^n}  \sum_{\beta \in \F_{2^n}}(-1)^{\displaystyle{\Tr(\beta(F(a)+b+c\delta^{2^{m-1} +2^{2m-1}}))}}  \\
 & \sum_{X \in \F_{2^n}} (-1)^{\displaystyle{\Tr(\beta^2(1+c)^2 ((\Trmn( X^{2^m+1})+X^{2^{2m+1}}+(\delta^{2^m}+\delta^{2^{2m}})X^{2^{2m}}})} \\
 & \quad (-1)^{\displaystyle{\Tr((\beta(1+c))^2(\delta^{2^{2m}} X^{2^m}+X^2+\delta^{2^m}X)+\beta^2 \Trmn((a^{2^{2m}} +a^{2^m})X))}} \\
 = & \frac{1}{2^n}  \sum_{\beta \in \F_{2^n}}(-1)^{\displaystyle{\Tr(\beta(F(a)+b+c\delta^{2^{m-1} +2^{2m-1}})}}  \sum_{X \in \F_{2^n}} (-1)^{\displaystyle{\Tr(uX^{2^m+1}+vX)}},
\end{align*}
where $u = \Trmn(\beta^2(1+c)^2 )$ and $v= \beta(1+c)+ (\beta(1+c))^{2^m}+ \delta^{2^m}(\Trmn(\beta(1+c))^2)+(a^{2^m}+a^{2^{2m}})\Trmn(\beta^2)$. Further, splitting the sum for $\Trmn((1+c)\beta)=0$ and $\Trmn((1+c)\beta)\neq0$, we have $S_0$ and $S_1$, defined below,
 \allowdisplaybreaks
\begin{align*}
 S_0  = & \sum_{\substack{\beta \in \F_{2^n} \\ \Trmn((1+c)\beta)=0}} (-1)^{\displaystyle{\Tr(\beta(F(a)+b+c\delta^{2^{m-1} +2^{2m-1}}))}}\\
 & \sum_{X \in \F_{2^n}} (-1)^{\displaystyle{\Tr(\beta(1+c)+(\beta(1+c))^{2^m}+(a^{2^m}+a^{2^{2m}})\Trmn(\beta^2))X)}}\\
  = &\sum_{\substack{\beta \in \F_{2^n} \\ \Trmn((1+c)\beta)=0}} (-1)^{\displaystyle{\Tr(\beta(F(a)+b+c\delta^{2^{m-1} +2^{2m-1}}))}}\\
  & \sum_{X \in \F_{2^n}} (-1)^{\displaystyle{\Tr((((1+c)\beta)^{2^{2m}} +(a^{2^m}+a^{2^{2m}})\Trmn(\beta^2))X)}}.
\end{align*}
Now, for those $a \in \F_{2^n}$ that satisfy $a^{2^{2m}} +a^{2^m}=0$, it is easy to see that $S_0 = 2^n$. Thus, let us assume that $a^{2^{2m}} +a^{2^m}\neq0$.
If $H(\beta):=\dfrac{((1+c)\beta)^{2^{2m}}}{(a^{2^{2m}} +a^{2^m})} + \Trmn(\beta^2) = 0$, then we have,
$$H(\beta)+\Trmn(H(\beta))= \dfrac{((1+c)\beta)^{2^{2m}}}{(a^{2^{2m}} +a^{2^m})} +  \Trmn\left(\dfrac{((1+c)\beta)^{2^{2m}}}{(a^{2^{2m}} +a^{2^m})}\right)=0. $$
Letting $z=\dfrac{((1+c)\beta)^{2^{2m}}}{(a^{2^{2m}} +a^{2^m})}$, we have $\Trmn(z)=z$, i.e., $z^{2^m}=z^{2^{2m}}$. 

Using $\beta = \dfrac{((a^{2^{2m}} +a^{2^m})z)^{2^m}}{1+c}$ and the same technique as in the Case 2 of Theorem~\ref{T1}, we have $z_1=0$ and $z_2 = \left( \Trmn \left (\dfrac{(a^{2^{2m}} +a^{2^m})^{2^{m+1}}}{(1+c)^2} \right)\right)^{-1}$. Equivalently, there are exactly two solutions for $H(\beta)=0$ given by $\beta_1 =0$ and, \\
$\beta_2 = \dfrac{(a^{2^{2m}} +a^{2^m})^{2^m}}{1+c} \left( \Trmn \left (\dfrac{(a^{2^{2m}} +a^{2^m})^{2^{m+1}}}{(1+c)^2}\right)\right)^{-2^m}$. Thus for $a,b \in \F_{2^n}$, together with $a^{2^m}+a^{2^{2m}} \neq 0$, we have 
$$S_0  =  2^n\left(1 + (-1)^{\Tr(\beta_2 (F(a)+b+c\delta^{2^{m-1} +2^{2m-1}}))}\right).$$ 
Observe that if we take $b=c\delta^{2^{m-1} +2^{2m-1}}+F(a)$, then we have  $\Tr(\beta_2(F(a)+b+c)) =0$. Hence, $S_0=2^{n+1}$ for $(a,b) \in \F_{2^{n}} \times \F_{2^n}$, with $a^{2^m}+a^{2^{2m}} \neq 0$ and $b=c+F(a)$. Next, 
 \allowdisplaybreaks
\begin{align*}
 S_1 & = \sum_{\substack{\beta \in \F_{2^n} \\ \Trmn((1+c)\beta)\neq 0}} (-1)^{\displaystyle{\Tr(\beta(F(a)+b+c\delta^{2^{m-1} +2^{2m-1}}))}}\sum_{X \in \F_{2^n}} (-1)^{\displaystyle{\Tr(uX^{2^m+1}+vX)}}\\
 & = \sum_{\substack{\beta \in \F_{2^n} \\ \Trmn((1+c)\beta)\neq 0}} (-1)^{\displaystyle{\Tr(\beta(F(a)+b+c\delta^{2^{m-1} +2^{2m-1}}))}}\sum_{X \in \F_{2^n}} (-1)^{\displaystyle{\Tr(X^{2^m+1}+(\gamma^{-1})vX)}}\\
 & = \sum_{\substack{\beta \in \F_{2^n} \\ \Trmn((1+c)\beta)\neq 0}} (-1)^{\displaystyle{\Tr(\beta(F(a)+b+c\delta^{2^{m-1} +2^{2m-1}}))}}\mathcal{W}_G(\gamma^{-1}v),
\end{align*}
where $G : X \mapsto X^{2^m+1}$ and $\gamma = (1+c) \Trmn(\beta)$. Also, $ \Tr(\gamma^{-1}v) $ equals
$$
\Tr\left(\frac{\beta(1+c)+ (\beta(1+c))^{2^m}+ (\delta+\delta^{2^{2m}})(\Trmn(\beta(1+c))^2)+(a^{2^m}+a^{2^{2m}})\Trmn(\beta^2)}{\gamma}\right)=0.
$$
Hence, $\mathcal{W}_G(\gamma^{-1}v)=0$, thereby proving the claim.

\noindent
\textbf{Case 3.} Let $c\in\F_{2^n} \setminus \F_{2^m}$ and $\delta \in \F_{2^n} \setminus \Gamma_0$.
Then $_c\Delta_F(a,b)$ is given by
 \allowdisplaybreaks
\begin{align*}
_c\Delta_F(a,b) =& \frac{1}{2^n}  \sum_{\beta \in \F_{2^n}}(-1)^{\displaystyle{\Tr(\beta(F(a)+b+c\delta^{2^{m-1} +2^{2m-1}}))}}  \\
 & \sum_{X \in \F_{2^n}} (-1)^{\displaystyle{\Tr(\beta^2(1+c)^2 ((\Trmn( X^{2^m+1})+X^{2^{2m+1}}+(\delta^{2^m}+\delta^{2^{2m}})X^{2^{2m}}})} \\
 & \quad \quad  (-1)^{\displaystyle{\Tr((\beta(1+c))^2(\delta^{2^{2m}} X^{2^m}+X^2+\delta^{2^m}X)+\beta^2 \Trmn((a^{2^{2m}} +a^{2^m})X))}} \\
 = & \frac{1}{2^n}  \sum_{\beta \in \F_{2^n}}(-1)^{\displaystyle{\Tr(\beta(F(a)+b+c\delta^{2^{m-1} +2^{2m-1}})}}  \sum_{X \in \F_{2^n}} (-1)^{\displaystyle{\Tr(uX^{2^m+1}+vX)}},
\end{align*}
where 
 \allowdisplaybreaks
\begin{align*}
 u & = \Trmn(\beta^2(1+c)^2 ) \\
 v & = \beta(1+c)+ (\beta(1+c))^{2^m}+ \delta^{2^m}((1+c)\beta)^2+\delta^{2^m}((1+c)\beta)^{2^{2m+1}} \\
 & \qquad \qquad +(\delta^{2^{2m}}+\delta)((1+c)\beta)^{2^{m+1}}+(a^{2^m}+a^{2^{2m}})\Trmn(\beta^2).
\end{align*}
 Again, dividing the above sum in two parts, $S_0$ and $S_1$ depending upon $\Trmn((1+c)\beta)=0$ and $\Trmn((1+c)\beta)\neq0$, respectively.
 
First, we consider $S_0$ given by the following expression.
\allowdisplaybreaks
\begin{align*}
 S_0  = & \sum_{\substack{\beta \in \F_{2^n} \\ \Trmn((1+c)\beta)=0}} (-1)^{\displaystyle{\Tr(\beta(F(a)+b+c\delta^{2^{m-1} +2^{2m-1}}))}}\\
 & \sum_{X \in \F_{2^n}} (-1)^{\displaystyle{\Tr(\beta(1+c)+(\beta(1+c))^{2^m}+(a^{2^m}+a^{2^{2m}})\Trmn(\beta^2))X)}}\\
  = &\sum_{\substack{\beta \in \F_{2^n} \\ \Trmn((1+c)\beta)=0}} (-1)^{\displaystyle{\Tr(\beta(F(a)+b+c\delta^{2^{m-1} +2^{2m-1}}))}} \\
  & \sum_{X \in \F_{2^n}} (-1)^{\displaystyle{\Tr(((\beta(1+c))^{2^{2m}}+ \delta^{2^m}((1+c)\beta)^2+\delta^{2^m}((1+c)\beta)^{2^{2m+1}})X)}}\\
  &  (-1)^{\displaystyle{\Tr(((\delta^{2^{2m}}+\delta)((1+c)\beta)^{2^{m+1}}+(a^{2^m}+a^{2^{2m}})\Trmn(\beta^2))X)}}.
\end{align*}
Further to compute $S_0$, we need to determine the solutions of the following equation:
$$(\beta(1+c))^{2^{2m}}+ \Trmn(\delta)((1+c)\beta)^{2^{m+1}} +(a^{2^m}+a^{2^{2m}})\Trmn(\beta^2)=0.$$ A similar analysis as in the proof of Theorem~\ref{T1} works here, as well, rendering at most four solutions for the above equation.

Next, we consider $S_1$ whose expression is given below.
 \allowdisplaybreaks
\begin{align*}
 S_1 & = \sum_{\substack{\beta \in \F_{2^n} \\ \Trmn((1+c)\beta)\neq 0}} (-1)^{\displaystyle{\Tr(\beta(F(a)+b+c\delta^{2^{m-1} +2^{2m-1}}))}}\sum_{X \in \F_{2^n}} (-1)^{\displaystyle{\Tr(uX^{2^m+1}+vX)}}\\
 & = \sum_{\substack{\beta \in \F_{2^n} \\ \Trmn((1+c)\beta)\neq 0}} (-1)^{\displaystyle{\Tr(\beta(F(a)+b+c\delta^{2^{m-1} +2^{2m-1}}))}}\sum_{X \in \F_{2^n}} (-1)^{\displaystyle{\Tr(X^{2^m+1}+(\gamma^{-1})vX)}}\\
 & = \sum_{\substack{\beta \in \F_{2^n} \\ \Trmn((1+c)\beta)\neq 0}} (-1)^{\displaystyle{\Tr(\beta(F(a)+b+c\delta^{2^{m-1} +2^{2m-1}}))}}\mathcal{W}_G(\gamma^{-1}v),
\end{align*}
where $G : X \mapsto X^{2^m+1}$ and $\gamma = (1+c) \Trmn(\beta)$. Also, 
\begin{align*}
 \Tr(\gamma^{-1}v) & = \Tr\left(\frac{\beta(1+c)+ (\beta(1+c))^{2^m}+ \delta^{2^m}((1+c)\beta)^2+\delta^{2^m}((1+c)\beta)^{2^{2m+1}}}{\gamma}\right) \\
 &  \qquad \qquad  + \Tr\left(\frac{(\delta^{2^{2m}}+\delta)((1+c)\beta)^{2^{m+1}}+(a^{2^m}+a^{2^{2m}})\Trmn(\beta^2)}{\gamma}\right)=0.
\end{align*}
Clearly, $\mathcal{W}_G(\gamma^{-1}v)=0$, making $S_1=0$.  
\end{proof}
 
 Substituting $i=3$ in part 2 of Lemma~\ref{L00} yields the permutation polynomial $F(X)=(X^{2^m}+X+\delta)^{2^{3m-1}+2^{m-1}}+X$ over $\F_{2^{n}}$ over $\F_{2^{n}}$, where $n=3m, \delta\in\F_{2^n}$ and $2m \not\equiv1 \pmod 3$, and we explore the $c$-differential uniformity of this permutation in the following theorem.
 \begin{thm} \label{T3}
  Let $F(X)=(X^{2^m}+X+\delta)^{2^{3m-1}+2^{m-1}}+X$ over $\F_{2^{n}}$, where $n=3m, \delta\in\F_{2^n}$ and $2m \not\equiv1 \pmod 3$. Let $\Gamma_0 :=\{\delta \in \F_{2^n}: \Trmn (\delta)=0\}$. Then$:$
  \begin{enumerate}
   \item[$(1)$] $F$ is  P$c$N for all $c \in \F_{2^m} \setminus \{1\}$ and for all $\delta\in\F_{2^n};$
  \item[$(2)$] $F$ is AP$c$N for all $c \in \F_{2^n}\setminus\F_{2^m}$ and for all $\delta \in \Gamma_0;$
   \item[$(3)$] $F$ is of $c$-differential uniformity $\leq 4$ for all $c \in \F_{2^n}\setminus\F_{2^m}$ and for all $\delta \in \F_{2^n} \setminus \Gamma_0$.
   \end{enumerate}
  \end{thm}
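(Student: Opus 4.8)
The plan is to run the argument of Theorem~\ref{T2} almost verbatim, since the present $F$ differs only in the exponent $s=2^{3m-1}+2^{m-1}$. The key simplifying observation is that, because $X^{2^{3m}}=X$ on $\F_{2^{3m}}$, one has $2s\equiv 2^m+1\pmod{2^{3m}-1}$, so that
\[
F(X)^2=(X^{2^m}+X+\delta)^{2^m+1}+X^2 .
\]
Since the absolute trace is invariant under squaring, every character argument may be carried out with $\Tr\big((\beta(1+c))^2F(X)^2\big)=\Tr(\beta(1+c)F(X))$, i.e.\ with the friendly Gold-type form $(X^{2^m}+X+\delta)^{2^m+1}$ in place of the half-integer exponents of $F$. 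Moreover, a direct expansion of $F(X+a)+cF(X)$ shows that the $a$-dependent linear part of the $c$-DDT equation coincides with the one in Theorem~\ref{T2}: up to a constant it equals $\Trmn((a^{2^{2m-1}}+a^{2^{3m-1}})X^{2^{m-1}})$, whose square is $\Trmn((a^{2^{2m}}+a)X^{2^m})$. Thus the only structural difference from Theorem~\ref{T2} is a Frobenius twist in the quadratic part, the exponent $2^{2m}+2^m$ there being replaced by $2^m+1$ here.

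First I would substitute $F(X)^2$ into $F(X+a)+cF(X)=b$, rewrite ${}_c\Delta_F(a,b)$ via Equation~\eqref{ddtw} as a double sum over $\beta,X$, and fold the three quadratic monomials $X^{2^m+1},X^{2^{2m}+1},X^{2^{2m}+2^m}$ back to $X^{2^m+1}$ by Frobenius. This produces, exactly as before, an inner sum equal to a Walsh coefficient $\mathcal{W}_G(v)$ of $G\colon X\mapsto uX^{2^m+1}$ with $u=\Trmn\big((\beta(1+c))^2\big)\in\F_{2^m}$ and a linearized $v=v(\beta)$; a routine computation gives $\Tr(v)=0$. I then split into the three regimes for $(c,\delta)$ and, inside each, partition the $\beta$-sum according to whether $\Trmn((1+c)\beta)=0$ or $\neq 0$, giving $S_0$ and $S_1$. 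In every case the sum $S_1$ (where $u\neq 0$) vanishes by Lemma~\ref{lemma1S1}, because $u\in\F_{2^m}^{\ast}$ and $\Tr(v)=0$; so all the content is in $S_0$.

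For $S_0$ the inner sum over $X$ is $2^n$ times the indicator that the coefficient of $X$ is zero, so everything reduces to counting $\beta$ with $\Trmn((1+c)\beta)=0$ that annihilate that coefficient. In Case~1 ($c\in\F_{2^m}\setminus\{1\}$) the coefficient is a linearized polynomial in $\beta$ whose only admissible root is $\beta=0$; here the hypothesis $2m\not\equiv 1\pmod 3$ plays the role that $m\not\equiv 1\pmod 3$ played in Theorem~\ref{T2}, guaranteeing that the relevant monomial map (the Frobenius-twisted analogue of $\beta\mapsto\beta^{2^{2m}}+\Trmn(\delta)(1+c)\beta^{2^{m+1}}$) is a permutation, whence $S_0=2^n$ and $F$ is P$c$N. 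In Case~2 ($\delta\in\Gamma_0$, i.e.\ $\Trmn(\delta)=0$) the coefficient equation $H(\beta)=0$ reduces, by the substitution $z$ used in Theorem~\ref{T2}, to an equation of the form $z+\Trmn(\,\cdot\,)z^2=0$ with at most the two roots $\beta_1=0$ and a single $\beta_2$; one has $S_0\le 2^{n+1}$ always, and $S_0=2^{n+1}$ for the choice $b=c\delta^{2^{m-1}+2^{2m-1}}+F(a)$, so ${}_c\Delta_F=2$ and $F$ is AP$c$N.

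The genuinely laborious step, and the one I expect to be the main obstacle, is Case~3 ($c\notin\F_{2^m}$, $\delta\notin\Gamma_0$), where the coefficient equation becomes a three-term equation of the shape
\[
(\beta(1+c))^{2^{2m}}+\Trmn(\delta)((1+c)\beta)^{2^{m+1}}+(a^{2^m}+a^{2^{2m}})\Trmn(\beta^2)=0
\]
(with the appropriate Frobenius exponents). One must show it has at most four solutions $\beta$, which is the analogue of Lemma~\ref{lemma2S1}. I would reproduce that lemma's multivariate argument: express the unknown in the $\F_{2^m}$-basis $\{1,A,A^2\}$ of $\F_{2^{3m}}$, with $A=a^{2^m}+a^{2^{2m}}$ and minimal polynomial $A^3+\gamma_2A^2+\gamma_1A+\gamma_0$, reduce the single equation to three coordinate equations over $\F_{2^m}$, and eliminate two coordinates to land on a cubic in the third; the cubic contributes at most three solutions and, together with the degenerate branch, yields the bound four, so $S_0\le 2^{n+2}$ and ${}_c\Delta_F\le 4$. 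The delicate point is precisely to verify that the Frobenius shift coming from $2^{3m-1}$ (rather than $2^{2m-1}$) does not raise the degree of these coordinate equations beyond what the basis reduction can absorb—this is exactly where $2m\not\equiv 1\pmod 3$ is needed—after which the bound follows as in Theorems~\ref{T1} and~\ref{T2}.
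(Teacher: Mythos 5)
Your proposal is correct and takes essentially the same route as the paper, whose own proof of this theorem is just the one-line remark that it proceeds as in Theorem~\ref{T2}. The modifications you identify---that $2s\equiv 2^m+1\pmod{2^{3m}-1}$ so squaring gives the Gold form directly, that the $a$-dependent linear part $\Trmn((a^{2^{2m-1}}+a^{2^{3m-1}})X^{2^{m-1}})$ is unchanged, and that $2m\not\equiv 1\pmod 3$ replaces $m\not\equiv 1\pmod 3$ in the permutation/counting steps---are exactly what carrying out that adaptation requires.
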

\begin{proof}
  The proof proceeds in a similar manner as the one of Theorem~\ref{T2}. 
\end{proof}

\section{Permutations over $\F_{p^n}$ with low $c$-differential uniformity}\label{S4}
In the preceding section, we studied the $c$-differential uniformity of some classes of permutation polynomials over fields of even characteristic. However, in this section, we discuss the $c$-differential uniformity of permutations over fields of odd characteristic. In the first theorem of this section, we consider the $c$-differential uniformity of polynomial $F(X)=(X^{3^m}-X+\delta)^{3^{2m-1}+2\cdot3^{m-1}}+X$, which is a permutation, as stated in Lemma~\ref{L02}, over $\F_{3^{n}}$, where $\delta \in \F_{3^n}$ and $n=2m$. 
\begin{thm} \label{T11}
  Let $F(X)=(X^{3^m}-X+\delta)^{3^{2m-1}+2\cdot3^{m-1}}+X$ over $\F_{3^{n}}$, where $n=2m$ and $\delta\in\F_{3^n}$. Let $\Gamma_0 :=\{\delta \in \F_{2^n}: \Trnn (\delta)=0\}$. Then$:$
  \begin{enumerate}
   \item[$(1)$] $F$ is  P$c$N for all $c \in \F_{3^m} \setminus \{1\}$ and for all $\delta\in\F_{3^n};$
   \item[$(2)$] $F$ is P$c$N for all $c \in \F_{3^n}\setminus\F_{3^m}$ and for all $\delta \in \Gamma_0$. Moreover, it is of $c$-differential uniformity $3$ for all $c \in \F_{3^n}\setminus\F_{3^m}$ and for all $\delta \in \F_{3^n} \setminus \Gamma_0$.    
 \end{enumerate}
  \end{thm}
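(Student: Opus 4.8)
The plan is to first replace the high-degree power by a tractable closed form, then reduce the $c$-differential equation to a single cubic whose root count is governed by a squareness condition that cleanly separates $c\in\F_{3^m}$ from $c\notin\F_{3^m}$. Write $Y=X^{3^m}-X+\delta$ and $\tau=\Trnn(\delta)\in\F_{3^m}$. Since $\Trnn(X^{3^m}-X)=0$, one has $Y^{3^m}=\tau-Y$, and as the exponent factors as $3^{2m-1}+2\cdot3^{m-1}=3^{m-1}(3^m+2)$, we get $Y^{3^m+2}=(\tau-Y)Y^2=\tau Y^2-Y^3$. Raising to the $3^{m-1}$-th power gives $Y^s=\tau' Z^2-Y^{3^m}=\tau' Z^2-\tau+Y$, where $\tau'=\tau^{3^{m-1}}$ and $Z=Y^{3^{m-1}}=X^{3^{2m-1}}-X^{3^{m-1}}+\delta^{3^{m-1}}$. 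Hence $F(X)=\tau' Z^2+X^{3^m}+(\delta-\tau)$, and the relation $Z^3=Y^{3^m}=\tau-Y$ gives the useful substitution $X^{3^m}=\tau-\delta+X-Z^3$.

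The case $\delta\in\Gamma_0$ is then immediate: here $\tau=\tau'=0$, so $F(X)=X^{3^m}+\delta$ is affine and $F(X+a)-cF(X)=(1-c)X^{3^m}+\text{const}$ is a bijection for every $c\neq1$, giving P$c$N. This settles the first assertion of $(2)$ and the $\delta\in\Gamma_0$ part of $(1)$. For $\delta\notin\Gamma_0$ I would expand $F(X+a)-cF(X)=\tau'(1-c)Z^2-\tau' BZ+(1-c)X^{3^m}+C$, with $B=a^{3^{2m-1}}-a^{3^{m-1}}$ and $C$ constant, set it equal to $b$, and use $X^{3^m}=\tau-\delta+X-Z^3$ to solve for $X$ as a cubic $X=G(Z):=Z^3-\tau' Z^2+pZ+q$, where $p=\tau' B/(1-c)$ and $q$ absorbs $b$. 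Substituting $X=G(Z)$ into the defining relation $Z^{3^{m+1}}=X^{3^m}-X+\delta$ makes the top-degree terms cancel and yields a single polynomial equation in $Z$ alone; one checks that $Z\mapsto G(Z)$ is a bijection from its solution set onto the solution set of the $c$-differential equation, so it suffices to count the $Z$.

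The heart of the argument is this count. Setting $P=Z^{3^m}$ and applying the nontrivial automorphism of $\F_{3^n}/\F_{3^m}$ produces a conjugate equation; adding the two forces $\Trnn(Z)=\mu$ with $\mu^3=\tau$, i.e.\ $\mu=\tau'$ independently of $b$, while subtracting them yields a depressed cubic $D^3+eD+g=0$ in $D=Z-Z^{3^m}$, where $e=\tau'\mu+\Trnn(p)=\tau'^2+\Trnn(p)$. Now $D$ lies on the line $\ker\Trnn=\F_{3^m}\eta$, with $\eta^{3^m}=-\eta$ and $\nu:=\eta^2$ a \emph{non-square} in $\F_{3^m}$ (otherwise $\eta\in\F_{3^m}$); writing $D=d\eta$ turns the cubic into $\nu d^3+ed+g_1=0$ over $\F_{3^m}$. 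Since $d\mapsto\nu d^3+ed$ is $\F_3$-linear, the number of roots is $0$, $1$, or $3$, equal to the kernel size whenever the equation is solvable; in particular $_c\Delta_F\le3$ throughout.

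Finally I would read off the dichotomy from $e$. If $c\in\F_{3^m}$ then $\Trnn(p)=0$, so $e=\tau'^2$ is fixed and $-e/\nu=-\tau'^2/\nu$ is a non-square precisely because $m$ is even (so $-1$ is a square in $\F_{3^m}$) while $\nu$ is a non-square; hence the kernel is $\{0\}$, the cubic has exactly one root for every $b$, and $F$ is P$c$N, proving $(1)$. If instead $c\notin\F_{3^m}$ then $\Trnn(p)=\tau' B\,(c-c^{3^m})/\bigl((1-c)(1-c^{3^m})\bigr)$ sweeps all of $\F_{3^m}$ as $a$ varies, so $e$ can be chosen with $-e/\nu$ a nonzero square; then the kernel has size $3$, and since $g_1$ ranges over $\F_{3^m}$ as $b$ varies one can also guarantee solvability, realizing three solutions and hence $c$-differential uniformity exactly $3$, proving $(2)$. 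The hard part is exactly this squareness computation: everything hinges on showing that $-e/\nu$ is a non-square if and only if $c\in\F_{3^m}$, and it is here that the hypothesis $m$ even (inherited from Lemma~\ref{L02}) is indispensable. Alternatively, the same counts can be packaged through the additive-character formula~\eqref{ddtw} together with the quadratic Weil-sum evaluation of Lemma~\ref{walsh}, exactly as in the proofs of Theorems~\ref{T1} and~\ref{T2}.
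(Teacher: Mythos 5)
Your argument is correct, and it is a genuinely different route from the paper's. The paper expands $F$ into a quadratic form, writes $_c\Delta_F(a,b)$ through the character-sum formula~\eqref{ddtw}, and splits the resulting sum into a linear part $S_0$ (counting roots of a linearized equation in $\beta$) and a quadratic part $S_1$ killed via the Walsh-transform evaluation of Lemma~\ref{walsh}. You instead exploit the identity $Y^{3^m}=\Trnn(\delta)-Y$ for $Y=X^{3^m}-X+\delta$ to get the closed form $F(X)=\tau' Z^2+X^{3^m}+(\delta-\tau)$, solve the differential equation for $X$ as a cubic $G(Z)$, and reduce the root count to that of the linearized cubic $\nu d^3+ed+g_1$ over $\F_{3^m}$, with $e=\tau'^2+\Trnn(p)$; the dichotomy P$c$N versus uniformity $3$ is then exactly the dichotomy ``$-e/\nu$ is a non-square for all $a$'' versus ``$-e/\nu$ can be made a nonzero square.'' I checked the key identities ($\Trnn(Z)=\tau'$ on the solution set, the bijection $X\leftrightarrow Z$, $\Trnn(p)=0$ iff $c\in\F_{3^m}$ while $\Trnn(p)$ sweeps $\F_{3^m}$ otherwise, and $g_1$ sweeping $\F_{3^m}$ with $b$) and they hold. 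What your approach buys: it is elementary, yields exact counts rather than only upper bounds, and makes completely explicit where the hypothesis that $m$ is even enters, namely through $-1$ being a square in $\F_{3^m}$ so that $-\tau'^2/\nu$ is a non-square. That hypothesis is missing from the theorem statement but is genuinely needed: it is inherited from Lemma~\ref{L02}, and the paper uses it tacitly as well, since its claim that the nonzero roots $\pm\beta_1$ of $(1-c)^2\Trnn(\delta)^2\beta^3+\beta=0$ violate $\Trnn(\beta_1)=0$ is precisely the assertion that $\sqrt{-1}\in\F_{3^m}$ and fails for $m$ odd. What the paper's approach buys is uniformity of method across Theorems~\ref{T1}--\ref{T12}. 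Minor points to tighten in a final write-up: fix the overall sign in the derivation of $e$ (harmless here, since $-1$ is a square when $m$ is even, but worth recording), and state explicitly that $g$ lies in $\ker\Trnn$ so that the division by $\eta$ producing $g_1\in\F_{3^m}$ is legitimate.
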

\begin{proof} 
Clearly, after simplifying
\begin{align*}
 F(X)= & X^{3^m}+(\delta^{2\cdot{3^{m-1}}}+\delta^{3^{m-1}+3^{2m-1}})(X^{3^{m-1}}-X^{3^{2m-1}})\\
 + & (\delta^{3^{m-1}}+\delta^{3^{2m-1}})(X^{2\cdot3^{m-1}}+X^{2\cdot3^{2m-1}} +X^{3^{m-1}+3^{2m-1}})+\delta^{3^{2m-1}+2\cdot3^{m-1}}.
\end{align*}

We know that for any $(a, b) \in \F_{3^{n}} \times \F_{3^{n}}$,
the $c$-DDT entry $_c\Delta_F(a, b)$ is given by the number of solutions $X \in \F_{3^{n}}$ of the following equation
\begin{equation}\label{eq31}
 F(X+a)-cF(X)=b,
\end{equation}
or, equivalently,
\begin{equation*}
 (1-c)F(X)+ \Trnn((\delta^{3^{m-1}}+\delta^{3^{2m-1}})(a^{3^{2m-1}}-a^{3^{m-1}})X^{3^{m-1}}) + F(a)- \delta^{3^{2m-1}+2\cdot3^{m-1}} = b.
 \end{equation*}
Now, by using Equation~\eqref{ddtw}, the number of solutions $X \in \F_{3^n}$ of the above Equation~\eqref{eq31}, $_c\Delta_F(a,b)$, is given by 
\allowdisplaybreaks
\begin{align*}
 & \displaystyle{\frac{1}{3^n} \sum_{\beta \in \F_{3^n}} \sum_{X \in \F_{3^n}} \omega^{\displaystyle{\Tr(\beta((1-c)F(X)+ \Trnn((\delta^{3^{m-1}}+\delta^{3^{2m-1}})(a^{3^{2m-1}}-a^{3^{m-1}})X^{3^{m-1}})}}} \\
 & \hspace{3cm} \omega^{\displaystyle{\Tr(\beta(F(a)- \delta^{3^{2m-1}+2\cdot3^{m-1}}-b))}},
\end{align*}
where $\omega =e^{2\pi i/3}$; or equivalently,
\allowdisplaybreaks
\begin{align*}
  _c\Delta_F(a,b) & = \dfrac{1}{3^n} \sum_{\beta \in \F_{3^n}} \omega^{\displaystyle{\Tr\left(\beta\left(F(a)-b- \delta^{3^{2m-1}+2\cdot3^{m-1}}\right)\right)}}\\
& \sum_{X \in \F_{3^n}} \omega^{\displaystyle{\Tr\left(\beta((1-c)F(X)+ \Trnn((\delta^{3^{m-1}}+\delta^{3^{2m-1}})(a^{3^{2m-1}}-a^{3^{m-1}})X^{3^{m-1}})\right)}}.
 \end{align*}
Let 
\allowdisplaybreaks
\begin{align*}
 T_0 & = \Tr(\beta(1-c)F(X)), \\
 T_1 & = \Tr(\beta \Trnn((\delta^{3^{m-1}}+\delta^{3^{2m-1}})(a^{3^{2m-1}}-a^{3^{m-1}})X^{3^{m-1}})).
\end{align*}
Then the above equation becomes
\begin{equation}
\label{eq35}
 _c\Delta_F(a,b) = \frac{1}{3^n} \sum_{\beta \in \F_{3^n}} \omega^{\displaystyle{\Tr\left(\beta\left(F(a)-b- \delta^{3^{2m-1}+2\cdot3^{m-1}}\right)\right)}}\sum_{X \in \F_{3^n}} \omega^{\displaystyle{T_0+T_1}}.
\end{equation}

\noindent
\textbf{Case 1.} Let $c\in \F_{3^m}\setminus \{1\}$ and $\delta \in \F_{3^n}$. To compute $T_0$ and $T_1$, we first write
\allowdisplaybreaks
\begin{equation*} 
\begin{split}
T_{1} & = \Tr(\beta \Trnn((\delta^{3^{m-1}}+\delta^{3^{2m-1}})(a^{3^{2m-1}}-a^{3^{m-1}})X^{3^{m-1}})) \\
 & = \Tr\left(\beta(\delta^{3^{m-1}}+\delta^{3^{2m-1}})((a^{3^{2m-1}} -a^{3^{m-1}})X^{3^{m-1}} + (a^{3^{m-1}} - a^{3^{2m-1}})X^{3^{2m-1}})\right) \\
 & = \Tr\left((\delta+\delta^{3^m})(a^{3^m}-a)\Trnn(\beta^3)X\right),
\end{split}
\end{equation*}
and
\allowdisplaybreaks
\begin{align*} 
T_{0} & =  \Tr(\beta(1-c)F(X)) \\
 & = \Tr\left(\beta(1-c)(X^{3^m}+(\delta^{2\cdot{3^{m-1}}}+\delta^{3^{m-1}+3^{2m-1}})(X^{3^{m-1}}-X^{3^{2m-1}})\right.\\
 &\qquad\qquad \left.+ (\delta^{3^{m-1}}+\delta^{3^{2m-1}})(X^{2\cdot3^{m-1}}+X^{2\cdot3^{2m-1}} +X^{3^{m-1}+3^{2m-1}})+ \delta^{3^{2m-1}+2\cdot3^{m-1}})\right)\\
 & = \Tr\left(\beta(1-c)\delta^{3^{2m-1}+2\cdot3^{m-1}}\right)+ \Tr\left(\beta^3(1-c)^3(X^{3^{m+1}}+(\delta^{2\cdot{3^m}}+\delta^{3^m+1})(X^{3^m}-X)\right.\\
 &\qquad\qquad\quad\quad\quad \left.+ (\delta^{3^{m}}+\delta)(X^{2\cdot3^{m}}+X^2 +X^{3^m+1}))\right)\\
 & = \Tr\left(\beta(1-c)\delta^{3^{2m-1}+2\cdot3^{m-1}}\right)+ \Tr\left((\beta(1-c))^{3^m}X+(\delta^2+\delta^{3^m+1})(\beta(1-c))^{3^{m+1}}X \right.\\
 & \left. -(\delta^{2\cdot3^m}+\delta^{3^m+1})(\beta(1-c))^3X+ (\beta(1-c))^3(\delta^{3^{m}}+\delta)(X^{2\cdot3^{m}}+X^2 -2X^{3^m+1})\right)\\
 & = \Tr\left(\beta(1-c)\delta^{3^{2m-1}+2\cdot3^{m-1}}\right)+ \Tr\left((\beta(1-c))^{3^m}X+(\delta^2+\delta^{3^m+1})(\beta(1-c))^{3^{m+1}}X \right.\\
 & \left. -(\delta^{2\cdot3^m}+\delta^{3^m+1})(\beta(1-c))^3X+ (\delta^{3^{m}}+\delta)(\Trnn(\beta(1-c))^3)(X^2-X^{3^m+1})\right).
\end{align*}

Now Equation~\eqref{eq35} reduces to
\begin{equation*}
 _c\Delta_F(a,b) = \frac{1}{3^n} \sum_{\beta \in \F_{3^n}} \omega^{\displaystyle{\Tr(\beta(F(a)-b- c\delta^{3^{2m-1}+2\cdot3^{m-1}})}}\sum_{X \in \F_{3^n}} \omega^{\displaystyle{\Tr(u(X^2-X^{3^m+1})+vX)}},
\end{equation*}
where,
\begin{align*}
 u & = (\delta+\delta^{3^m})(1-c)^3\Trnn(\beta^3) \\
 v & = (\delta^{3^m}+\delta)(a^{3^m}-a)\Trnn(\beta^3)+(1-c)\beta^{3^m}+(\delta^2+\delta^{3^m+1})(1-c)^3 \beta^{3^{m+1}}\\
 & \qquad \qquad  -(\delta^{2\cdot3^m}+\delta^{3^m+1})(\beta(1-c))^3.
\end{align*}

Further, splitting the above sum depending on whether $\Trnn(\beta)$ is~$0$ or not, we get
\allowdisplaybreaks
\begin{align*}
3^n\, _c\Delta_F(a,b) & =  \sum_{\substack{\beta \in \F_{3^n} \\ \Trnn(\beta)=0}} \omega^{\displaystyle{\Tr(\beta(F(a)-b- c\delta^{3^{2m-1}+2\cdot3^{m-1}}))}}\sum_{X \in \F_{3^n}} \omega^{\displaystyle{\Tr\left((1-c)\beta^{3^m}X\right)}} \\
 & \omega^{\displaystyle{\Tr\left(\left((\delta^2+\delta^{3^m+1})(1-c)^3 \beta^{3^{m+1}}-(\delta^{2\cdot3^m}+\delta^{3^m+1})(\beta(1-c))^3\right)X\right)}}  \\
 &   +  \sum_{\substack{\beta \in \F_{3^n} \\ \Trnn(\beta)\neq 0}} \omega^{\displaystyle{\Tr(\beta(F(a)-b- c\delta^{3^{2m-1}+2\cdot3^{m-1}}))}} \\
 &   \qquad \qquad \qquad \qquad \sum_{X \in \F_{3^n}} \omega^{\displaystyle{\Tr(u(X^2-X^{3^m+1})+vX)}} \\
 &=  S_0 + S_1,
\end{align*}
where $S_0,S_1$ are the two inner sums.

Now, to compute $S_0$, we write
\allowdisplaybreaks
\begin{align*}
 S_0 & = \sum_{\substack{\beta \in \F_{3^n} \\ \Trnn(\beta)=0}} \omega^{\displaystyle{\Tr(\beta(F(a)-b- c\delta^{3^{2m-1}+2\cdot3^{m-1}}))}}\\
 & \sum_{X \in \F_{3^n}} \omega^{\displaystyle{\Tr\left(\left((1-c)\beta^{3^m}+(1-c)^3(-\delta^2-2\delta^{3^m+1}-\delta^{2\cdot3^m})\beta^3\right)X\right)}} \\
 & = 3^n + \sum_{\substack{{\beta \in \F_{3^n}^{*}}\\ \Trnn(\beta)=0}} \omega^{\displaystyle{\Tr(\beta(F(a)-b- c\delta^{3^{2m-1}+2\cdot3^{m-1}}))}}\\
 &\qquad\qquad\qquad \sum_{X \in \F_{3^n}} \omega^{\displaystyle{\Tr\left(\left((1-c)\beta^{3^m}-(1-c)^3(\Trnn(\delta))^2\beta^3\right)X\right)}} \\
 & = 3^n.
\end{align*}
 The justification for above equality is  as follows. In order to compute $S_0$, we need to solve the following cubic equation (as $\Trnn(\beta)=0$),
 \begin{equation}\label{eq33}
  (1-c)^2(\Trnn(\delta))^2\beta^3+\beta=0.
 \end{equation}
 If $\Trnn(\delta)=0$, then we have only unique solution, that is $\beta=0$. Let us assume $\Trnn(\delta)\neq0$, then the cubic Equation~\eqref{eq33} has two solutions $\beta_1$ and $-\beta_1$ in $\F_{3^n}^{*}$ because  $\dfrac{-1}{(1-c)^2(\Trnn(\delta))^2}$ is a square in $\F_{3^n}^{*}$, or equivalently $-1$ is a square in $\F_{3^n}^{*}$. But one can easily verify $\beta_1$ does not satisfy the condition $\Trnn(\beta_1) = 0$. Hence $\beta=0$ is only possible solution for Equation ~\eqref{eq33}, making $S_0 = 3^n$. 

Next, we consider the sum $S_1$,
\begin{align*}
 S_1 & = \sum_{\substack{\beta \in \F_{3^n} \\ \Trnn(\beta)\neq 0}} \omega^{\displaystyle{\Tr(\beta(F(a)-b- c\delta^{3^{2m-1}+2\cdot3^{m-1}}))}} \sum_{X \in \F_{3^n}} \omega^{\displaystyle{\Tr\left(u(X^2-X^{3^m+1})+vX\right)}}\\
 & = \sum_{\substack{\beta \in \F_{3^n} \\ \Trnn(\beta)\neq0}} \omega^{\displaystyle{\Tr(\beta(F(a)-b- c\delta^{3^{2m-1}+2\cdot3^{m-1}}))}} \mathcal{W}_G(-v),
 \end{align*}
 where $\mathcal{W}_G(-v)$ is Walsh transform of trace of function $G : X \mapsto u(X^{3^0+1}-X^{3^m+1})$ at $-v$. Now, from Lemma~\ref{walsh}, the absolute square of Walsh transform coefficient of $G$ is given by
 \begin{equation*} \lvert \mathcal{W}_G(-v) \rvert^2 =
  \begin{cases}
   3^{n+\ell} &~\mbox{if}~G(X)+\Tr(vX)\equiv0~\mbox{on Ker}~(L),  \\
    0 &~\mbox{otherwise},
  \end{cases}
 \end{equation*}
 where $\ell$ is dimension of kernel of the linearized polynomial $L(X)=u(X^{3^m}-X)$, and hence Ker~$(L) =  \F_{3^m} $. It is easy to see that for those $\beta$'s for which $v \neq 0$, we have $\mathcal{W}_f(-v)=0$. Now, we shall argue below that those $\beta$'s for which $v=0$ can essentially be ignored. In order to do so we first express $v$ in a slightly different form as follows.
 \begin{align*}
  v & = (\delta^{3^m}+\delta)(a^{3^m}-a)\Trnn(\beta^3)+(1-c)\beta^{3^m}+(\delta^2+\delta^{3^m+1})(1-c)^3 \beta^{3^{m+1}}\\
  & \qquad \qquad -(\delta^{2\cdot3^m}+\delta^{3^m+1})(\beta(1-c))^3, \\
  & = ((\delta^{3^m}+\delta)(a^{3^m}-a)+(\delta^{2}+\delta^{3^m+1})(1-c)^3) \beta^{3^{m+1}}+(1-c)\beta^{3^m} \\
  & \qquad \qquad +((\delta^{3^m}+\delta)(a^{3^m}-a)-(\delta^{2\cdot3^m}+\delta^{3^m+1})(1-c)^3) \beta^3,\\
  & = -A^{3^m}  \beta^{3^{m+1}} + (1-c)\beta^{3^m} + A \beta^3,\\
  & = -(A\beta^3)^{3^m}+(1-c)\beta^{3^m} + A \beta^3,
 \end{align*}
 where $A= (\delta^{3^m}+\delta)(a^{3^m}-a)-(\delta^{2\cdot3^m}+\delta^{3^m+1})(1-c)^3$. If $A=0$, then $v=0$ holds only for $\beta=0$, but then $\Trnn(\beta)=0$ and hence we can ignore this case. Let us now assume that $A \neq 0$, then substituting $Z=A\beta^3$, we have the following expression for $v=0$.
 \begin{equation}\label{eq4}
  v= -Z^{3^m}+Z+\frac{(1-c)}{A^{3^{m-1}}} Z^{3^{m-1}}=0.
 \end{equation}
  Raising the above Equation~\eqref{eq4} to $3^m$ and adding to the original equation, we have $\dfrac{(1-c)}{A^{3^{m-1}}} Z^{3^{m-1}} + \dfrac{(1-c)}{A^{3^{2m-1}}} Z^{3^{2m-1}}=0$, or equivalently, $\dfrac{(1-c)^3}{A^{3^m}} Z^{3^m} + \dfrac{(1-c)^3}{A} {Z}=0$, i.e. $\dfrac{Z^{3^m}}{A^{3^m}}  + \dfrac{Z}{A}=0$. Replacing $Z$ by $A\beta^3$ in the last expression, we have $(\beta^{3^m}+\beta)^3=0$, which can be ignored as $\Trnn(\beta) \neq 0$ in the sum $S_1$. Hence, we have $S_1=0$.

 \noindent
 \textbf{Case 2.} Let $c \in \F_{3^n}\setminus\F_{3^m}$. Now if  $\delta \in \Gamma_0$, then $F(X)=X^{3^m}-\delta^{3^m}$ is clearly P$c$N, as the differential $F(X+a)-cF(X)$ is a permutation over $\F_{3^n}$. Let us now assume $\delta \not \in \Gamma_0$. Then one can easily see that $T_1$ will remain unchanged, while the expression for $T_0$ will change as follows:
 \allowdisplaybreaks
\begin{align*} 
T_{0} & =  \Tr(\beta(1-c)F(X)) \\
 & = \Tr\left(\beta(1-c)(X^{3^m}+(\delta^{2\cdot{3^{m-1}}}+\delta^{3^{m-1}+3^{2m-1}})(X^{3^{m-1}}-X^{3^{2m-1}})\right.\\
 &\qquad\qquad \left.+ (\delta^{3^{m-1}}+\delta^{3^{2m-1}})(X^{2\cdot3^{m-1}}+X^{2\cdot3^{2m-1}} +X^{3^{m-1}+3^{2m-1}})+\delta^{3^{2m-1}+2\cdot3^{m-1}})\right)\\
 & = \Tr\left(\beta(1-c)\delta^{3^{2m-1}+2\cdot3^{m-1}}\right)+ \Tr\left(\beta^3(1-c)^3(X^{3^{m+1}}+(\delta^{2\cdot{3^m}}+\delta^{3^m+1})(X^{3^m}-X)\right.\\
 &\qquad\qquad\quad\quad\quad \left.+ (\delta^{3^{m}}+\delta)(X^{2\cdot3^{m}}+X^2 +X^{3^m+1}))\right)\\
 & = \Tr\left(\beta(1-c)\delta^{3^{2m-1}+2\cdot3^{m-1}}\right)+\Tr((\beta(1-c))^{3^m} X + (\delta^2+\delta^{3^m+1})(\beta(1-c))^{3^{m+1}} X)\\
 & \quad \Tr\left(-(\delta^{2\cdot{3^m}+\delta^{3^m+1}})(\beta(1-c))^3X+(\delta^{3^{m}}+\delta)(\Trnn(\beta(1-c))^3)(X^2-X^{3^m+1})\right).
\end{align*}
 
 Now Equation~\eqref{eq35} reduces to
\begin{equation*}
 _c\Delta_F(a,b) = \frac{1}{3^n} \sum_{\beta \in \F_{3^n}} \omega^{\displaystyle{\Tr(\beta(F(a)-b- c\delta^{3^{2m-1}+2\cdot3^{m-1}})}}\sum_{X \in \F_{3^n}} \omega^{\displaystyle{\Tr(u(X^2-X^{3^m+1})+vX)}},
\end{equation*}
where $u = (\delta^{3^{m}}+\delta)(\Trnn(\beta(1-c))^3) $ and $v= (\delta^{3^m}+\delta)(a^{3^m}-a)\Trnn(\beta^3)+((1-c)\beta)^{3^m}+(\delta^2+\delta^{3^m+1})((1-c) \beta)^{3^{m+1}}-(\delta^{2\cdot3^m}+\delta^{3^m+1})(\beta(1-c))^3.$ 

Depending upon $\Trnn(\beta(1-c))=0$ or $\Trnn(\beta(1-c)) \neq 0$, we define the sums $S_0$ and $S_1$ as follows.
\allowdisplaybreaks
\begin{align*}
 S_0 & = \sum_{\substack{\beta \in \F_{3^n} \\ \Trnn(\beta(1-c))=0}} \omega^{\displaystyle{\Tr(\beta(F(a)-b- c\delta^{3^{2m-1}+2\cdot3^{m-1}}))}}\\
 & \sum_{X \in \F_{3^n}} \omega^{\displaystyle{\Tr\left(((\delta^{3^m}+\delta)(a^{3^m}-a)\Trnn(\beta^3)+((1-c)\beta)^{3^m})X\right)}}\\
 & \qquad \qquad \omega^{\displaystyle{\Tr\left(((\delta^2+\delta^{3^m+1})((1-c) \beta)^{3^{m+1}}-(\delta^{2\cdot3^m}+\delta^{3^m+1})(\beta(1-c))^3)X\right)}} \\
 & = \sum_{\substack{\beta \in \F_{3^n} \\ \Trnn(\beta(1-c))=0}} \omega^{\displaystyle{\Tr(\beta(F(a)-b- c\delta^{3^{2m-1}+2\cdot3^{m-1}}))}}\sum_{X \in \F_{3^n}} \omega^{\displaystyle{\Tr\left(-(1-c)\beta X\right)}}\\
 & \qquad\qquad \omega^{\displaystyle{\Tr\left(((\delta^{3^m}+\delta)(a^{3^m}-a)\Trnn(\beta^3)-(\delta^2+\delta^{2\cdot3^m}-\delta^{3^m+1})((1-c) \beta)^{3})X\right)}}.
\end{align*}

For computing, $S_0$, we need to know solutions of 
\[
(\delta^{3^m}+\delta)(a^{3^m}-a)\Trnn(\beta^3)-((1-c)\beta)-(\delta^2+\delta^{2\cdot3^m}-\delta^{3^m+1})((1-c) \beta)^3=0,
\]
or equivalently,
$$
(\delta^{3^m}+\delta)(a^{3^m}-a)\left(\beta-\frac{\beta(1-c)}{(1-c)^{3^m}}\right)^3-((1-c)\beta)-(\delta^2+\delta^{2\cdot3^m}-\delta^{3^m+1})((1-c) \beta)^3=0,
$$
which is same as $A\beta +B\beta^3=0$, where 
$$
A= -(1-c)
$$
and
$$
B = \frac{(\delta^{3^m}+\delta)(a^{3^m}-a)((1-c)^{3^m}-(1-c))^3-(\delta^{3^m+1}-\delta^2-\delta^{2\cdot3^m})(1-c)^{3^{m+1}+3}}{(1-c)^{3^{m+1}}}.
$$
It is easy to see that except for $\beta=0$, $A\beta +B\beta^3=0$ has a solution $\beta \in \F_{3^n}^{*}$ if $\dfrac{-A}{B}$ is a square (notice that for $a \in \F_{3^m}$, $\dfrac{-A}{B}$ is always a square). If it is a square, then $A\beta +B\beta^3=0$ has three solution in $\F_{3^n}$, namely, $\beta=0, \beta =\beta_1$ and $\beta = -\beta_1$. Hence, $S_0$ further reduces to
\begin{align*}
& 3^n\left(1 + \omega^{\displaystyle{\Tr(\beta_1(F(a)-b- c\delta^{3^{2m-1}+2\cdot3^{m-1}}))}} +  \omega^{\displaystyle{\Tr(-\beta_1(F(a)-b- c\delta^{3^{2m-1}+2\cdot3^{m-1}}))}}\right).
\end{align*}

Clearly, for those pairs of $(a,b)\in\F_{3^n}\times\F_{3^n}$ for which $b=F(a)- c\delta^{3^{2m-1}+2\cdot3^{m-1}}$, we have $S_0=3^{n+1}$; and for the other pairs of $(a,b) \in \F_{3^n} \times \F_{3^n}$, we have $S_0=3^n(1+\omega^{\Tr(\alpha)}+\omega^{\Tr(-\alpha)})$, where $\alpha = \beta_1(F(a)-b- c\delta^{3^{2m-1}+2\cdot3^{m-1}})$. Hence, the maximum value that $S_0=3^n(1+\omega^{\Tr(\alpha)}+\omega^{\Tr(-\alpha)})$ can attain is $3^{n+1}$ as $\Tr(-\alpha)=-\Tr(\alpha)$. This yields that $S_0=3^{n+1}$.

Next, we analyze the sum.
 \begin{align*}
 S_1 & = \sum_{\substack{\beta \in \F_{3^n} \\ \Trnn(\beta(1-c))\neq0}} \omega^{\displaystyle{\Tr(\beta(F(a)-b- c\delta^{3^{2m-1}+2\cdot3^{m-1}}))}} \sum_{X \in \F_{3^n}} \omega^{\displaystyle{\Tr\left(u(X^2-X^{3^m+1})+vX\right)}}\\
 & = \sum_{\substack{\beta \in \F_{3^n} \\ \Trnn(\beta(1-c))\neq0}} \omega^{\displaystyle{\Tr(\beta(F(a)-b- c\delta^{3^{2m-1}+2\cdot3^{m-1}}))}} \mathcal{W}_G(-v),
 \end{align*}
 where $\mathcal{W}_G(-v)$ is Walsh transform of trace of function $G : X \mapsto u(X^{3^0+1}-X^{3^m+1})$ at $-v$. It is clear that for  those $\beta \in \F_{3^n}$, for which $v \neq 0$, we have $\mathcal{W}_G(-v)=0$. By following the similar arguments as in the Case 1 above, one can ignore those $\beta$'s for which $v=0$. This completes the proof. 
\end{proof}

For our next theorem, we need two lemmas, the first of which was shown in~\cite{CM04} (with our notations).
\begin{lem}
\label{lem:CM04}
 Let $f(z)=z^{p^k}-\tilde a\, z-\tilde b$ in $\F_{p^n}$, $g=\gcd(n,k)$, $\ell=n/\gcd(n,k)$ and $\Trnl$ be the relative trace from $\F_{p^n}$ to $\F_{p^\ell}$. For $0\leq i\leq \ell-1$, we define $t_i=\sum_{j=i}^{\ell-2} p^{n(j+1)}$, $\alpha_0=\tilde a,\beta_0=\tilde b$. If $\ell>1$ (here, $k=\ell=2$, so we will not be concerned with $\ell=1$), then, for $1\leq r\le \ell-1$, we set
 \[
 \alpha_r={\tilde a}^{1+p^k+\cdots+p^{kr}} \text{ and } \beta_r=\sum_{i=0}^r {\tilde a}^{s_i} {\tilde b}^{p^{ki}},
 \]
 where $s_i=\sum_{j=i}^{r-1} p^{k(j+1)}$, for $0\leq i\leq r-1$ and $s_r=0$. The trinomial $f$ has no roots in $\F_{p^n}$ if and only if $\alpha_{\ell-1}=1$ and $\beta_{\ell-1}\neq 0$. If  $\alpha_{\ell-1}\neq 1$, then it has a unique root, namely $x=\beta_{\ell-1}/(1-\alpha_{\ell-1})$, and, if $\alpha_{\ell-1}=1,\beta_{\ell-1}=0$, it has $p^\ell$ roots in $\F_{p^n}$ given by $x+\delta\tau$, where $\delta\in\F_{p^\ell}$, $\tau$ is fixed in $\F_{p^n}$ with $\tau^{p^k-1}={\tilde a}$ (that is, $\tau$ is a $(p^k-1)$-root of $\tilde a$), and, for any $e\in\F^*_{p^n}$ with $\Trnl(e)\neq 0$, then
 $\displaystyle x=\frac{1}{\Trnl(e)} \sum_{i=0}^{\ell-1} \left( \sum_{j=0}^i e^{p^{kj}}\right) {\tilde a}^{t_i} {\tilde b}^{p^{ki}}$.
 \end{lem}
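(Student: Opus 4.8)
The plan is to read $f(z)=0$ as the affine semilinear equation $\sigma(z)=\tilde a\,z+\tilde b$, where $\sigma\colon x\mapsto x^{p^{k}}$ is the $k$-th power of the Frobenius map. Since $\gcd(n,k)=g$ and $\ell=n/g$, the map $\sigma$ is an automorphism of $\F_{p^n}$ of order exactly $\ell$ whose fixed field is $\F_{p^{g}}$. First I would iterate: writing $\sigma^{r}(z)=\alpha_{r-1}z+\beta_{r-1}$ and applying $\sigma$ once more, the relation $\sigma(z)=\tilde a z+\tilde b$ gives the recursions $\alpha_{r}=\tilde a\,\alpha_{r-1}^{p^{k}}$ and $\beta_{r}=\tilde b\,\alpha_{r-1}^{p^{k}}+\beta_{r-1}^{p^{k}}$, which unwind to precisely the closed forms for $\alpha_{r},\beta_{r}$ recorded in the statement. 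Because $\sigma^{\ell}=\mathrm{id}$ on $\F_{p^n}$, evaluating at $r=\ell$ produces the single scalar constraint
\[
(1-\alpha_{\ell-1})\,z=\beta_{\ell-1}
\]
satisfied by every root; here $\alpha_{\ell-1}=\prod_{j=0}^{\ell-1}\sigma^{j}(\tilde a)$ is exactly the norm $N(\tilde a)$ of $\tilde a$ into the fixed field.

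Next I would study the linearized polynomial $L(z)=z^{p^{k}}-\tilde a z=\sigma(z)-\tilde a z$, whose kernel governs the fibres of $f$. A Hilbert~90 argument shows that $\ker L\neq\{0\}$ precisely when $\tilde a$ is a $(p^{k}-1)$-th power, say $\tilde a=\tau^{p^{k}-1}$, which is equivalent to $N(\tilde a)=\alpha_{\ell-1}=1$; in that case a direct check gives $\ker L=\tau\,\F_{p^{g}}$, of size $p^{g}$, and otherwise $L$ is a bijection. Since the solution set of $f(z)=0$ is either empty or a coset of $\ker L$, intersecting this with the constraint above settles all three regimes simultaneously. If $\alpha_{\ell-1}\neq1$ then $L$ is bijective, so there is a unique root, necessarily $x=\beta_{\ell-1}/(1-\alpha_{\ell-1})$. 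If $\alpha_{\ell-1}=1$ the constraint becomes $0=\beta_{\ell-1}$, so $f$ has no root when $\beta_{\ell-1}\neq0$ and, when $\beta_{\ell-1}=0$, the solution set is a full coset $x+\tau\,\F_{p^{g}}$, giving $p^{g}$ roots $x+\delta\tau$ with $\delta$ ranging over the fixed field. (In the regime $k=\ell=2$ in which the lemma is applied one has $g=\ell$, so this matches the stated count $p^{\ell}$ and range $\delta\in\F_{p^{\ell}}$.)

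The remaining and most delicate point is to produce an explicit particular root in the degenerate solvable case $\alpha_{\ell-1}=1,\ \beta_{\ell-1}=0$. The idea is to average a ``partial-orbit'' expression against the $\sigma$-trace: for any $e$ with $\Trnl(e)\neq0$ one sets
\[
x=\frac{1}{\Trnl(e)}\sum_{i=0}^{\ell-1}\Bigl(\sum_{j=0}^{i}e^{p^{kj}}\Bigr)\tilde a^{\,t_i}\,\tilde b^{\,p^{ki}},
\]
and verifies directly that $x^{p^{k}}-\tilde a x-\tilde b=0$. I expect this verification to be the main obstacle: one applies $\sigma$ to $x$, shifts the inner index $j\mapsto j+1$, and collapses the double sum by telescoping, the exponents $t_i$ (and the auxiliary $s_i$ appearing in $\beta_r$) being designed so that successive terms coincide after multiplication by $\tilde a$. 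The boundary terms left over from the telescoping recombine into a multiple of $\beta_{\ell-1}$, which vanishes by hypothesis, while the denominator $\Trnl(e)$ is precisely what absorbs the extra copy of the $\sigma$-trace created by the index shift. The conceptual content (the iteration and the Hilbert~90 input) is routine once set up; the bulk of the labour is this bookkeeping of nested Frobenius sums.
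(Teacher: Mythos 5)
First, a point of reference: the paper offers no proof of this lemma at all --- it is quoted (with citation) from Coulter and Henderson \cite{CM04} --- so there is no internal argument to compare yours against, and your attempt has to be judged on its own. Your reconstruction follows exactly the standard route of the original: read $f(z)=0$ as $\sigma(z)=\tilde a z+\tilde b$ with $\sigma(x)=x^{p^k}$, iterate to get $\alpha_r=\tilde a\,\alpha_{r-1}^{p^k}$ and $\beta_r=\tilde b\,\alpha_{r-1}^{p^k}+\beta_{r-1}^{p^k}$ (which do unwind to the stated closed forms), use $\sigma^\ell=\mathrm{id}$ to extract the scalar constraint $(1-\alpha_{\ell-1})z=\beta_{\ell-1}$, and control the fibres through $\ker L$ for $L(z)=z^{p^k}-\tilde a z$ via Hilbert 90. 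Your observation that the root count in the degenerate case is $p^{g}$ with $\delta$ ranging over the fixed field $\F_{p^{g}}$ of $\sigma$ is correct and matches \cite{CM04}; the occurrences of $p^{\ell}$, $\F_{p^{\ell}}$ and $\Trnl$ in the statement as transcribed here (and the exponent $p^{n(j+1)}$ in $t_i$, which should be $p^{k(j+1)}$) are slips that happen to be harmless, since the lemma is only ever invoked in this paper through the case $\alpha_{\ell-1}\neq 1$.

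The one genuine gap is at the point you yourself flag: in the case $\alpha_{\ell-1}=1$, $\beta_{\ell-1}=0$ you assert that the solution set is a full coset of $\ker L$, but the existence of at least one root is never established --- it is deferred to the explicit averaged formula for $x$, whose verification you only outline. Without that step the trichotomy is incomplete: a priori $f$ could still have no roots when $\beta_{\ell-1}=0$. You can close this cheaply without the telescoping computation. Writing $\tilde a=\sigma(\tau)/\tau$, one has $L(z)=\sigma(\tau)\bigl(\sigma(z/\tau)-z/\tau\bigr)$, so by additive Hilbert 90 the element $\tilde b$ lies in the image of $L$ if and only if the $\sigma$-trace of $\tilde b/\sigma(\tau)$ vanishes; since $\tilde a^{\,s_i}=\tau/\sigma^{i+1}(\tau)$ when $\alpha_{\ell-1}=1$, one computes $\beta_{\ell-1}=\tau\sum_{i=0}^{\ell-1}\sigma^{i}\bigl(\tilde b/\sigma(\tau)\bigr)$, so that trace condition is exactly $\beta_{\ell-1}=0$. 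The explicit formula for the particular root $x$ then only needs to be checked if one actually wants it (this paper never uses it), and your telescoping outline is the right way to do so, but as written it is a plan rather than a proof.
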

\begin{rem}
 We can easily simplify   the above parameters via the sum of the geometric sequence, and get
 \[
 s_i=\frac{p^{k(r+1)}-p^{k(i+1)}}{p^k-1}\ (\text{for }i<r),\quad \alpha_r={\tilde a}^{\frac{p^{k(r+1)}-1}{p^k-1}}.
 \]
 \end{rem}
 \begin{lem}
 \label{lem:AB}
 Let $c\in\F_{p^n}\setminus \F_{p^m}$, $n=2m$. With the following notations
 \begin{align*}
 A & = \left(1-c+(a^{p^m}-a)^p\left(1-\dfrac{1-c}{(1-c)^{p^m}}\right)\right)^p,\\
 B & =(a^{p^m}-a)\left(1-\dfrac{(1-c)}{(1-c)^{p^m}}\right),
\end{align*}
then there exists $a\in\F_{p^n}$ such that $A+Bd^{p-1}=0$, for some $d\in\F_{p^n}$.
 \end{lem}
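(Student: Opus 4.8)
The plan is to exploit the two available degrees of freedom---$a$ (equivalently $w:=a^{p^m}-a$) and $d$---and to reduce the whole thing to a single field equation. Write $t:=1-c$; since $c\in\F_{p^n}\setminus\F_{p^m}$ we have $t\neq 0$ and $t^{p^m}\neq t$. Recall that the $\F_{p^m}$-linear map $a\mapsto a^{p^m}-a$ has kernel $\F_{p^m}$ and image exactly $\{x\in\F_{p^n}:\Trnn(x)=0\}$ (additive Hilbert~90 for $\F_{p^n}/\F_{p^m}$). Setting $\lambda:=1-\dfrac{t}{t^{p^m}}=\dfrac{t^{p^m}-t}{t^{p^m}}\neq 0$, the quantities in play become $A=(t+w^p\lambda)^p$ and $B=w\lambda$, where $w$ ranges over $\ker\Trnn$ as $a$ ranges over $\F_{p^n}$.

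First I would dispose of the statement as written by taking $d=0$, for which $A+B\cdot 0^{p-1}=A$, so it suffices to produce $a$ with $A=0$, i.e.\ with $w^p=V:=\dfrac{-t^{p^m+1}}{t^{p^m}-t}$. The key computation is $V^{p^m}=-V$: raising to the $p^m$-power and using $t^{p^{2m}}=t$ together with $(-1)^{p^m}=-1$ turns the numerator into $-t^{1+p^m}$ and the denominator into $-(t^{p^m}-t)$. Hence $\Trnn(V)=V+V^{p^m}=0$, so $w:=V^{1/p}$ also satisfies $\Trnn(w)=(\Trnn(V))^{1/p}=0$ and therefore lies in the image of $a\mapsto a^{p^m}-a$; any preimage $a$ works, and indeed $t+V\lambda=t-t=0$ confirms $A=0$ (with $B=w\lambda\neq0$). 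This already establishes the lemma.

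For the intended application one really wants $d\neq 0$, i.e.\ $-A/B$ a nonzero $(p-1)$-th power, and here I would invoke a norm criterion. Since $(p-1)\mid(p^n-1)$, an element $y\in\F_{p^n}^*$ is a $(p-1)$-th power if and only if $\mathcal N(y)=1$, where $\mathcal N:=N_{\F_{p^n}/\F_p}$ and $\mathcal N(y)=y^{(p^n-1)/(p-1)}\in\F_p^*$. Because $n=2m$ is even we get $\mathcal N(-1)=1$, and because $\mathcal N$ takes values in $\F_p$ we have $\mathcal N(M^p)=\mathcal N(M)$ for $M:=t+w^p\lambda$. Thus the requirement $-M^p/(w\lambda)\in(\F_{p^n}^*)^{p-1}$ collapses to the single $\F_p$-valued equation
\[
\mathcal N\!\left(t+w^p\lambda\right)=\mathcal N(\lambda)\,\mathcal N(w),
\]
to be solved for $w\in\ker\Trnn\setminus\{0\}$ with $t+w^p\lambda\neq 0$.

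The main obstacle is exactly the solvability of this last equation inside the $m$-dimensional $\F_p$-space $\ker\Trnn$. Both sides are low-degree polynomial maps into the $(p-1)$-element set $\F_p^*$, so I would finish by a counting/character-sum estimate: there are roughly $p^m$ admissible values of $w$, the target value $1$ is hit by a positive proportion of them, and a standard Weil-type bound forces a solution as soon as $p^m$ is not too small, the finitely many remaining pairs $(p,m)$ being checked directly. This is in the same spirit as the authors' earlier remark that such trace/norm conditions can always be arranged---and, if only the upper bound on the $c$-differential uniformity is needed, the refinement to $d\neq0$ can be bypassed entirely.
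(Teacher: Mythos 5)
Your opening observation is correct as far as it goes, but it exposes a looseness in the statement rather than supplying what is actually needed: taking $d=0$ reduces the claim to finding $a$ with $A=0$, and your computation that $V=-t^{p^m+1}/(t^{p^m}-t)$ satisfies $V^{p^m}=-V$, hence $w=V^{1/p}$ lies in $\ker\Trnn=\{a^{p^m}-a:a\in\F_{p^n}\}$, is sound. However, the lemma is invoked in the proof of Theorem~\ref{T12} precisely to guarantee that $-B/A$ is a \emph{nonzero} $(p-1)$-th power, so that $A\beta^p+B\beta=0$ acquires $p$ roots; the degenerate choice $A=0$, $d=0$ is useless there, and the paper's own proof accordingly produces an explicit nonzero $d$. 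So the entire burden of the lemma sits in your second half, and that is where the gap is. Your reduction to the norm condition $\mathcal N(t+w^p\lambda)=\mathcal N(\lambda)\,\mathcal N(w)$ is fine ($y$ is a $(p-1)$-th power iff $N_{\F_{p^n}/\F_p}(y)=1$, and $\mathcal N(-1)=1$ since $n$ is even), but the existence step is only gestured at. The set you must search is the $\F_p$-subspace $\ker\Trnn$, of dimension exactly $m=n/2$: if you detect the norm condition with multiplicative characters of order dividing $p-1$ and complete the sum over the subspace with additive characters, the main term has order $p^{m}/(p-1)$ while the completed Weil bounds contribute errors accumulating to order $p^{m}$ --- the same size as the main term. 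So ``a standard Weil-type bound'' does not close the argument as set up, and ``finitely many remaining pairs $(p,m)$ checked directly'' is not a proof.

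The paper avoids all of this by choosing $d$ explicitly ($d=1-(1-c)^{1-p^m}$ for $m$ odd, and $d=(a^{p^m}-a)\bigl(1-(1-c)^{1-p^m}\bigr)$ for $m$ even), writing $\gamma=1-c$, $a=x\gamma+y$ with $x,y\in\F_{p^m}$, so that $a^{p^m}-a=x\cG$ with $\cG=\gamma^{p^m}-\gamma$, and reducing $A+Bd^{p-1}=0$ to an affine trinomial equation $z^{p^k}-\tilde a z-\tilde b=0$ over $\F_{p^m}$ (with $k=2$ or $k=1$ according to the parity of $m$), whose solvability follows from the Coulter--Henderson criterion (Lemma~\ref{lem:CM04}) because $\alpha_{m-1}=\cG^{p^m-1}\neq 1$ for $\cG\notin\F_{p^m}$. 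To repair your argument you would need either to adopt such a constructive choice of $d$, or to supply a genuinely effective estimate (or algebraic solution) for your norm equation over $\ker\Trnn$; as written, the proposal does not establish the statement in the form in which the paper uses it.
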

 \begin{proof}
 We split our analysis in two cases depending upon the parity of $m$.
 
\noindent
{\bf Case $1$.} Let $m$ be odd.
We will show that one can find $a$ such that $A=-d^{p-1} B$, where $d= 1-(1-c)^{1-p^m} $, that is, 
\[
\left(1-c+(a^{p^m}-a)^p\left(1-\dfrac{1-c}{(1-c)^{p^m}}\right)\right)^p=
-\left(1-\dfrac{1-c}{(1-c)^{p^m}}\right)^{p-1}(a^{p^m}-a)\left(1-\dfrac{(1-c)}{(1-c)^{p^m}}\right).
\]
Since $\F_{p^n}$ is a quadratic extension over $\F_{p^m}$, then any element $\gamma\notin\F_{p^m}$ will generate $\F_{p^n}$ over $\F_{p^m}$. In particular, one can take $\gamma=1-c$. Write ${a=x\gamma+y}$, where $x,y\in\F_{p^m}$. 
First, we note that $a^{p^m}-a=x(\gamma^{p^m}-\gamma)$ and $(\gamma^{p^m}-\gamma)^{p^m}=\gamma-\gamma^{p^m}$, and thus, $\cG^2\in\F_{p^m}$, where $\cG:=\gamma^{p^m}-\gamma\notin\F_{p^m}$.
The above displayed equation then becomes 
\begin{align*}
x^{p^2} \left(1-\gamma^{p(1-p^m)} \right) (\gamma^{p^m}-\gamma)^{p^2}+x  \left(1-\gamma^{p(1-p^m)} \right)(\gamma^{p^m}-\gamma)+\gamma^p=0.
\end{align*}
Dividing by $(1-\gamma^{p(1-p^m)})$,
 and observing that $\frac{\gamma}{1-\gamma^{1-p^m}}=\frac{\gamma^{p^m+1}}{\cG}$, we get 
 \[
 x^{p^2}\cG^{p^2}+x\cG+\left(\frac{\gamma^{p^m+1}}{\cG}\right)^p.
 \]
Multiplying by $\cG^{p^2}$, throughout, we obtain
\begin{equation}
\label{eq:last_lin}
z^{p^2} + z\,\cG^{p^2-1}+\left(\gamma^{p^{m}+1}{\cG^{p-1}}\right)^p=0,
\end{equation}
where $z=x\cG^2\in\F_{p^m}$, since both $x,\cG^2\in\F_{p^m}$, as well as $\gamma^{p^m+1} {\cG^{p-1}}\in\F_{p^m}$, since both $\gamma^{p^m+1}{\cG^{p-1}}\in\F_{p^m}$. Hence, this last equation is over $\F_{p^m}$.
  
We now use Lemma~\ref{lem:CM04}, so we let $k=2,g=1$ (here, we use that $m$ is odd), $\ell=m$, $\tilde a=-\cG^{p^2-1}$, $\tilde b=-\gamma^{p^m+1}{\cG^{p-1}}$.  To show the existence of solutions $z$ for Equation~\eqref{eq:last_lin} (and hence our initial claim) it will be sufficient to show that  
$\alpha_{m-1}=\displaystyle {\tilde a}^{\frac{p^m-1}{p^2-1}}\neq 1$, that is, 
\[
 \left(-\cG^{p^2-1}\right)^{\frac{p^m-1}{p^2-1}}=\cG^{p^m-1}\neq 1,
\]
which is obviously true since $\cG\notin\F_{p^m}$. 
 
 \noindent
{\bf Case $2$.} Let $m$ be even. We will now show that $d$ can be taken to be $(a^{p^m}-a)\left(1-\dfrac{1-c}{(1-c)^{p^m}}\right)$, that is,
  \[
\left(1-c+(a^{p^m}-a)^p\left(1-\dfrac{1-c}{(1-c)^{p^m}}\right)\right)^p=
-\left(1-\dfrac{1-c}{(1-c)^{p^m}}\right)^{p}(a^{p^m}-a)^p,
\]
which is equivalent (by taking $p$-roots) to
\[
 1-c+(a^{p^m}-a)^p\left(1-\dfrac{1-c}{(1-c)^{p^m}}\right) =
-\left(1-\dfrac{1-c}{(1-c)^{p^m}}\right)(a^{p^m}-a).
\]
 As before, take $\gamma=1-c$ of order 2 over $\F_{p^m}$ and write   $a=x+y\gamma$, where $x,y\in\F_{p^m}$, so 
  $a^{p^m}-a=x(\gamma^{p^m}-\gamma)$ and  $\cG^2\in\F_{p^m}$, where $\cG:=\gamma^{p^m}-\gamma\notin\F_{p^m}$. Our equation then becomes
  \[
  x^{p} \left(1-\gamma^{1-p^m} \right) (\gamma^{p^m}-\gamma)^{p}+x  \left(1-\gamma^{1-p^m} \right)(\gamma^{p^m}-\gamma)+\gamma=0.
  \]
  Multiplying by $\cG^p/(1-\gamma^{1-p^m} )$ and using the substitution $z=x\cG^2$, we obtain an equation in~$z$, where all the coefficients are in $\F_{p^m}$, namely
\begin{equation}
\label{eq:last_lin2}
  z^p+\cG^{p-1}\, z+\gamma^{p^m+1}\cG^{p-1}=0.
  \end{equation}
  We next use Lemma~\ref{lem:CM04},  with $k=1,g=1$, $\ell=m$, $\tilde a=-\cG^{p-1}$, $\tilde b=-\gamma^{p^m+1}\cG^{p-1}$. We need that the parameter $\alpha_{\ell-1}=\alpha_{m-1}\neq 1$, that is
  \[
 \alpha_{m-1}= {\tilde a}^{\frac{p^m-1}{p-1}}= 
 \left(\cG^{p-1} \right)^{\frac{p^m-1}{p-1}}=\cG^{p^m-1} \neq 1,
 \]
 which surely is true since $\cG\notin\F_{p^m}$.
 Our claim is shown.
 \end{proof}

In the following theorem, we discuss another class of permutation polynomials given in~Lemma~\ref{L03} over finite fields of odd characteristic~$p$.
\begin{thm} \label{T12}
  Let $F(X)=(X^{p^m}-X+\delta)^{p^{m+1}+1}+X$ over $\F_{p^{n}}$, where $n=2m$ and $\delta\in\F_{p^n}$, where $\Trnn(\delta)=0$ or $\frac{\Trnn(\delta)-1}{\Trnn(\delta)}$ is a $(p-1)$-th power in $\F_{p^m}$. Then
  \begin{enumerate}
   \item[$(1)$] $F$ is  P$c$N for all $c \in \F_{p^m} \setminus \{1\}$;
   \item[$(2)$]  F is of $c$-differential uniformity $p$ for all $c \in \F_{p^n}\setminus\F_{p^m}$.    
 \end{enumerate}
  \end{thm}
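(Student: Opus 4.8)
The plan is to push the whole computation through a single Weil--Walsh reduction and then read off the three regimes from one auxiliary equation. First I would set $W:=X^{p^m}-X$ and expand
\[
F(X)=-W^{p+1}+\delta^{p^{m+1}}W-\delta W^{p}+\delta^{p^{m+1}+1}+X,
\]
using the structural fact $\Trnn(W)=W+W^{p^m}=0$, so $W^{p^m}=-W$ and hence $W^{p+1}\in\F_{p^m}$. Writing the $c$-DDT equation $F(X+a)-cF(X)=b$ and using $W(X+a)=W+A_0$ with $A_0:=a^{p^m}-a$ (which also satisfies $\Trnn(A_0)=0$), the $X$-dependent part collapses to $-(1-c)W^{p+1}+c_2W^{p}+c_3W+(1-c)X$, where $c_2=-A_0-(1-c)\delta$ and $c_3=-A_0^{p}+(1-c)\delta^{p^{m+1}}$. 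Applying~\eqref{ddtw} I split $\Tr\big(\beta\cdot(\text{this})\big)$ into a quadratic part and a linear part $\Tr(vX)$; since $W^{p+1}\in\F_{p^m}$, the quadratic part equals $\Tr\big(u_0X^{p}W\big)=\Tr\big(-u_0X^{p+1}+u_0^{p^{m-1}}X^{p^{m-1}+1}\big)$ with $u_0:=\Trnn\big(\beta(1-c)\big)\in\F_{p^m}$. Thus the inner sum over $X$ is exactly a Walsh coefficient $\mathcal W_f(-v)$ of $f(X)=\Tr(u_0X^{p}W)$, whose associated linearized map is $L(X)=u_0W^{p}-(u_0W)^{p^{m-1}}$; in particular $\F_{p^m}\subseteq\ker L$ and $f\equiv 0$ on $\F_{p^m}$.

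The uniform observation driving both parts is that $\Trnn(v)=u_0$: the $c_2,c_3$ contributions to $v$ have the shape $Z^{p^m}-Z$ and so have zero relative trace, leaving only $\Trnn\big((1-c)\beta\big)=u_0$. Hence I split the $\beta$-sum into $S_0$ ($u_0=0$) and $S_1$ ($u_0\neq0$). For $S_1$, since $f$ vanishes on $\F_{p^m}\subseteq\ker L$ and $\Trnn(v)=u_0\neq0$, the hypothesis of Lemma~\ref{walsh} fails already on $\F_{p^m}$, so $\mathcal W_f(-v)=0$ termwise and $S_1=0$ in every case. For $S_0$ the quadratic part dies and the inner sum is $p^{n}$ or $0$ according to $v=0$; using $\beta^{p^m}=-\rho\beta$ with $\rho:=(1-c)/(1-c)^{p^m}$, the four-term $v$ collapses to $v=P\beta+Q\beta^{p^{m-1}}$, and raising $v=0$ to the $p$-th power and re-using $\beta^{p^m}=-\rho\beta$ yields, for $\beta\neq0$,
\[
\beta^{\,p-1}=\frac{\rho\,Q^{p}}{P^{p}}=:C .
\]
Since $p-1\mid p^{n}-1$, this has at most $p-1$ nonzero roots, so $\#\{u_0=0,\ v=0\}\le p$.

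For part~(1), $c\in\F_{p^m}$ forces $\rho=1$, and a short computation (the $A_0$-terms cancel because $\Trnn(A_0)=0$) gives $P=(1-c)\big(1-\Trnn(\delta)^{p}\big)$ and $Q=\big(-(1-c)\Trnn(\delta)\big)^{p^{m-1}}$. A nonzero $\beta\in\ker\Trnn$ solving $\beta^{p-1}=C$ must also satisfy $\beta^{p^m-1}=-1$; since $C\in\F_{p^m}$ this forces $N_{\F_{p^m}/\F_p}(C)=-1$. Unwinding $C$ and using $N\big((1-c)^{p-1}\big)=1$ together with $1-\Trnn(\delta)^{p^{2}}=(1-\Trnn(\delta))^{p^{2}}$, this consistency condition becomes $N\big(\Trnn(\delta)\big)=-N\big(\Trnn(\delta)-1\big)$. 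But the standing hypothesis on $\delta$ says precisely that $(\Trnn(\delta)-1)/\Trnn(\delta)$ is a $(p-1)$-th power, i.e. $N\big(\Trnn(\delta)\big)=N\big(\Trnn(\delta)-1\big)$ (the case $\Trnn(\delta)=0$ giving $Q=0$ outright). The two are incompatible, so no nonzero solution exists, $S_0=p^{n}$, and ${}_c\Delta_F(a,b)=1$: $F$ is P$c$N.

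For part~(2), $c\notin\F_{p^m}$ makes $\rho\neq1$ and keeps the $A_0$-dependence in $P,Q$; the bound $\#\{u_0=0,\ v=0\}\le p$ already gives $S_0\le p^{n+1}$ and hence ${}_c\Delta_F\le p$. For sharpness, clearing the $p^{m-1}$-power as above turns the solvability of $\beta^{p-1}=C$ into the requirement that $-A/B$ be a $(p-1)$-th power, where $A,B$ are exactly the quantities of Lemma~\ref{lem:AB}; that lemma supplies, for every $c\in\F_{p^n}\setminus\F_{p^m}$, an $a$ realizing $A+Bd^{p-1}=0$, so the full set $\F_p\beta_0$ of $p$ solutions occurs. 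Choosing $b$ with $\Tr\big(\beta_0(b-F(a)+c\delta^{p^{m+1}+1})\big)=0$ makes all $p$ characters add in phase, giving $S_0=p^{n+1}$ and ${}_c\Delta_F(a,b)=p$. I expect the main obstacle to be the bookkeeping that identifies the reduced $S_0$-equation with the precise $A,B$ of Lemma~\ref{lem:AB} (and the parallel norm dichotomy in part~(1)); by contrast, the $S_1=0$ half is painless once $\Trnn(v)=u_0$ and $f|_{\F_{p^m}}\equiv0$ are in hand.
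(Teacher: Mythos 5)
Your proposal is correct and follows the same route as the paper's proof: expand $F$, convert the $c$-DDT count into a character sum via Equation~\eqref{ddtw}, split the $\beta$-sum according to whether $\Trnn(\beta(1-c))$ vanishes, kill the non-degenerate part $S_1$ with Lemma~\ref{walsh}, reduce $S_0$ to counting roots of a linearized equation of the form $A\beta^p+B\beta=0$, and invoke Lemma~\ref{lem:AB} for sharpness in part (2). Two of your local steps are executed differently from (and, in my view, more cleanly than) the paper. First, your identity $\Trnn(v)=\Trnn(\beta(1-c))$ disposes of $S_1$ in one stroke: the quadratic part vanishes identically on $\F_{p^m}\subseteq\ker L$ while $\Tr(vX)$ restricts to the nonzero functional $\Tr_1^m(\Trnn(v)X)$ there, so the hypothesis of Lemma~\ref{walsh} fails; the paper instead argues that $G(X)+\Tr(vX)$, as a polynomial on $\F_{p^m}$, has degree less than $p^m-1$ and so cannot vanish identically, which is the weaker of the two arguments. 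Second, in part (1) you rule out nonzero solutions by the norm incompatibility $N(\Trnn(\delta))=-N(\Trnn(\delta)-1)$ versus the hypothesis $N(\Trnn(\delta)-1)=N(\Trnn(\delta))$, whereas the paper parametrizes the roots as $\beta=\alpha/\bigl(\gamma(1-c)(1-\Trnn(\delta))^{p+1}\bigr)$ with $\alpha\in\F_p^*$ and checks directly that $\Trnn(\beta)\neq 0$; the two arguments are equivalent uses of the same hypothesis. The one place where you are looser than the paper is the sharpness claim in part (2): the reduced $S_0$-equation coincides with the $A,B$ of Lemma~\ref{lem:AB} only in the sub-case $\Trnn(\delta)=0$; when $\Trnn(\delta)\neq 0$ the coefficients acquire extra $\gamma$- and $\delta$-dependent terms and the paper runs a separate (if brief) argument there, so that bookkeeping — which you correctly flag as the main obstacle — does need to be carried out.
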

 \begin{proof}
We can write $F$ as given in the following expression,
\allowdisplaybreaks
\begin{align*}
 F(X)= & X^{p^m+p}+X^{p^{m+1}+1}-X^{p+1}-X^{p^m+p^{m+1}}+\delta X^p-\delta X^{p^{m+1}} +\delta^{p^{m+1}}X^{p^m}\\
 & \qquad \quad + (1-\delta^{p^{m+1}})X + \delta^{p^{m+1}+1},\\
  = & \Trnn(X^{p^{m+1}+1}-X^{p+1})+\delta X^p-\delta X^{p^{m+1}} +\delta^{p^{m+1}}X^{p^m}+ (1-\delta^{p^{m+1}})X + \delta^{p^{m+1}+1}.
\end{align*}

We know that for any $(a, b) \in \F_{p^{n}} \times \F_{p^{n}}$,
the $c$-DDT entry $_c\Delta_F(a, b)$ is given by the number of solutions $X \in \F_{p^{n}}$ of the following equation
\begin{equation}\label{eq41}
 F(X+a)-cF(X)=b,
\end{equation}
or, equivalently,
\begin{equation*}
 (1-c)F(X)+ \Trnn(aX^{p^{m+1}}+(a^{p^{m+1}}-a^p)X-aX^p) + F(a)- \delta^{p^{m+1}+1} = b.
 \end{equation*}
From Equation~\eqref{ddtw}, one can see that the number of solutions $X \in \F_{p^n}$ of the above Equation~\eqref{eq41}, $_c\Delta_F(a,b)$, is given by 
\allowdisplaybreaks
\begin{align*}
 & \displaystyle{\frac{1}{p^n} \sum_{\beta \in \F_{p^n}} \sum_{X \in \F_{p^n}} \omega^{\displaystyle{\Tr\left(\beta((1-c)F(X)+ \Trnn(aX^{p^{m+1}}+(a^{p^{m+1}}-a^p)X-aX^p))\right)}}} \\
 & \hspace{2.5cm} \omega^{\displaystyle{\Tr(\beta(F(a)-\delta^{p^{m+1}+1}-b))}},
\end{align*}
where $\omega =e^{2\pi i/p}$. Equivalently,
\allowdisplaybreaks
\begin{align*}
  _c\Delta_F(a,b) & = \dfrac{1}{p^n} \sum_{\beta \in \F_{p^n}} \omega^{\displaystyle{\Tr\left(\beta\left(F(a)-b-\delta^{p^{m+1}+1}\right)\right)}}\\
& \sum_{X \in \F_{p^n}} \omega^{\displaystyle{\Tr\left(\beta((1-c)F(X)+ \Trnn(aX^{p^{m+1}}+(a^{p^{m+1}}-a^p)X-aX^p)\right)}},\\
& = \frac{1}{p^n} \sum_{\beta \in \F_{p^n}} \omega^{\displaystyle{\Tr\left(\beta\left(F(a)-b-\delta^{p^{m+1}+1}\right)\right)}}\sum_{X \in \F_{p^n}} \omega^{\displaystyle{T_0+T_1}},
 \end{align*}
where $T_0 = \Tr(\beta(1-c)F(X))$ and $T_1 = \Tr(\beta\Trnn(aX^{p^{m+1}}+(a^{p^{m+1}}-a^p)X-aX^p))$. 

\noindent
\textbf{Case 1.} Let $c\in \F_{p^m}\setminus \{1\}$. We first compute $T_0$ and $T_1$ as follows
\allowdisplaybreaks
\begin{equation*} 
\begin{split}
T_{1} & = \Tr(\beta\Trnn(aX^{p^{m+1}}+(a^{p^{m+1}}-a^p)X-aX^p)) \\
 & = \Tr\left(\beta(aX^{p^{m+1}}+a^{p^m}X^{p}+(a^{p^{m+1}}-a^p)X+(a^p-a^{p^{m+1}})X^{p^m}-aX^p-a^{p^m}X^{p^{m+1}})\right) \\
 & = \Tr\left((\beta(a-a^{p^m}))^{p^{m-1}}X+(\beta(a^{p^m}-a))^{p^{2m-1}}X\right) \\
 & \qquad \qquad \qquad + \Tr\left(\beta(a^{p^{m+1}}-a^p)X+(\beta(a^p-a^{p^{m+1}}))^{p^m}X\right)\\
 & = \Tr\left(((a-a^{p^m})^{p^{m-1}}\Trnn(\beta^{p^{m-1}}) +(a^{p^{m+1}}-a^p)\Trnn(\beta))X\right),
\end{split}
\end{equation*}
and
\allowdisplaybreaks
\begin{align*} 
T_{0} & =  \Tr(\beta(1-c)F(X)) \\
 & = \Tr\left(\beta(1-c)(\Trnn(X^{p^{m+1}+1}-X^{p+1})+\delta X^p-\delta X^{p^{m+1}} +\delta^{p^{m+1}}X^{p^m}+\right.\\
 &\qquad\qquad \left. (1-\delta^{p^{m+1}})X + \delta^{p^{m+1}+1})\right)\\
 & = \Tr\left((\beta(1-c))^{p^{m-1}} \Trnn(X^{p^{m-1}+1})-\beta(1-c)\Trnn(X^{p+1})+\right.\\
 &\qquad\qquad \left. \beta(1-c)(\delta X^p-\delta X^{p^{m+1}} +\delta^{p^{m+1}}X^{p^m}+(1-\delta^{p^{m+1}})X + \delta^{p^{m+1}+1})\right)\\
 & = \Tr\left(\beta(1-c)\delta^{p^{m+1}+1}\right)+ \Tr\left(\Trnn(\beta(1-c))^{p^{m-1}}X^{p^{m-1}+1}-\Trnn(\beta(1-c))X^{p+1}\right.\\
 & \left. + (\beta(1-c)\delta)^{p^{2m-1}}X-(\beta(1-c)\delta)^{p^{m-1}}X + (\beta(1-c)\delta^{p^{m+1}})^{p^m}X+\beta(1-c)(1-\delta^{p^{m+1}})X\right).
\end{align*}
 Let us assume that
 \allowdisplaybreaks
 \begin{align*}
  u_1 & = \Trnn(\beta(1-c))^{p^{m-1}}=((1-c)\Trnn(\beta))^{p^{m-1}}\\
  u_2 & = \Trnn(\beta(1-c)) =(1-c)\Trnn(\beta)\\
  v & = (1-c)^{p^{m-1}}(\beta\delta)^{p^{2m-1}}-(\beta(1-c)\delta)^{p^{m-1}}+ (1-c)(\beta\delta^{p^{m+1}})^{p^m}+\beta(1-c)(1-\delta^{p^{m+1}})\\
  & \qquad +(a-a^{p^m})^{p^{m-1}}\Trnn(\beta^{p^{m-1}}) +(a^{p^{m+1}}-a^p)\Trnn(\beta).
 \end{align*}
Hence, we have
\begin{equation*}
 _c\Delta_F(a,b) = \frac{1}{p^n} \sum_{\beta \in \F_{p^n}} \omega^{\displaystyle{\Tr(\beta(F(a)-b-c\delta^{p^{m+1}+1}))}}\sum_{X \in \F_{p^n}} \omega^{\displaystyle{\Tr(u_1 X^{p^{m-1}+1}-u_2X^{p+1}+vX)}}.
\end{equation*}

Further, splitting the above sum depending on whether $\Trnn(\beta)$ is~$0$ or not, we get
\allowdisplaybreaks
\begin{align*}
 p^n\,_c\Delta_F(a,b) & =  \sum_{\substack{\beta \in \F_{p^n} \\ \Trnn(\beta)=0}} \omega^{\displaystyle{\Tr(\beta(F(a)-b-c\delta^{p^{m+1}+1}))}}\\
 &  \sum_{X \in \F_{p^n}} \omega^{\displaystyle{\Tr\left(((\beta(1-c)\delta)^{p^{2m-1}}-(\beta(1-c)\delta)^{p^{m-1}})X\right)}} \\
 & \omega^{\displaystyle{\Tr\left( ((1-c)(\beta\delta^{p^{m+1}})^{p^m}+\beta(1-c)(1-\delta^{p^{m+1}}))X\right)}} \\
 & +  \sum_{\substack{\beta \in \F_{p^n} \\ \Trnn(\beta)\neq 0}} \omega^{\displaystyle{\Tr(\beta(F(a)-b-c\delta^{p^{m+1}+1}))}}\sum_{X \in \F_{p^n}} \omega^{\displaystyle{\Tr(u_1X^{p^{m-1}+1}-u_2X^{p+1}+vX)}}\\
 &=  S_0 + S_1,
\end{align*}
where $S_0,S_1$ are the two inner sums, as given below. First, we write $S_0$ as
\allowdisplaybreaks
\begin{align*}
 S_0 & = \sum_{\substack{\beta \in \F_{p^n} \\ \Trnn(\beta)=0}} \omega^{\displaystyle{\Tr(\beta(F(a)-b-c\delta^{p^{m+1}+1}))}}\\
 & \sum_{X \in \F_{p^n}} \omega^{\displaystyle{\Tr\left((\beta(1-c)\delta)^{p^{2m-1}}-(\beta(1-c)\delta)^{p^{m-1}}+ (1-c)(\beta(1-\delta^{p^{m+1}})+\beta^{p^m}\delta^p))X\right)}} \\
 & = \sum_{\substack{\beta \in \F_{p^n} \\ \Trnn(\beta)=0}} \omega^{\displaystyle{\Tr(\beta(F(a)-b-c\delta^{p^{m+1}+1}))}}\\
 & \sum_{X \in \F_{p^n}} \omega^{\displaystyle{\Tr\left(-(\beta(1-c))^{p^{m-1}}\Trnn(\delta^{p^{m-1}})X+ (1-c)\beta(1-\Trnn(\delta^p))X\right)}} \\
&  = p^n.
\end{align*}
We give the justification for the above equality for both possibilities of $\delta$, i.e., $\Trnn(\delta)=0$ and $\dfrac{\Trnn(\delta)-1}{\Trnn(\delta)}$ is a $(p-1)$-th power in $\F_{p^m}$ as follows. 
First consider the case when $\Trnn(\delta)=0$, then the inner sum in $S_0$ becomes zero, except for $\beta=0$ and hence we have $S_0=p^n$.   

Next, we consider the case when $\Trnn(\delta)\neq0$ and $ \gamma^{p-1}=\dfrac{\Trnn(\delta)-1}{\Trnn(\delta)}$ for some $\gamma \in \F_{p^m}$. Then to show that $S_0=p^n$, it is sufficient to show that the below equation
\begin{equation}\label{eq42}
 -(\beta(1-c))^{p^{m-1}}\Trnn(\delta^{p^{m-1}})+ (1-c)\beta(1-\Trnn(\delta^p))=0
\end{equation}
has only one solution $\beta \in \F_{p^n}$ satisfying $\Trnn(\beta)=0$. Now raising the Equation~\eqref{eq42} to the power $p$, we get 
$$-(\beta(1-c))^{p^m}\Trnn(\delta^{p^m})+ ((1-c)\beta)^p(1-\Trnn(\delta))^{p^2}=0,$$
or equivalently,
$$
\beta(1-c)^{p^m}\Trnn(\delta^{p^m})+ ((1-c)\beta)^p(1-\Trnn(\delta))^{p^2}=0,
$$
which gives us
$$
\beta^{p-1}=\dfrac{- \Trnn(\delta)}{(1-c)^{p-1}(1-\Trnn(\delta))^{p^2}}=\dfrac{1}{\left(\gamma(1-c)(1-\Trnn(\delta))^{p+1}\right)^{p-1}}.
$$
Clearly, from the above equation we have $\beta = \dfrac{\alpha}{\gamma(1-c)(1-\Trnn(\delta))^{p+1}}$ for $\alpha \in \F_{p^n}$ such that $\alpha^{p-1}=1$. One can easily see that
$$
\Trnn(\beta)= \dfrac{\alpha+\alpha^{p^m}}{\gamma(1-c)(1-\Trnn(\delta))^{p+1}} \neq 0.
$$
Hence, the only possible solution of Equation~\eqref{eq42} is $\beta=0$, making $S_0=p^n$.

Next, we consider the sum $S_1$.
\allowdisplaybreaks
\begin{align*}
 S_1 & = \sum_{\substack{\beta \in \F_{p^n} \\ \Trnn(\beta)\neq 0}} \omega^{\displaystyle{\Tr(\beta(F(a)-b-c\delta^{p^{m+1}+1}))}} \sum_{X \in \F_{p^n}} \omega^{\displaystyle{\Tr\left(u_1X^{p^{m-1}+1}-u_2X^{p+1}+vX\right)}}\\
 & = \sum_{\substack{\beta \in \F_{p^n} \\ \Trnn(\beta)\neq0}} \omega^{\displaystyle{\Tr(\beta(F(a)-b-c\delta^{p^{m+1}+1}))}} \mathcal{W}_G(-v),
 \end{align*}
 where $\mathcal{W}_G(-v)$ is Walsh transform of trace of function $G : X \mapsto u_1X^{p^{m-1}+1}-u_2X^{p+1}$ at $-v$. Now, from Lemma~\ref{walsh}, the absolute square of Walsh transform coefficient of $G$ is given by
 \begin{equation*} \lvert \mathcal{W}_G(-v) \rvert^2 =
  \begin{cases}
   p^{n+\ell} &~\mbox{if}~G(X)+\Tr(vX)\equiv0~\mbox{on Ker}~(L)  \\
    0 &~\mbox{otherwise},
  \end{cases}
 \end{equation*}
 where $\ell$ is dimension of kernel of the linearized polynomial 
 $$
 L(X)=u_2(X^{p^m}-X)^p-u_1(X^{p^m}-X)^{p^{m-1}}.
 $$ 
It is easy to see that $\F_{p^m} \subseteq$ Ker$(L) $. Thus, if we can show that $G(X)+\Tr(vX) \neq 0$ for all $X \in \F_{p^m}$, then $S_1=0$. We shall now make efforts to prove that $G(X)+\Tr(vX)$ is not identically zero on $\F_{p^m}$. For $X \in \F_{p^m}$, the polynomial $G(X)+\Tr(vX)$ gets reduced to a polynomial over $\F_{p^m}$ given  by
 $$
 G(X)+\Tr(vX)=  u_1X^{p^{m-1}+1}-u_2X^{p+1} + (v+v^{p^m})X+((v+v^{p^m})X)^p+ \cdots +((v+v^{p^m})X)^{p^{m-1}}.
 $$
If $m=1$, then the above equation becomes  a quadratic equation $G(X)+\Tr(vX)=  (u_1-u_2)X^{2} + (v+v^{p})X$ over $\F_{p}$, which does not completely vanish on $\F_{p}$. For $m \geq 2$, it is easy to observe that the degree of the polynomial $G(X)+\Tr(vX)$ is strictly less than $p^m-1$. As a consequence, $G(X)+\Tr(vX) \neq 0$ for all $X \in \F_{p^m}$.

\noindent
\textbf{Case 2.} Let $c\in \F_{p^n}\setminus \F_{p^m}$. Then we have 
 \begin{equation}
 \label{eq44}
 _c\Delta_F(a,b) = \frac{1}{p^n} \sum_{\beta \in \F_{p^n}} \omega^{\displaystyle{\Tr(\beta(F(a)-b-c\delta^{p^{m+1}+1}))}}\sum_{X \in \F_{p^n}} \omega^{\displaystyle{\Tr(u_1 X^{p^{m-1}+1}-u_2X^{p+1}+vX)}},
\end{equation}
where,
\begin{align*}
  u_1 & = \Trnn(\beta(1-c))^{p^{m-1}}\\
  u_2 & = \Trnn(\beta(1-c))\\
  v & = (\beta(1-c)\delta)^{p^{2m-1}}-(\beta(1-c)\delta)^{p^{m-1}}+ (\beta(1-c)\delta^{p^{m+1}})^{p^m}+\beta(1-c)(1-\delta^{p^{m+1}})\\
  & \qquad +(a-a^{p^m})^{p^{m-1}}\Trnn(\beta^{p^{m-1}}) +(a^{p^{m+1}}-a^p)\Trnn(\beta).
 \end{align*}
 
 Now we analyze the sum in Equation~\eqref{eq44}, by splitting it in two cases depending on whether $\Trnn(\beta(1-c))$ is~$0$ or not, we get
\allowdisplaybreaks
\begin{align*}
p^n\, _c\Delta_F(a,b) & =  \sum_{\substack{\beta \in \F_{p^n} \\ \Trnn(\beta(1-c))=0}} \omega^{\displaystyle{\Tr(\beta(F(a)-b-c\delta^{p^{m+1}+1}))}} \\
 & \quad \sum_{X \in \F_{p^n}} \omega^{\displaystyle{\Tr\left(((\beta(1-c)\Trnn(\delta))^{p^{2m-1}}+ \beta(1-c)(1-\Trnn(\delta^p))X)\right)}} \\
 & \qquad\qquad \omega^{\displaystyle{\Tr\left(((a-a^{p^m})^{p^{m-1}}\Trnn(\beta^{p^{m-1}}) +(a^{p^{m+1}}-a^p)\Trnn(\beta))X\right)}}  \\
 &\qquad   +  \sum_{\substack{\beta \in \F_{p^n} \\ \Trnn(\beta(1-c))\neq 0}} \omega^{\displaystyle{\Tr(\beta(F(a)-b-c\delta^{p^{m+1}+1}))}}   \\
 &   \qquad \qquad \qquad \sum_{X \in \F_{p^n}} \omega^{\displaystyle{\Tr(u_1X^{p^{m-1}+1}-u_2X^{p+1}+vX)}} \\
 &=  S_0 + S_1,
\end{align*}
where $S_0$ and $S_1$ are the two inner sums, corresponding to $\Trnn(\beta(1-c))=0$ and  $\Trnn(\beta(1-c))\neq 0$ not respectively. First, we take $\Trnn(\delta)=0$ and compute $S_0$,
\allowdisplaybreaks
\begin{align*}
 S_0 & = \sum_{\substack{\beta \in \F_{p^n} \\ \Trnn(\beta(1-c))=0}} \omega^{\displaystyle{\Tr(\beta(F(a)-b-c\delta^{p^{m+1}+1}))}}\\
 & \sum_{X \in \F_{p^n}} \omega^{\displaystyle{\Tr\left((\beta(1-c)+(a-a^{p^m})^{p^{m-1}}\Trnn(\beta^{p^{m-1}}) +(a^{p^{m+1}}-a^p)\Trnn(\beta))X\right)}}.
\end{align*}
Now to compute the inner sum in $S_0$, we need to look for solutions $\beta \in \F_{p^n}$ of the equation given below,
 \begin{equation}\label{eq45}
  \beta(1-c)+(a-a^{p^m})^{p^{m-1}}\Trnn(\beta^{p^{m-1}}) +(a^{p^{m+1}}-a^p)\Trnn(\beta)=0.
 \end{equation}
Further substituting $\beta^{p^m} = \dfrac{-(1-c)}{(1-c)^{p^m}}\beta$, we reduced the above equation as
 \[
 \left(1-c+(a^{p^m}-a)^p\left(1-\dfrac{1-c}{(1-c)^{p^m}}\right)\right)\beta+\left((a-a^{p^m})\left(1-\dfrac{(1-c)}{(1-c)^{p^m}}\right)\right)^{p^{m-1}}\beta^{p^{m-1}}=0, 
 \]
and by further raising the above equation to the power $p$, we can write it as
\begin{align*}
 & \left(1-c+(a^{p^m}-a)^p\left(1-\dfrac{1-c}{(1-c)^{p^m}}\right)\right)^p\beta^p \\
 & + \left((a-a^{p^m})\left(1-\dfrac{(1-c)}{(1-c)^{p^m}}\right)\right)^{p^m}\left(\dfrac{-(1-c)}{(1-c)^{p^m}}\right)\beta=0.
\end{align*}
For simplicity, we write the above equation as $A\beta^p+ B\beta=0$, where
\begin{align*}
 A & = \left(1-c+(a^{p^m}-a)^p\left(1-\dfrac{1-c}{(1-c)^{p^m}}\right)\right)^p,\\
 B & =(a^{p^m}-a)\left(1-\dfrac{(1-c)}{(1-c)^{p^m}}\right).
\end{align*}
Notice that the equation $A\beta^p+ B\beta=0$, except for $\beta=0$, has $p-1$ solutions in $\F_{p^n}$ only if $\dfrac{-B}{A}$ is $(p-1)$th power of some element in $\F_{p^n}$, which is true via   Lemma~\ref{lem:AB}.
  
Thus, we have $\beta = \eta d$, where $\eta \in \F_{p^n}$ satisfying $\eta^{p-1}=1$. Hence, $A\beta^p+ B\beta=0$ has $p$ solutions in $\F_{p^n}$ namely, $\beta=0, \beta=\beta_1,\beta=\beta_2, \ldots,\beta=\beta_{p-1}$. With this, we now have
\begin{align*}
 S_0 = p^n\left(1 + \omega^{\Tr(\beta_1(F(a)-b-c\delta^{p^{m+1}+1}))}+ \cdots +\omega^{\Tr(\beta_{p-1}(F(a)-b-c\delta^{p^{m+1}+1}))}\right).
\end{align*}
It is easy to see that for those pairs $(a,b)\in\F_{p^n}\times\F_{p^n}$ for which $b=F(a)-c\delta^{p^{m+1}+1}$, we have $S_0=p^{n+1}$; and for the other pairs  $(a,b) \in \F_{p^n} \times \F_{p^n}$, we have 
$$S_0=p^n\left(1+\omega^{\Tr(\alpha_1)}+\omega^{\Tr((p-1)\alpha_1)}+\cdots +\omega^{\Tr\left(\alpha_{\frac{p-1}{2}}\right)}+\omega^{\Tr\left((p-1)\alpha_{\frac{p-1}{2}}\right)}\right),$$ 
where $\alpha_i = \beta_i(F(a)-b-c\delta^{p^{m+1}+1})$ for $i=1,2,\ldots,\dfrac{p-1}{2}$. Observe that if $\alpha_i \neq \alpha_j$ and $\Tr(\alpha_i)=\Tr(\alpha_j)$, then $\Tr(d(F(a)-b-c\delta^{p^{m+1}+1}))=0$, and hence $\Tr(\eta d(F(a)-b-c\delta^{p^{m+1}+1}))=0$ for all $\eta \in \F_{p^n}$ with $\eta^{p-1}=1$, making $S_0 = p^{n+1}$. Thus, we assume $\Tr(\alpha_i) \neq \Tr(\alpha_j)$ for $i \neq j$. This yields that $S_0=p^n(1+\omega+\omega^2+\omega^3+\cdots+\omega^{p-1})=0.$

We next consider $S_0$ when $\dfrac{\Trnn(\delta)-1}{\Trnn(\delta)}=\gamma^{p-1}$ for $\gamma \in \F_{p^m}$. Observe that $\gamma \neq 1$. Now, $S_0$ is given by
\begin{align*}
 S_0 & = \sum_{\substack{\beta \in \F_{p^n} \\ \Trnn(\beta(1-c))=0}} \omega^{\displaystyle{\Tr(\beta(F(a)-b-c\delta^{p^{m+1}+1}))}}\\
 & \sum_{X \in \F_{p^n}} \omega^{\displaystyle{\Tr\left(\left(\left(\frac{\beta(1-c)}{1-\gamma^{p-1}}\right)^{p^{2m-1}}+ \beta(1-c)\left(1-\frac{1}{1-\gamma^{p-1}}\right)^p\right)X\right)}}\\
 & \hspace{1.15cm}  \omega^{\displaystyle{\Tr\left(\left((a-a^{p^m})^{p^{m-1}}\Trnn(\beta^{p^{m-1}}) -(a-a^{p^m})^p\Trnn(\beta)\right)X\right)}}.
\end{align*}
Using a similar technique as in the case of $\Trnn(\delta)=0$ above, we compute $S_0$ by considering the solutions $\beta \in \F_{p^n}$ of the following equation:
\begin{align*}
 & \left(\frac{\beta(1-c)}{1-\gamma^{p-1}}\right)^{p^{2m-1}}+ \beta(1-c)\left(1-\frac{1}{1-\gamma^{p-1}}\right)^p \\
 & \qquad \qquad \qquad\qquad  +(a-a^{p^m})^{p^{m-1}}\Trnn(\beta^{p^{m-1}}) -(a-a^{p^m})^p\Trnn(\beta)=0. 
\end{align*}
After substituting $\beta^{p^m} = \dfrac{-(1-c)}{(1-c)^{p^m}}\beta = \tilde c \beta$, where $\tilde c = \dfrac{-(1-c)}{(1-c)^{p^m}}$ and raising it to the power $p$, we obtain the following equation,
\begin{align*}
 & \left((1-c)\left(1-\frac{1}{1-\gamma^{p-1}}\right)^p-(a-a^{p^m})^p(1+\tilde c)\right)^p \beta^p\\
 & \qquad \qquad \qquad+ \left({ \dfrac{1-c}{1-\gamma^{p-1}}}+\tilde c(a^{p^m}-a)(1+\tilde c )^{p^m} \right)\beta =0.
\end{align*}
It is easy to see that if $a \in \F_{p^m}$, then the above equation always has $p$ solutions. Otherwise, it will either have exactly one solution $\beta =0$ or $p$ solutions in $\F_{p^n}$. Then using the same arguments as in  the case of $\Trnn(\delta)=0$, we get that either $S_0=p^n$, or $S_0= p^{n+1}$. Next, we have
\allowdisplaybreaks
\begin{align*}
 S_1 & =  \sum_{\substack{\beta \in \F_{p^n} \\ \Trnn(\beta(1-c))\neq 0}} \omega^{\displaystyle{\Tr(\beta(F(a)-b-c\delta^{p^{m+1}+1}))}} \sum_{X \in \F_{p^n}} \omega^{\displaystyle{\Tr(u_1X^{p^{m-1}+1}-u_2X^{p+1}+vX)}}.
\end{align*}
By following similar arguments as given for $S_1$ in the Case 1, we can easily show that $S_1=0$. This completes the proof.
 \end{proof}

\section{Conclusion}\label{S5}

The $c$-differential uniformity as introduced in~\cite{Ellingsen} is a measure of statistical biases in the distribution of differences. 
In this paper we concentrate on a few permutations polynomials and investigate some equations over binary and odd characteristic fields, as they are connected to those permutations being PcN/APcN, or other low differential uniformity. In particular, our work adds to the very few known classes of PcN functions over binary fields. The used methods include discrete Fourier transforms, Weil sums and a very detailed analysis of those equations. We suspect the methods may be of independent interest.

\end{document}